\newtheorem{theorem}{Theorem}[section]
\newtheorem{lemma}{Lemma}[section]
\newtheorem{corollary}{Corollary}[section]
\newtheorem{definition}{Definition}
\newtheorem{remark}{Remark}
\renewcommand\bra[1]{{\langle{#1}|}}
\renewcommand\ket[1]{%
  \@ifnextchar\bra{\k@t{#1}\!}{\k@t{#1}}%
}
\newcommand\k@t[1]{{|{#1}\rangle}}
\title{Fast Zeroth-Order Convex Optimization with \\ Quantum Gradient Methods}
\author{Brandon Augustino\footnote{These authors contributed equally and are listed in alphabetical order} \thanks{brandon.augustino@jpmchase.com}}
\author{Enrico Fontana\protect\footnotemark[1]}
\author{Dylan Herman\protect\footnotemark[1]}
\author{Junhyung Lyle Kim\protect\footnotemark[1]}
\author{\\\vspace{-.25cm}Jacob Watkins}
\author{Shouvanik Chakrabarti\thanks{shouvanik.chakrabarti@jpmchase.com}}
\author{Marco Pistoia}
\affil{Global Technology Applied Research, JPMorganChase, New York, NY 10001, USA}
\date{March 2025}
\begin{document}

\maketitle
\begin{abstract}
We study quantum algorithms based on quantum (sub)gradient estimation using noisy function evaluation oracles, and demonstrate the first dimension-independent query complexities (up to poly-logarithmic factors) for zeroth-order convex optimization in both smooth and nonsmooth settings. Interestingly, only using noisy function evaluation oracles, we match the first-order query complexities of classical gradient descent, thereby exhibiting exponential separation between quantum and classical zeroth-order optimization. We then generalize these algorithms to work in non-Euclidean settings by using quantum (sub)gradient estimation to instantiate mirror descent and its variants, including dual averaging and mirror prox. By leveraging a connection between semidefinite programming and eigenvalue optimization, we use our quantum mirror descent method to give a new quantum algorithm for solving semidefinite programs, linear programs, and zero-sum games. We identify a parameter regime in which our zero-sum games algorithm is faster than any existing classical or quantum approach. 
\end{abstract}

\newpage
\tableofcontents

\newpage

\section{Introduction}
\label{sec:intro}
Convex optimization has long been a central topic of study in computer science, mathematics, operations research, statistics, and engineering due to its large number of scientific and industrial applications. These problems are at the core of many machine learning pipelines, and generalize well studied settings such as linear programming, second-order conic programming, and semidefinite programming, to name a few. A convex optimization problem is of the form
\begin{equation}\label{e:cvx_opt}\tag{OPT}
    \text{Find }\tilde{x} \in \mathcal{X}\text{ such that }f(\tilde{x}) - \min_{x \in \mathcal{X}} f(x) \leq \varepsilon,
\end{equation}
where $\mathcal{X} \subseteq \mathbb{R}^{d}$ is a closed convex set with nonempty interior, $f: \mathbb{R}^{d} \rightarrow \mathbb{R}$ is convex and $G$-Lipschitz on $\Xcal$ with respect to a given norm $\| \cdot\|$ and $\varepsilon > 0$ is an error parameter. We make the standard assumption that \eqref{e:cvx_opt} is \textit{solvable}, i.e., there exists an optimal point $x^{\star} \in \argmin_{x \in \Xcal} f(x)$, and that there exists $R\geq 1$ such that $\sup_{x \in \Xcal} \dist (x, x^{\star}) \leq R$ for a suitable distance metric $\dist (\cdot, \cdot)$.

In practical applications, problems of the form \eqref{e:cvx_opt} are solved in very high dimension $d$, to very high precision $\varepsilon$, or both. Developing algorithms with better scaling in $d$ and $1/\varepsilon$ is a primary goal of optimization theory. Due to the ubiquity and practical importance of convex optimization, there has been a significant effort to develop quantum algorithms providing speedups for these problems.

Research on convex optimization focuses on two different settings. In the \emph{black-box} setting, the constraints defining the feasible region and the objective function are accessed via black-box oracles. The focus then is on developing algorithms that repeatedly query these oracles, with the aim of minimizing the number of queries necessary to solve the problem, which is called the \emph{query complexity}.
In the \emph{white-box} setting, the objective function and constraints are known explicitly in some form, and the focus is on minimizing the time required to perform the optimization. As an example, linear programs are specified by the vector defining the (linear) objective function, while the constraints are specified by a matrix and right-hand-side vector. Aside from linear programs, white-box convex optimization generalizes common settings such as quadratic, second-order conic, and semidefinite programming. Black-box optimization algorithms can be directly applied to white-box optimization, but it is often possible to obtain more efficient methods by leveraging the specific problem structure, which has led to the development of many classical algorithms that are tailored to particular settings.

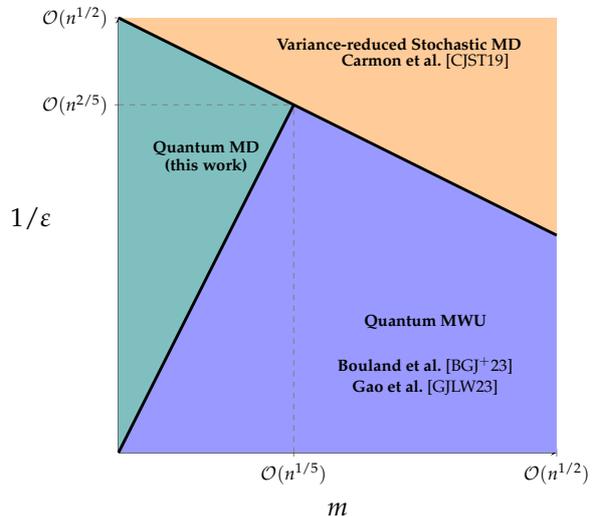
\begin{wrapfigure}{r}{0.5\textwidth}
        \resizebox{0.5\textwidth}{!}{
         \begin{tikzpicture} 
    \begin{axis}[
        width=10cm, height=10cm, 
        xmin=0, xmax=0.5, 
        ymin=0, ymax=0.5,
        axis lines=middle,
        xlabel={$m$},
        ylabel={$1/\varepsilon$},
        label style={font=\Large},
        xtick={0,0.2,0.5},
        xticklabels={
            $\mathcal{O}(1)$,
            $\mathcal{O}(n^{1/5})$, 
            $\mathcal{O}(n^{1/2})$
        },
        ytick={0,0.4,0.5},
        yticklabels={
            $\mathcal{O}(1)$, 
            $\mathcal{O}(n^{2/5})$,
            $\mathcal{O}(n^{1/2})$
        },
        clip=false,
        xlabel style={at={(axis description cs:0.5,-0.1)},anchor=north},
        ylabel style={at={(axis description cs:-.2,0.5)},anchor=south},
        ]

        \fill[blue!40]
            (axis cs:0,0)
            --(axis cs:0.25,0.5)
            --(axis cs:0.5,0.5)
            --(axis cs:0.5,0)
            --cycle;

        \fill[orange!40]
            (axis cs:0,0.5) 
            --(axis cs:0.5,0.5)
            --(axis cs:0.5,0.25)
            --cycle;

        \fill[teal!50]
            (axis cs:0,0) 
            --(axis cs:0,0.5)
            --(axis cs:0.2,0.4)
            --cycle;

    \addplot[black, ultra thick, domain=0:0.2, samples=2]{2*x};
    \addplot[black, ultra thick, domain=0:0.5, samples=2]{(1-x)/2};

    \addplot[dashed, gray] coordinates {(0.2, 0) (0.2, 0.4)};
    \addplot[dashed, gray] coordinates {(0, 0.4) (0.2, 0.4)};

    \node[font=\small] at (axis cs:0.1,0.35) {\textbf{Quantum MD}};
    \node[font=\small] at (axis cs:0.1,0.33) {\textbf{(this work)}};
    \node[font=\small] at (axis cs:0.35,0.15) {\textbf{Quantum MWU}};
    \node[font=\small] at (axis cs:0.35,0.1) {\textbf{Bouland et al.}~\cite{bouland2023quantum}};
    \node[font=\small] at (axis cs:0.35,0.075) {\textbf{Gao et al.}~\cite{gao2024logarithmic}};
    \node[font=\small] at (axis cs:0.35,0.45) {\textbf{Carmon et al.}~\cite{carmon2019variance}};
    \node[font=\small] at (axis cs:0.32,0.47) {\textbf{Variance-reduced Stochastic MD}};
    \end{axis}
\end{tikzpicture}
 
        }
        \caption{\textit{Regimes for speedup in solving zero-sum games in the $(m, 1/\varepsilon)$-plane.} See also Corollary~\ref{cor:zsg_informal}. Algorithms: Mirror Descent (MD), Matrix Multiplicative Weights (MWU).
        }
        \label{fig:zerosumgame}
\end{wrapfigure}

In this paper, we present improved algorithms for convex optimization in both the black-box and white-box settings. We begin with an investigation of black-box optimization via (sub)gradient methods, and show that quantum algorithms with access to only the noisy function value can match the query complexity of classical (sub)gradient methods that require the (sub)gradient of the objective function. Since the classical computation of a gradient in the oracle setting requires a number of queries that is linear in the dimensionality $d$, this represents an exponential speedup in terms of $d$. 

Notably, our results in the black-box setting apply for optimization in both Euclidean and more general $\ell_p$-normed spaces ($p \geq 1$). This allows us to derive algorithms that offer improvements for white-box optimization. To this end, we leverage a connection between semidefinite programming and eigenvalue optimization that has been previously studied to motivate the \emph{spectral bundle method}. Specifically, semidefinite programming is reduced to a nonsmooth optimization problem that is Lipschitz continuous in $\ell_1$-space. Leveraging our black-box results for this setting, we obtain algorithms for semidefinite programming, linear programming, and zero-sum games that improve upon the previously best known quantum and classical algorithms in many regimes; see Figure~\ref{fig:zerosumgame} for an illustration.

In short, we demonstrate several algorithmic speedups using quantum (sub)gradient methods in both the black-box and white-box settings. 
Specifically, our contributions can be summarized as follows:
\begin{itemize}[leftmargin=*]
    \item In Section~\ref{sec:black-box-results}, we show that, using only a \textit{noisy} function evaluation oracle (Definition~\ref{def:binary-oracle}), the zeroth-order quantum projected subgradient method ($\ell_2$-space) and quantum mirror descent ($\ell_p$-space) match the first-order query complexities (up to polylogarithmic factors in $d$) achieved by their respective classical counterparts for solving \eqref{e:cvx_opt}. See, Theorem~\ref{thm:nonsmooth}. We thus exhibit an \textit{exponential} separation between quantum and classical zeroth-order optimization of convex and Lipschitz functions. These results are summarized in Table~\ref{tab:compare}. 
    \item In Section~\ref{sec:white-box}, we leverage the connection between semidefinite programming and eigenvalue optimization, and demonstrate how the quantum mirror descent framework can be applied to solving white-box settings, including semidefinite programming (Theorem~\ref{thm:sdp_high_prec_informal}), linear programming (Theorem~\ref{thm:lp_informal}), and zero-sum games (Corollary~\ref{cor:zsg_informal}). In some regimes, we attain a better complexity (Table~\ref{tab:compareSDP} and Figure~\ref{fig:zerosumgame}) than the current state-of-the-art approaches (both quantum and classical).
    \item In Section~\ref{sec:smooth}, we characterize the effort required by zeroth-order quantum gradient methods to solve \eqref{e:cvx_opt} when the objective function has Lipschitz continuous gradients, often referred to as the $L$\textit{-smooth} setting in the optimization literature. In this setting, we study the zeroth-order quantum gradient descent ($\ell_2$-space) and mirror prox ($\ell_p$-space), again matching first-order query complexities of the corresponding classical algorithms only using noisy function evaluation oracle. See, Theorem~\ref{thm:smooth}. Additionally assuming $f$ satisfies the $\mu$-P\L{} condition, we show a $\widetilde{\Ocal}(L/\mu)$ query complexity for zeroth-order quantum gradient descent (Theorem~\ref{thm:quantum-gradient-descent-PL-informal}).  
\end{itemize}

\paragraph{Notation.}  
We let $\norm{\cdot}_p$ denote the $\ell_p$-norm on $\mathbb{R}^d$ for $p \in [1,\infty]$:
$\norm{v}_p \coloneqq \big(\sum_{i=1}^d \abs{v_i}^p\big)^{1/p}.$ 
For a finite-dimensional space $\Ecal$, we denote its dual space by $\Ecal^*$. If $\Ecal \subset \R{d}$ is equipped with an arbitrary inner product $\inner{\cdot, \cdot}$ and a norm $\| \cdot \|_p$, the \textit{dual norm} $\| \cdot \|_*$ is $ \| g \|_* := \sup_{\{ x \in \R{d} : \| x \| \leq 1 \}} \inner{g,x}.$
We define the (closed) $\ell_p$-ball of radius $r$ centered at $c \in \R{d}$ as $\Bcal_p^d (c, r) := \{ x \in \R{d} : \| x - c \|_p \leq r \}$. We define the Bregman divergence associated to $\Phi$ as $D_\Phi (x, y) = \Phi(x) - \Phi(y) - \langle \nabla \Phi (y) , x - y \rangle$. We also use the standard definitions of (strongly) convex functions; more detailed notations are summarized in the supplementary material.

\section{Black-box setting: zeroth-order quantum (sub)gradient methods}
\label{sec:black-box-results}

\begin{table}
    \centering
    \caption{\textit{Comparison of oracle complexities for \textbf{zero-order quantum (sub)gradient methods (this work)} and the first-order classical (sub)gradient methods on minimizing $G$-Lipschitz functions with additional structures (2nd column).}
    Algorithms: Gradient Descent (GD), Projected Subgradient Method (PSG), Mirror Descent (MD), Dual Averaging (DA), and Mirror Prox (MP). 
    }
    \label{tab:compare}
    \resizebox{\textwidth}{!}{ 
    \begin{tabular}{llclll}   
    \toprule
    \hline
    \textbf{Domain} & \textbf{Assumptions on $f$} & \textbf{Algorithm} & \textbf{Quantum} \textbf{(0th order)} & \textbf{Result}&  \textbf{Classical} \textbf{(1st order)} \\
    \hline \\[-1.5ex]
    $\ell_2$-space & Convex & PSG & ${\widetilde{\Ocal}\left(G^2 R^2/\varepsilon^2\right)}$ & Thm~\ref{thm:nonsmooth} & $\Ocal\left(G^2 R^2/\varepsilon^2\right)$ \\[-2.0ex]\\
    (Euclidean) & Convex, $L$-smooth & GD & ${\widetilde{\Ocal} \left( L R^2/\varepsilon\right)}$ & Thm~\ref{thm:smooth} &  $\Ocal \left(L R^2/\varepsilon\right)$ \\[-2.0ex]\\
    &$\mu$-PL, $L$-smooth & GD & ${{\widetilde{\Ocal}}\left(L/\mu\right) }$ & Thm~\ref{thm:quantum-gradient-descent-PL-informal} & ${\widetilde{{\Ocal}}}\left(L/\mu\right)$ \\[-2.0ex]\\
    \hline \\[-1.5ex]
    $\ell_p$-space& Convex & MD, DA & ${\widetilde{\Ocal}\left(G^2 R^2/\varepsilon^2\right)}$ & Thm~\ref{thm:nonsmooth} & ${\Ocal}(G^2 R^2/\varepsilon^2)$  \\[-2.0ex]\\
    (Non-Euclidean)& Convex, $L$-smooth & MP & ${\widetilde{\Ocal}\left(L R^2/\varepsilon\right)}$ & Thm~\ref{thm:smooth} & ${\Ocal}(L R^2/\varepsilon)$ \\[-2.0ex]\\
    \bottomrule
    \end{tabular}}
\end{table}

In the black-box setting, (sub)gradient methods access the problem through a \emph{$q$th-order oracle} $O^{(q)}_f$ for the objective function $f$. On input $x \in \Rmbb^d$, $O^{(q)}_f$ returns the function value and its derivatives up to the $q$th-order:
$$O^{(q)}_f: x \mapsto \left\{ f(x), \nabla f(x), \dots, \nabla^q f(x) \right\}.$$
Let $G^{(q)}$ denote the Lipschitz constant of the $q$th-order derivative, i.e., for any $(x, \bar{x}) \in \Xcal \times \Xcal$,
$$\| \nabla^q f(x) - \nabla^q f(\bar{x}) \|_* \leq G^{(q)} \| x - \bar{x} \|_p,$$ 
where $\|\cdot\|_p$ is an appropriate $\ell_p$-norm and $\| \cdot \|_*$ is its dual norm. For fixed $q$ and any $\varepsilon > 0$, the complexity of an algorithm for solving \eqref{e:cvx_opt}
can be expressed as a function of $d$ and $G^{(q)} R/\varepsilon$ (since one can always re-scale the input and output spaces, and adjust the precision accordingly). 
We focus on the high-dimensional setting where $d$ is potentially much larger than the other problem parameters, i.e., $d \gg \poly(G^{(q)} R \varepsilon^{-1})$. For any $q \geq 1$, the number of queries to $O^{(q)}_f$ needed to solve \eqref{e:cvx_opt} to precision $\varepsilon >0$ using classical (sub)gradient methods depends polynomially on $G^{(q)} R/\varepsilon$ and thus is ``independent'' of the ambient dimension $d$ \cite{nemirovskii1983problem}. 

First-order methods play a fundamental role in large-scale optimization due to their \textit{near} dimension independence. While the convergence rates of first-order methods are far slower than the (local) quadratic convergence enjoyed by Newton's method, first-order methods exhibit an advantage with respect to simplicity, robustness, and the fact that they provably converge to the global optimum of a convex optimization problem irrespective of the starting point. By contrast, Newton's method is only guaranteed to rapidly converge to the globally optimal solution when initialized to a starting point that is, in some sense, close to the optimal solution. From a practical standpoint, algorithms that rely on computing second-order information at every iterate, such as \textit{interior point methods} (IPMs), are often outperformed by first-order methods for large-scale problems.  

Recent studies~\cite{garg2021near, garg2020no} have demonstrated that no quantum speedup over classical (accelerated) gradient descent rates can be achieved when $q \ge 1$. Consequently, it is natural to consider instances of \eqref{e:cvx_opt} where derivatives are unavailable (or accessible only at a prohibitive cost). In optimization theory, this is modeled by so called \emph{zeroth-order} oracles that only specify the objective function value and not its gradients (i.e., $q=0$), and hence is also called \textit{derivative-free} optimization.

In the derivative-free setting, there is reason to be optimistic about the potential for quantum speedup due to the aforementioned algorithms for quantum gradient~\cite{jordan2005fast, gilyen2019gradient} and subgradient~\cite{chakrabarti2020quantum,van2020convex} estimation, which provide an exponential speedup in $d$ for estimating a gradient from function evaluations in the black-box model. However, previous applications of these algorithms to convex optimization~\cite{chakrabarti2020quantum,van2020convex} have focused on arbitrarily constrained problems in the high-precision regime, rather than the low-precision regime where gradient methods typically outperform more complex techniques such as cutting plane or center of gravity methods. Consequently, the speedups in terms of dimension are only quadratic. It is natural to wonder whether the dimension-independent rates of gradient methods can be achieved using only function evaluations. Our first set of results answers this question in the affirmative, as summarized in Table~\ref{tab:compare}, under the approximate binary oracle model below: 

\begin{definition}[$\theta$-approx. binary oracle] \label{def:binary-oracle}
A unitary $U^{(\theta)}_f$ is a $\theta$-approximate binary oracle for $f$ if
$$
U^{(\theta)}_f:|x\rangle|y\rangle \mapsto |x\rangle|y \oplus \widetilde{f}(x)\rangle \quad\text{such that}\quad \lVert \widetilde{f} - f \rVert_{\infty} < \theta.
$$
\end{definition}
Approximate binary oracles are the standard function oracle considered for quantum optimization algorithms \cite{bernstein1993quantum, chakrabarti2020quantum, van2020convex}, since any classical arithmetic circuit for a function can be converted into a binary oracle by implementing each arithmetic operation with reversible quantum arithmetic.

Before discussing the black-box results, we briefly review quantum (sub)gradient estimation. (Sub)gradient-based methods are a popular class of iterative black-box methods which generate a sequence of candidate solutions based on the (sub)gradients of the objective function. Gradient methods 
lend themselves to quantum speedups due the existence of a quantum algorithm proposed by Jordan~\cite{jordan2005fast} and refined by Gily\'en et al.~\cite{gilyen2019gradient}, to estimate the gradient of a function in $\widetilde{\mathcal{O}}(1)$ function evaluation queries.\footnote{Here and throughout, the $\widetilde{\Ocal}(\cdot)$ notation suppresses polylogarithmic factors in the usual $\Ocal(\cdot)$.} These algorithms were generalized to compute subgradients of convex functions in \cite{chakrabarti2020quantum,van2020convex}, resulting in the first query complexity speedups for constrained convex optimization.
We state the subgradient estimation lemma based on \cite{van2020convex}, which we slightly modified for our purposes. 
\begin{restatable}[Subgradient estimation ({\cite[Lemma 18]{van2020convex}})]{lemma}{lemsubgradest}
\label{lem:quantum-subg-van-maintext}
    Suppose $f:\R{d} \rightarrow \R{}$ is a convex function that is $G$-Lipschitz on $\Bcal_{\infty}(0, 2r_1)$ with a given norm $\| \cdot \|_p$ with $p \geq 1$, and we have quantum query access to $\tilde{f}$, which is a $\theta $-approximate version of $f$, as in  Definition~\ref{def:binary-oracle}.
    Let $r_1, G > 0$, $\rho \in (0, 1/3]$, and suppose $\theta  \in (0, r_1 d G / \rho]$. 
    Then, we can compute a subgradient $\tilde{g} \in \mathbb{R}^d$ using $\mathcal{O} (\log(d/\rho))$ queries to $\theta$-approximate binary oracle, such that with probability $\geq 1-\rho$, we have
    \begin{align*} 
        f(y) \geq f(x) + \inner{\tilde{g}, y-x} - (23d)^2 \sqrt{\frac{\theta  G}{\rho r_1}} \|y-x\|_p - 2G \sqrt{d} r_1, ~~ \sup_{g\in \partial f(x)}\lVert \tilde{g} - g \rVert_{\infty} \leq \sqrt{\frac{\theta d^3G}{\rho r_1}}.
    \end{align*} 
\end{restatable}
That is, one can obtain an approximate subgradient $\tilde{g}$ using an approximate binary oracle in Definition~\ref{def:binary-oracle}, with some bias that has to be appropriately controlled.
In Section~\ref{sec:smooth}, we also provide an improved gradient estimation based on \cite{van2021tomography} where one can control both the bias and the variance (Theorem~\ref{thm:unbiased-gradient-noisy-oracle-informal}), which we apply to solve convex optimization problems in the smooth setting.

Equipped with Lemma~\ref{lem:quantum-subg-van-maintext}, we study the quantum projected subgradient method, which iterates as:
\begin{equation} \label{eq:proj-subgd}
    x_{t+1} = \Pi_{\Xcal} \left( x_t - \eta \tilde{g}_{x_t}\right), \quad \tilde{g}_{x_t} \overset{\text{estimate}}{\sim} \text{Lemma}~\ref{lem:quantum-subg-van-maintext} \tag{QPSM}
\end{equation}

While the projected subgradient method is optimal for convex and Lipschitz functions \cite{nemirovskii1983problem}, the convergence guarantee depends on problem-dependent parameters being well-behaved in the Euclidean norm. For instance, if $f$ is $G$-Lipschitz with respect to $\ell_{\infty}$-norm, subgradient method may not retain its dimension-free rate \cite{bubeck2015convex}. A powerful extension of the subgradient method to remedy this issue is \textit{mirror descent}~\cite{nemirovskii1983problem, beck2003mirror}, which iterates as:   
\begin{equation}\label{e:QMD}\tag{QMD}
    \nabla \Phi (y_{t+1}) = \nabla \Phi (x_t) - \eta \tilde{g}_{x_t}, ~~ \tilde{g}_{x_t}  \overset{\text{estimate}}{\sim} \text{Lemma}~\ref{lem:quantum-subg-van-maintext} 
    ~~~~\text{and}~~~~ x_{t+1} \in \argmin_{x \in \Xcal \cap \Pcal} D_{\Phi} (x, y_{t+1})
\end{equation}
where $\Phi$ is called the mirror map (see Definition~\ref{def:mirror-map}), whose gradient defines a bijection between the primal and the dual ambient spaces. Due to space limitations, we review mirror descent in more detail in Appendix~\ref{s:grad}. 

A technical challenge in analyzing the above algorithms is that we must handle the additional technicality that these algorithms actually input an erroneous subgradient at a point that is not the point queried, but instead a randomly chosen nearby point. This is due to a randomized smoothing procedure employed in~\cite{chakrabarti2020quantum,van2020convex}. We show how to handle such errors in the robust analysis by leveraging the fact that the algorithms using subgradients average over their iterates. 
Below, we establish that both \ref{eq:proj-subgd} and \ref{e:QMD} achieve a query complexity of $\widetilde{\Ocal}( (GR/\varepsilon)^2)$.

\begin{theorem}[Zeroth-order nonsmooth optimization, informal] \label{thm:nonsmooth}
    Let $\mathcal{X} \subseteq \mathbb{R}^{d}$ be a closed convex set with nonempty interior. Suppose $f: \mathbb{R}^{d} \rightarrow \mathbb{R}$ is convex and $G$-Lipschitz on $\Xcal$ with respect to a given $p$-norm $\| \cdot\|_p$ and set $\varepsilon \in (0,1)$. There is a quantum algorithm $\Acal$ that solves~\eqref{e:cvx_opt} using $\widetilde{\Ocal}((GR/\varepsilon)^2)$ queries to a noisy zeroth-order binary oracle $U_f^{(\theta)}$ (Definition~\ref{def:binary-oracle}) such that:
    \begin{itemize}[leftmargin=*]
    \item Euclidean ($p = 2$): $\Acal$ is \ref{eq:proj-subgd} (Theorem~\ref{thm:subgradient_descent_convergence}); 
    $R = \|x_1 - x^{\star} \|_2$; $\theta = \Ocal_{1/\varepsilon, d} \left( {\varepsilon^5}/{ d^{4.5}} \right)$;
    \item non-Euclidean ($p \neq 2$): $\Acal$ is \ref{e:QMD} (Theorem~\ref{thm:MD_convergence}); $R = \sqrt{D_{\Phi}(x^{\star}, x_1)}$; $\theta = \Ocal_{1/\varepsilon, d} \left( {\varepsilon^5}/{ d^{5}} \right)$.
    \end{itemize}
\end{theorem}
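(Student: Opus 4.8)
The plan is to reduce the claim to a \emph{robust} convergence analysis of the two iterative schemes, in which the (sub)gradient used at each step is corrupted both by the additive bias guaranteed in Lemma~\ref{lem:quantum-subg-van-maintext} and by the fact that, because of the randomized smoothing built into the quantum subroutine, it is really an approximate subgradient at a random point near the current iterate rather than at the iterate itself. First I would fix the number of iterations $T = \widetilde{\Theta}\big((GR/\varepsilon)^2\big)$ and run the subgradient estimation of Lemma~\ref{lem:quantum-subg-van-maintext} at every iterate with per-call failure probability $\rho = \Theta(1/T)$ and a smoothing radius $r_1$ to be chosen. A union bound makes all $T$ calls succeed with constant probability, each call costs $\Ocal(\log(d/\rho)) = \widetilde{\Ocal}(1)$ queries to $U_f^{(\theta)}$, so the total query count is $\widetilde{\Ocal}\big((GR/\varepsilon)^2\big)$. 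It then remains to show, purely classically, that with appropriate $\eta$, $r_1$ and $\theta$ the averaged iterate $\bar{x}_T = \tfrac1T\sum_{t=1}^T x_t$ satisfies $f(\bar x_T) - f(x^\star) \le \varepsilon$; this is the content of the formal statements Theorem~\ref{thm:subgradient_descent_convergence} and Theorem~\ref{thm:MD_convergence}.

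\textbf{Euclidean case.} For the iteration \eqref{eq:proj-subgd} I would run the textbook potential argument with $\Phi_t = \tfrac12\norm{x_t - x^\star}_2^2$. Nonexpansiveness of $\Pi_{\Xcal}$ gives $\Phi_{t+1} \le \Phi_t - \eta\inner{\tilde g_{x_t}, x_t - x^\star} + \tfrac{\eta^2}{2}\norm{\tilde g_{x_t}}_2^2$; summing over $t$, dividing by $\eta T$, and applying the approximate subgradient inequality of Lemma~\ref{lem:quantum-subg-van-maintext} at each $x_t$ with $y = x^\star$ turns $\inner{\tilde g_{x_t}, x_t - x^\star}$ into $f(x_t) - f(x^\star)$ at the cost of the two error terms $(23d)^2\sqrt{\theta G/(\rho r_1)}\,\norm{x_t - x^\star}_2$ and $2G\sqrt d\, r_1$. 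Bounding $\norm{x_t - x^\star}_2 = \Ocal(R)$ (from the near-monotonicity of the potential) and converting the $\ell_\infty$ bound $\norm{\tilde g_{x_t} - g}_\infty \le \sqrt{\theta d^3 G/(\rho r_1)}$ into $\norm{\tilde g_{x_t}}_2 \le G + d^2\sqrt{\theta G/(\rho r_1)}$, I obtain
\[
 f(\bar x_T) - f(x^\star) \;\le\; \frac{R^2}{2\eta T} + \frac{\eta}{2}\Big(G + d^2\sqrt{\tfrac{\theta G}{\rho r_1}}\Big)^{2} + (23d)^2 R\sqrt{\tfrac{\theta G}{\rho r_1}} + 2G\sqrt d\, r_1 .
\]
With $\eta \asymp R/(G\sqrt T)$ the first two terms are $\Ocal(\varepsilon)$ for $T = \widetilde\Theta\big((GR/\varepsilon)^2\big)$; the choice $r_1 \asymp \varepsilon/(G\sqrt d)$ kills the last term; and requiring $(23d)^2 R\sqrt{\theta G/(\rho r_1)} \lesssim \varepsilon$ forces $\theta \lesssim \rho r_1\varepsilon^2/(d^4 G R^2)$, which with the above choices of $\rho$ and $r_1$ is $\theta = \Ocal_{1/\varepsilon,d}\big(\varepsilon^5/d^{4.5}\big)$ (treating $G,R$ and logarithmic factors as lower-order). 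One checks this sits well below the admissibility threshold $\theta \le r_1 d G/\rho$ of Lemma~\ref{lem:quantum-subg-van-maintext}.

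\textbf{Non-Euclidean case.} For \eqref{e:QMD} I would replace the Euclidean potential by the Bregman divergence $D_\Phi(x^\star,\cdot)$. The mirror-descent three-point identity together with $\sigma$-strong convexity of $\Phi$ with respect to $\norm{\cdot}_p$ yields the one-step inequality $D_\Phi(x^\star,x_{t+1}) \le D_\Phi(x^\star,x_t) - \eta\inner{\tilde g_{x_t}, x_t - x^\star} + \tfrac{\eta^2}{2\sigma}\norm{\tilde g_{x_t}}_*^2$, and the same summation goes through, now invoking the $\ell_p$-version of Lemma~\ref{lem:quantum-subg-van-maintext} and the dual-norm bound $\norm{\tilde g_{x_t}}_* \le G + \norm{\tilde g_{x_t} - g}_*$. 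The back-solving for $\theta$ is identical in form but picks up one extra power of $d$ relative to the Euclidean case from the norm conversions between $\norm{\cdot}_p$, $\norm{\cdot}_\infty$ and the dual norm (the strong-convexity modulus of the mirror map contributes only logarithmically), giving $\theta = \Ocal_{1/\varepsilon,d}\big(\varepsilon^5/d^5\big)$ and $R = \sqrt{D_\Phi(x^\star,x_1)}$.

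\textbf{Main obstacle.} The delicate point is the one flagged before the theorem: the quantum subroutine feeds into the update an approximate subgradient of $f$ not at $x_t$ but at a random point $x_t'$ within $\ell_\infty$-distance $\Ocal(r_1)$ of $x_t$, a byproduct of the randomized smoothing of \cite{chakrabarti2020quantum,van2020convex}. I would handle this exactly as suggested, by exploiting that both schemes report the \emph{average} iterate. Starting the convexity/regret inequality at $x_t'$ and transferring it to $x_t$ costs $\abs{f(x_t) - f(x_t')} + \abs{\inner{\tilde g_{x_t}, x_t - x_t'}} \le (G + \norm{\tilde g_{x_t}}_*)\norm{x_t - x_t'}_p = \Ocal(G r_1\sqrt d)$ (or its $\ell_p$ analogue) per step; being uniform in $t$, this survives the averaging and is already of the same order as the $2G\sqrt d\, r_1$ term, so the choice $r_1 \asymp \varepsilon/(G\sqrt d)$ absorbs it too. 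The remaining effort is bookkeeping: one must ensure $f$ is Lipschitz on each small ball $\Bcal_\infty(x_t, 2r_1)$ — guaranteed by a standard reduction keeping the iterates in the interior of $\Xcal$ after a rescaling — and then track the precise constants and logarithmic factors this introduces into $T$ and into the final bound on $\theta$.
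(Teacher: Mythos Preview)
Your proposal is correct and follows essentially the same approach as the paper's proofs of Theorems~\ref{thm:subgradient_descent_convergence} and~\ref{thm:MD_convergence}: fix $T=\widetilde\Theta((GR/\varepsilon)^2)$, set $\rho=\Theta(1/T)$ for the union bound, run the standard potential/Bregman telescoping argument with the approximate subgradient inequality from Lemma~\ref{lem:quantum-subg-van-maintext}, choose $\eta\asymp R/(G\sqrt T)$ and $r_1\asymp\varepsilon/(G\sqrt d)$ (or its $\ell_p$ analogue), and back-solve for $\theta$. Two small remarks: (i) the ``random nearby point'' shift you single out as the main obstacle is in fact already absorbed into the $2G\sqrt d\,r_1$ (resp.\ $2G\vartheta r_1$) term of the lemma statement, so the paper simply applies the lemma as stated rather than re-deriving this transfer; (ii) the paper bounds $\|x_t-x^\star\|$ by the diameter $K=\operatorname{diam}(\Xcal)$ (boundedness of $\Xcal$ is assumed in the formal theorems) rather than via near-monotonicity of the potential, which is a cleaner route since the potential is not actually monotone under the biased updates.
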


\begin{remark}
Quantum dual averaging (Theorem~\ref{thm:DA_convergence}), a variant of mirror descent, also performs similarly in the non-Euclidean setting ($p \neq 2$). We provide a formal theorem characterizing the complexity of quantum dual averaging in the supplementary material.
\end{remark}

\paragraph{Related work.}

For black-box convex optimization under a noisy evaluation oracle, the most closely related work to ours is by Gong et al.~\cite{gong2022robustness} which analyzes quantum gradient methods in the context of nonconvex optimization, under a similar oracle model. Their analysis yields an algorithm for finding stationary points of an $L$-gradient Lipschitz (i.e., $L$-smooth), $\rho$-Hessian Lipschitz function using $\widetilde{\mathcal{O}}\left(1/\varepsilon^{1.75}\right)$ queries, each of which must be $\widetilde{\Ocal}(\varepsilon^6/d^4)$ accurate. This algorithm can be applied to convex optimization as stationary points are global optima in this setting. Our results improve upon this simple application of \cite{gong2022robustness} in three ways: \textit{i)} our result above do not require Hessian or gradient Lipschitzness; \textit{ii)} for functions with $L$-Lipschitz gradients, which we analyze in Section~\ref{sec:smooth}, we obtain a faster $\widetilde{\Ocal}(1/\varepsilon)$ rate; and \textit{iii)} a milder accuracy requirement of $\widetilde{\mathcal{O}}(\varepsilon^4/d^3)$ (Theorem~\ref{thm:smooth}). 

There are also other works~\cite{gong2022robustness,sidford2023quantum, liu2024quantum} that investigate convex optimization in the zeroth-order setting that focus on faster estimation of gradient estimators using quantum mean estimation. The focus in these papers is to improve the $\varepsilon$-dependence of classical algorithms and they do not achieve exponential speedups in terms of the dimension $d$.
Finally, under a related but different setting of online convex optimization,
\cite{he2022quantum} investigated a similar algorithm to the zeroth-order quantum subgradient method we analyze in Theorem~\ref{thm:subgradient_descent_convergence} for the Euclidean nonsmooth case, yet without taking the noisy evaluation oracle into account.

\section{White-box setting: application to SDPs, LPs, and zero-sum games}\label{sec:white-box}
\vspace{-5mm}
\begin{table}[h]
    \centering
    \caption{\textit{State-of-the-art quantum and classical algorithms for SDP and LP. For zero-sum games, set $r_p r_d = 1$ in LP complexities.} Algorithms: Interior Point Method (IPM), Cutting Plane Method (CPM), Matrix Multiplicative Weights (MWU), Mirror Descent (MD), Primal-Dual Hybrid Gradient (PDHG), and Multiplicative Weight Update (MWU).
    }
    \label{tab:compareSDP}
    \resizebox{\textwidth}{!}{%
    \begin{tabular}{ll|ll}
        \toprule
        \multicolumn{2}{c|}{\textbf{SDP solvers}} & \multicolumn{2}{c}{\textbf{LP solvers}} \\
        \midrule
        \textbf{Classical} &  \textbf{Time complexity} & \textbf{Classical} &  \textbf{Time complexity} \\
        \midrule
        IPM \cite{jiang2020faster} &  $\widetilde{\Ocal}\big(\sqrt{n}\big( mns + m^{\omega} + n^{\omega} \big)\big)$ 
        & IPM \cite{cohen2021solving, van2020deterministic} & $\widetilde{\Ocal}\big((m + n)^{\omega}\big)$  \\
        CPM \cite{jiang2020improved}  &  $\widetilde{\Ocal}\big(m\big( mns + m^{\omega} + n^{\omega} \big)\big)$ 
        & PDHG \cite{applegate2023faster}  &  $\widetilde{\Ocal}\big(\textsf{Hoffman}(A,b,c) \cdot mn \big)$  \\
        MWU \cite{arora2006multiplicative, van2020quantum}  & 
            $\widetilde{\Ocal}\Big( mns \Big(\frac{r_p r_d}{\varepsilon}\Big)^4 + ns\Big(\frac{r_p r_d}{\varepsilon}\Big)^7 \Big)$ 
        & Stochastic MD \cite{grigoriadis1995sublinear} & 
            $\widetilde{\Ocal}\Big( (m + n) \Big(\frac{r_p r_d}{\varepsilon}\Big)^2 \Big)$  \\
         & & Variance-reduced SMD \cite{carmon2019variance} & 
            $\widetilde{\Ocal}\Big( mn + \sqrt{mn (m+n)} \Big(\frac{r_p r_d}{\varepsilon}\Big) \Big)$  \\
        \midrule
        \textbf{Quantum} &  \textbf{Gate complexity} & \textbf{Quantum} &  \textbf{Gate complexity} \\
        \midrule
        QMWU \cite{van2018improvements}   & 
    $\widetilde{\Ocal}\Big(\sqrt{m}s\Big(\frac{r_p r_d}{\varepsilon}\Big)^4 + \sqrt{n}s\Big(\frac{r_p r_d}{\varepsilon}\Big)^5 \Big)$ 
        & QMWU \cite{bouland2023quantum, gao2024logarithmic}  & 
            $\widetilde{\Ocal}\Big(\sqrt{m + n}\Big(\frac{r_p r_d}{\varepsilon}\Big)^{2.5} + \Big(\frac{r_p r_d}{\varepsilon}\Big)^3 \Big)$   \\
        \textbf{QMD (Thm \ref{thm:sdp_high_prec_informal})} & 
            $\widetilde{\Ocal}\Big( (mns + n^{\omega}) \Big(\frac{r_p r_d}{\varepsilon}\Big)^2\Big)$ 
        & QIPM \cite{apers2023quantum}  &  
            $\widetilde{\Ocal}\big(\sqrt{m} n^{9.5} \big)$ \\
         &  & \textbf{QMD (Thm~\ref{thm:lp_informal})} &  
            $\widetilde{\Ocal}\Big( m\sqrt{n}\Big(\frac{r_p r_d}{\varepsilon}\Big)^2 \Big)$    \\
        \bottomrule
    \end{tabular}%
    }
\end{table}

We now describe an application of the algorithmic frameworks underlying our results on black-box nonsmooth convex optimization to white-box problems including SDPs, LPs, and zero-sum games. Specifically, we utilize the equivalence between the dual SDP \eqref{eq:sdp-dual}, and \eqref{e:eig_opt}, where the latter is a nonsmooth convex optimization problem. This problem concerns the minimization of the largest eigenvalue of a symmetric matrix, and can be readily solved using the quantum mirror descent method \eqref{e:QMD} we developed in the previous section, as we detail below. 

\paragraph{Semidefinite programming.}
Define $r_p \geq 1$. Let $b \in \R{m}$ be a vector with $\| b\|_{\infty} \leq r_p$, and $A_1, \dots, A_m, C \in \R{n \times n}$ be symmetric matrices with at most $s$ non-zero elements in any of their rows. We write the \textit{primal} semidefinite program (SDP) as: 
\begin{equation}\tag{P}
  \sup_{X \in \R{n \times n}}  \{\trace (C X) : \trace(X) = r_p,~ \trace{(A_i X)} \leq b_i~\text{for all}~i \in [m],~X \succeq 0 \}.
\end{equation}
Here $\trace(\cdot)$ denotes the trace and $P \succeq 0$ indicates that $P$ is positive semidefinite. We make the standard assumption that the matrices $\left\{ A_i \right\}_{i \in [m]}$ are linearly independent. Defining $A_0 := I$ and $b_0 := r_p$, the \textit{dual} problem associated with \eqref{e:SDP} is given by 
\begin{equation}\tag{D} \label{eq:sdp-dual}
   \inf_{(y_0, y) \in  \R{} \times \R{m}_{\geq 0}} \bigg\{ b_0 y_0 +  b^{\top}  y : y_0 I + \sum_{i \in [m]} y_i A_i - C \succeq 0 \bigg\} 
\end{equation}   
where $\R{m}_{\geq 0}$ denotes the nonnegative orthant. We assume that the input matrices are normalized with respect to the operator norm $\| A_1\|_{\text{op}}, \dots, \|A_m\|_{\text{op}}, \| C\|_{\text{op}} \leq 1$, and that any primal-feasible solution satisfies $\trace(X) = r_p$. These assumptions are mild (and standard in the literature), since any nontrivial SDP with bounded feasible region can be brought into this form through projection, scaling and (possibly) the introduction of linear slack variables.  

SDPs constitute a fundamental class of convex optimization problems due to their expressive power, capturing fundamental applications in control \cite{boyd1994linear}, information theory \cite{rains2001semidefinite}, machine learning \cite{lanckriet2004learning, weinberger2006unsupervised}, finance \cite{d2007direct, wolkowicz2012handbook}, and quantum information science \cite{aaronson2019shadow, eldar2003semidefinite, harrow2017improved, watrous2009semidefinite}. Both Linear Programming (LP) and Second-Order Conic Programming can be cast as special instances of SDP. Famously, SDP can be used to obtain approximate solutions to \textsf{NP-Hard} problems in polynomial time~\cite{lovasz1979shannon, goemans1995improved}.

SDPs have been known to be polynomial time solvable (to finite precision) since the pioneering work of Nesterov and Nemirovskii~\cite{nesterov1988general, nesterov1994interior} and Gr\"otschel, Lov\'asz and Schrijver~\cite{grotschel2012geometric}. The best performing algorithms for general SDPs (in theory and practice) are \textit{interior point methods} (IPMs)~\cite{alizadeh1998primal, monteiro1998polynomial, todd1998nesterov, jiang2020faster}. There are also  algorithms for SDP based on first-order methods~\cite{arora2006multiplicative, helmberg2000spectral, burer2003nonlinear, nemirovski2004prox}.

Currently there are two classes of quantum SDP solvers with provable guarantees. One is comprised of \textit{quantum} IPMs \cite{augustino2021quantum, kerenidis2020quantum}, which replace the classical solution of the Newton linear system in interior point methods with a quantum linear systems algorithm (QLSA)~\cite{harrow2009quantum, childs2017quantum, chakraborty2018power}. The other class is comprised of \textit{quantum matrix multiplicative weights update methods} \cite{brandao2017exponential, brandao2017quantum, brandao2019faster, van2020quantum, van2018improvements}, which recast trace-normalized positive semidefinite matrices as mixed quantum states that can be efficiently prepared using Gibbs sampling. Currently the state-of-the-art running time is due to van Apeldoorn and Gily\'en~\cite{van2018improvements}, whose algorithm runs in time 
$\widetilde{\Ocal} \left(   \sqrt{m}s (r_p r_d/\varepsilon)^4 + \sqrt{n} s   (r_p r_d/\varepsilon)^5 \right),$
where $r_d \geq 1$ is an $\ell_1$-norm bound on dual optimal solutions $(y_0^{\star}, y^{\star}) \in \R{} \times \R{m}_{\geq 0}$. Our main result on solving SDPs is the following theorem. Let $\omega \in [2, 2.38)$ be the exponent of matrix multiplication \cite{cohen2021solving, van2020deterministic}.

It can be shown that, upon performing the normalization $\frac{1}{r_p}(b_0, b) = (1, \tilde{b})$, the dual SDP in \eqref{eq:sdp-dual} can be equivalently reformulated as a nonsmooth convex optimization problem (see, e.g., \cite{helmberg2000spectral, todd_2001, wolkowicz2012handbook}):
\begin{equation}\label{e:eig_opt}\tag{Eigenvalue SDP}
    \min_{y \in \R{m}_{\geq 0}} f(y):= \lambda_{\max} \bigg( C - \sum_{i \in [m]} y_i A_i \bigg) +   \tilde{b}^{\top} y.
\end{equation}
Applying our quantum mirror descent algorithm to solve this problem leads to the following.
\begin{theorem}[SDP solver, informal version of Theorem~\ref{thm:sdp_high_prec}]\label{thm:sdp_high_prec_informal}
There is a quantum algorithm which solves \eqref{e:SDP}-\eqref{e:SDPD} to precision $\varepsilon \in (0,1)$ in time $\widetilde{\Ocal}((mns + n^{\omega}) (r_p r_d/\varepsilon)^2)$. The output is an $\varepsilon$-optimal solution to the dual problem \eqref{e:SDPD}. That is, a vector $(y_0^{\star}, y^{\star}) \in \R{} \times \R{m}_{\geq 0}$ satisfying
$$ y_0^{\star} I + \sum_{i \in [m]} y_i^{\star} A_i - C \succeq 0, \quad \text{and} \quad b_0 y^{\star}_0 +  b^{\top}  y^{\star} \leq \OPT + \varepsilon, $$
where $\OPT$ is the optimal objective value of the primal and dual SDPs in \eqref{e:SDP}-\eqref{e:SDPD}.
\end{theorem}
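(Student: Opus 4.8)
The plan is to reduce the dual SDP to the nonsmooth program \eqref{e:eig_opt}, verify that it satisfies the hypotheses of the non-Euclidean ($p=1$) case of Theorem~\ref{thm:nonsmooth}, run \eqref{e:QMD} with an entropic mirror map, and turn the approximate minimizer it produces into a certified dual-feasible solution. Fixing $y \in \R{m}_{\geq 0}$, the dual constraint $y_0 I + \sum_i y_i A_i - C \succeq 0$ holds iff $y_0 \geq \lambda_{\max}(C - \sum_i y_i A_i)$; since $b_0 = r_p > 0$ the best choice is $y_0 = \lambda_{\max}(C - \sum_i y_i A_i)$, so \eqref{eq:sdp-dual} equals $r_p \min_{y \geq 0} f(y)$ with $f$ and $\tilde b = b/r_p$ as in \eqref{e:eig_opt}, and $\OPT = r_p\min_{y\geq 0}f(y)$ by strong duality. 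Because an $\ell_1$ bound $r_d$ on a dual optimum is assumed, we may restrict to $\Xcal := \{y\in\R{m}_{\geq 0}:\|y\|_1\leq r_d\}$, and then the recovered $y_0^\star$ is automatically at most $1+r_d$. On all of $\R{m}$, $f$ is convex (a pointwise maximum of affine functions plus a linear term) and $2$-Lipschitz in $\|\cdot\|_1$: $|f(y)-f(y')|\leq\|\sum_i(y_i-y_i')A_i\|_{\mathrm{op}}+\|\tilde b\|_\infty\|y-y'\|_1\leq 2\|y-y'\|_1$ using $\|A_i\|_{\mathrm{op}},\|\tilde b\|_\infty\leq 1$; equivalently, for a top eigenvector $v$ of $C-\sum_i y_iA_i$ the vector $(\tilde b_i - v^\top A_i v)_i$ is a subgradient with $\ell_\infty$-norm at most $2$.

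Next I would instantiate the mirror map and construct the oracle. Lifting $\Xcal$ to the scaled simplex $\{y\geq 0:\sum_{i=1}^{m+1}y_i=r_d\}$ by one slack coordinate on which $f$ does not depend, take $\Phi(y)=r_d\sum_i y_i\log y_i$; a short computation shows $\Phi$ is $1$-strongly convex with respect to $\|\cdot\|_1$ and that $D_\Phi(x^\star,x_1)\leq r_d^2\log(m+1)$ from the uniform start $x_1$, so $R=\sqrt{D_\Phi(x^\star,x_1)}=\widetilde{\Ocal}(r_d)$ in the sense of Theorem~\ref{thm:nonsmooth}. For the binary oracle $U_f^{(\theta)}$ of Definition~\ref{def:binary-oracle}, I would give a classical arithmetic circuit that, on input $y$, forms $M=C-\sum_i y_iA_i$ in $\widetilde{\Ocal}(mns+n^2)$ operations (each $A_i$ has at most $ns$ nonzeros) and computes $\lambda_{\max}(M)$ to additive accuracy $\theta$ in $\widetilde{\Ocal}(n^\omega)\cdot\mathrm{polylog}(1/\theta)$ operations (e.g.\ approximate tridiagonalization followed by bisection on a Sturm sequence, or fast approximate diagonalization), then make it reversible; this yields $U_f^{(\theta)}$ at gate cost $\widetilde{\Ocal}(mns+n^\omega)$ with only polylogarithmic dependence on $1/\theta$. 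This last point is the main technical obstacle: Theorem~\ref{thm:nonsmooth} requires $\theta=\widetilde{\Ocal}((\varepsilon')^5/m^5)$, so an eigenvalue subroutine whose cost grew polynomially in $1/\theta$ (as an unpreconditioned Lanczos/power iteration on the sparse $M$ would) would be fatal — this is exactly why the $n^\omega$ (rather than sparsity-only) eigenvalue routine is needed.

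Finally I would assemble the complexity and extract the solution. Running \eqref{e:QMD} on $f$ over $\Xcal$ with the above $\Phi$, Theorem~\ref{thm:nonsmooth} (with $G=2$, $R=\widetilde{\Ocal}(r_d)$, and its prescribed smoothing radius and $\theta$) returns $\bar y\in\Xcal$ with $f(\bar y)\leq\min_{\Xcal}f+\varepsilon'$ using $\widetilde{\Ocal}((GR/\varepsilon')^2)=\widetilde{\Ocal}((r_d/\varepsilon')^2)$ queries to $U_f^{(\theta)}$, each costing $\widetilde{\Ocal}(mns+n^\omega)$ gates, plus $\widetilde{\Ocal}(m)$ per iteration for the entropic update and renormalization. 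Taking $\varepsilon'=\varepsilon/r_p$ gives total gate complexity $\widetilde{\Ocal}((mns+n^\omega)(r_pr_d/\varepsilon)^2)$. Setting $y^\star:=\bar y$ and $y_0^\star:=\widetilde{\lambda}_{\max}(C-\sum_i\bar y_iA_i)+\theta$ (with $\widetilde{\lambda}_{\max}$ the oracle's estimate, the $+\theta$ ensuring $y_0^\star\geq\lambda_{\max}$) we obtain a point satisfying $y_0^\star I+\sum_i y_i^\star A_i-C\succeq 0$ and $y^\star\geq 0$ whose objective is $b_0y_0^\star+b^\top y^\star=r_pf(\bar y)+\Ocal(r_p\theta)\leq\OPT+r_p\varepsilon'+\Ocal(r_p\theta)\leq\OPT+\varepsilon$. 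The remaining bookkeeping is routine: the randomized-smoothing ``wrong-point'' error and the oracle bias of Lemma~\ref{lem:quantum-subg-van-maintext} are already absorbed into Theorem~\ref{thm:nonsmooth}, and the slack coordinate and the $1+r_d$ bound on $y_0^\star$ affect only constants and logarithms.
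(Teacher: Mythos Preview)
Your proposal is correct and follows essentially the same approach as the paper: reduce \eqref{eq:sdp-dual} to \eqref{e:eig_opt}, establish the $2$-Lipschitz bound in $\ell_1$ (your computation matches Lemma~\ref{lem: eigen_opt_lipschitz}), run \eqref{e:QMD} with the entropic mirror map on a simplex, and implement $U_f^{(\theta)}$ by classically forming $M$ in $\widetilde{\Ocal}(mns)$ and diagonalizing in $\widetilde{\Ocal}(n^{\omega})$ with polylogarithmic dependence on $1/\theta$; your emphasis that only a high-precision eigenvalue routine is acceptable is exactly the point. The only cosmetic difference is that the paper rescales $y\mapsto y/r_d$ (and the objective by $1/r_d$) to work on the unit simplex $\Delta^m$ with $R^2=\Ocal(\log m)$ and precision $\varepsilon/(r_pr_d)$, whereas you lift to the scaled simplex $r_d\Delta^{m+1}$ via a slack coordinate and obtain $R=\widetilde{\Ocal}(r_d)$ with precision $\varepsilon/r_p$; the two normalizations give the same final count, and your slack-coordinate version arguably handles the constraint $\|y^\star\|_1\leq r_d$ (rather than $=r_d$) a bit more cleanly.
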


The objective in \eqref{e:eig_opt} is 2-Lipschitz with respect to the $\ell_1$-norm of $y$ (Lemma~\ref{lem: eigen_opt_lipschitz}). Therefore, we can apply the (zeroth-order) QMD to find an $\varepsilon$-precise approximation of the optimal objective value using $\widetilde{\Ocal}((r_p r_d/\varepsilon)^2)$ queries to an evaluation oracle for $f(y)$. The rub is that our algorithms are sensitive to noise in the evaluation oracle, and so we are relegated to high-precision oracles, i.e., those with polylogarithmic dependence on the inverse accuracy to which we evaluate $f$. A straightforward implementation is to ``classically'' compute the inner product term and perform an eigendecomposition on $C - \sum_{i \in [m]} y_i A_i$ to determine its largest eigenvalue. Evaluating $f$ in this manner has cost $\Ocal((mns + n^{\omega})\log(1/\varepsilon))$, yielding a quantum SDP solver with gate complexity $\widetilde{\Ocal}((mns + n^{\omega})(r_p r_d/\varepsilon)^2 )$.
 
\begin{remark}
We compare our running times to state of the art quantum and classical algorithms for SDPs and LPs in Table~\ref{tab:compareSDP}. For dense SDPs with $m \geq n$ constraints, our SDP algorithm provides a \emph{quadratic speedup} in $r_p r_d/\varepsilon$ over the classical matrix-multiplicative weights method of Arora and Kale~\cite{arora2006multiplicative}. We observe a similar enhancement over the quantum SDP solvers in \cite{van2018improvements}, but their dependence on $m$ and $n$ is superior to ours.     
\end{remark}

\paragraph{Linear programming.}
Linear Programs (LPs) correspond to SDPs in which each of the input matrices is a diagonal matrix. The primal and dual LPs can be written as
\begin{equation}\label{e:LP}\tag{LP}
 \max_{\{x \in \R{n}_{\geq 0} : \mathbf{1}_n^{\top} x = r_p, Ax \leq b\}} c^{\top} x  \quad\text{and}\quad  \min_{ \{(y_0,y) \in \R{} \times \R{m}_{\geq 0}:  y_0 \mathbf{1}_n + A^{\top} y \geq c \}}  r_p y_0 + b^{\top} y,    
\end{equation}
where $A \in \R{m \times n}$, $c \in \R{n}$ and $\mathbf{1}_n \in \R{n}$ is the all-ones vector. LPs have been known to be polynomial-time solvable since the work of Khachiyan~\cite{khachiyan1980polynomial}, and the first polynomial-time IPM for LP is due to Karmarkar~\cite{karmarkar1984new}. 

Naturally, the formulation in \eqref{e:eig_opt} becomes simpler for LPs, as the $\lambda_{\max} (\cdot)$ term simplifies to $\max_{j \in [n]} \left\{ c_j - \inner{ A_j, y} \right\}$, where $A_j$ is the $j$th column of $A$. Thus, one can evaluate $f$ in time $\widetilde{\Ocal}(m \sqrt{n})$ using generalized quantum maximum finding \cite{durr1996quantum, van2020quantum, gilyen2019thesis}. This yields a quantum LP solver with gate complexity 
$\widetilde{\Ocal} (m\sqrt{n} (r_p r_d/\varepsilon )^2 )$ (and $\widetilde{\Ocal}(m \sqrt{n} \varepsilon^{-2} )$ complexity for zero-sum games.)

Due to their simpler structure, LPs admit more efficient algorithms than SDPs. Classical IPMs can solve LPs in matrix-multiplication time $\widetilde{\Ocal}\left( (m+n)^{\omega} \right)$~\cite{cohen2017matrix, van2020deterministic}. Like the case of SDP, there are quantum IPMs that are based on QLSAs~\cite{kerenidis2020quantum, mohammadisiahroudi2022efficient, mohammadisiahroudi2023inexact, mohammadisiahroudi2025improvements}. More recent quantum algorithms accelerate the IPM framework without using QLSAs: Apers and Gribling~\cite{apers2023quantum} give a speedup for tall LPs with $m \gg n$, and Augustino et al.~\cite{augustino2023quantum} solve LPs through quantum simulation of the central path. Our application of \eqref{e:QMD} to solving LPs results in the following.

\begin{theorem}[LP solver, informal version of Theorem~\ref{thm:lp}]\label{thm:lp_informal}
There is a quantum algorithm which solves \eqref{e:LP} to precision $\varepsilon \in (0,1)$ in time $\widetilde{\Ocal}(m \sqrt{n}  (r_p r_d/\varepsilon)^2)$.    
\end{theorem}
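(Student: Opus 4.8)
The plan is to instantiate the general quantum mirror descent framework of Theorem~\ref{thm:nonsmooth} (non-Euclidean case, $p=1$) on the reformulation \eqref{e:eig_opt} of the dual LP, and then account for the cost of each oracle call. First I would establish the reduction: by the same normalization $\frac{1}{r_p}(b_0,b)=(1,\tilde b)$ used for SDPs, the dual LP in \eqref{e:LP} is equivalent to minimizing $f(y) = \max_{j\in[n]}\{c_j - \langle A_j, y\rangle\} + \tilde b^\top y$ over $y \in \R{m}_{\geq 0}$, where $A_j$ is the $j$th column of $A$ — this is just the specialization of \eqref{e:eig_opt} to diagonal matrices, where $\lambda_{\max}$ of a diagonal matrix is the largest diagonal entry. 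Second, I would verify the hypotheses needed to apply \eqref{e:QMD}: $f$ is convex (pointwise max of affine functions plus linear), and $2$-Lipschitz with respect to $\|\cdot\|_1$ on the relevant domain (the specialization of Lemma~\ref{lem: eigen_opt_lipschitz}, since each $\|A_j\|_\infty$-type bound and $\|\tilde b\|_\infty \le 1$ give a combined $\ell_\infty$-norm bound of $2$ on any subgradient), and that the feasible region is contained in an $\ell_1$-ball of radius $R = \Theta(r_d)$ since dual optimal solutions have $\ell_1$-norm at most $r_d$. Third, I would invoke Theorem~\ref{thm:nonsmooth} with $G=2$, $R=\Theta(r_d)$, $\varepsilon \to \varepsilon/r_p$ after rescaling (the objective is scaled by $r_p$ relative to the normalized problem), giving a query complexity of $\widetilde{\Ocal}((r_p r_d/\varepsilon)^2)$ queries to a sufficiently accurate binary oracle for $f$, where the required accuracy $\theta = \Ocal_{1/\varepsilon,d}(\varepsilon^5/d^5)$ is only polylogarithmically demanding in $1/\varepsilon$ and $d$ (here the effective dimension is $d=m$).

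The remaining step is the per-query gate cost. Evaluating $f(y)$ to the required (polylogarithmically small) precision amounts to computing the linear term $\tilde b^\top y$ exactly-enough in $\widetilde{\Ocal}(m)$ time, and computing $\max_{j\in[n]}\{c_j - \langle A_j, y\rangle\}$. Since each inner product $\langle A_j, y\rangle$ costs $\widetilde{\Ocal}(m)$ (or $\widetilde{\Ocal}(s)$ if columns are $s$-sparse, but in the worst dense LP case $\widetilde{\Ocal}(m)$), and we are taking a maximum over $n$ candidates, a classical evaluation would cost $\widetilde{\Ocal}(mn)$; instead I would use generalized/maximum finding in superposition (\cite{durr1996quantum, van2020quantum, gilyen2019thesis}) to find the maximizing index using $\widetilde{\Ocal}(\sqrt n)$ evaluations of $j \mapsto c_j - \langle A_j, y\rangle$, each costing $\widetilde{\Ocal}(m)$, for a total of $\widetilde{\Ocal}(m\sqrt n)$ gates per oracle call. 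Care is needed because the subgradient-estimation routine (Lemma~\ref{lem:quantum-subg-van-maintext}) queries the oracle in superposition over many nearby points, so the evaluation subroutine must be a coherent unitary — but quantum maximum finding is already phrased coherently, and the arithmetic circuits for inner products can be made reversible, so this composes cleanly. Multiplying the $\widetilde{\Ocal}((r_p r_d/\varepsilon)^2)$ queries by the $\widetilde{\Ocal}(m\sqrt n)$ cost per query yields the claimed $\widetilde{\Ocal}(m\sqrt n\,(r_p r_d/\varepsilon)^2)$ gate complexity.

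The main obstacle I expect is not the asymptotics but the bookkeeping around \emph{error propagation through the composed oracle}: the quantum maximum-finding subroutine itself returns the maximum only with high probability and only approximately (due to finite-precision arithmetic on $\langle A_j, y\rangle$), so one must certify that an $\Ocal(\theta)$-accurate evaluation of $f$ is actually produced with the success probability demanded by Lemma~\ref{lem:quantum-subg-van-maintext}, and that the $\Ocal(\log(d/\rho))$ repetitions in the subgradient lemma, combined with a union bound over all $\widetilde{\Ocal}((r_pr_d/\varepsilon)^2)$ iterations, still leaves the polylogarithmic overhead intact. A secondary subtlety is confirming that restricting the mirror-descent iterates to the nonnegative orthant $\R{m}_{\geq 0}$ (rather than an unconstrained space) is compatible with the chosen mirror map and does not inflate $R$ — this is where one leans on the standard negative-entropy or $\ell_1$-suitable mirror map from the QMD analysis (Theorem~\ref{thm:MD_convergence}), for which the projection step onto $\Xcal \cap \Pcal$ is efficient and the divergence bound $D_\Phi(x^\star, x_1) = \widetilde{\Ocal}(r_d^2)$ holds. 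Everything else is a routine instantiation of the black-box result.
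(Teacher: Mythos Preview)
Your proposal is correct and takes essentially the same approach as the paper's proof of Theorem~\ref{thm:lp}: specialize~\eqref{e:eig_opt} to the diagonal case, apply Theorem~\ref{thm:MD_convergence} (equivalently Theorem~\ref{t: dual_SDP_convergence}) with the negative-entropy mirror map on the rescaled simplex to obtain $\widetilde{\Ocal}((r_pr_d/\varepsilon)^2)$ oracle calls, and implement each evaluation via generalized quantum minimum finding over the $n$ columns with $\widetilde{\Ocal}(m)$ arithmetic per inner product. The paper treats your flagged coherence concern at the same level of detail (simply citing generalized minimum finding from~\cite{van2020quantum}), so your handling matches theirs.
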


\begin{remark}
In the case of LP solving, the simplified structure of the objective function enables it to be evaluated quadratically faster by quantum amplitude amplification. We thus obtain an algorithm whose dependence on $m$ and $n$ compares favorably to other quantum and classical algorithms, but we cannot make decisive conclusions regarding an end-to-end speedup without making further assumptions on the behavior of the $r_p r_d/\varepsilon$ term. 
\end{remark}

\paragraph{Zero-sum games.}

There are also fast LP algorithms based on a reduction to \textit{zero-sum games}, which are matrix games in which each player has a finite number of pure strategies. These problems are fundamental to computer science, economics and machine learning. A standard setup concerns two players Alice and Bob, whose action spaces are $[m]$ and $[n]$ respectively. Payoffs from the game are encoded in the entries of a matrix $A \in [-1, 1]^{m \times n}$. If Alice plays action $i \in [m]$ and Bob plays $j \in [n]$, then Alice obtains the payoff $A_{ij}$, while Bob receives a payoff of $-A_{ij}$. Each player aims to maximize their expected payoff through randomized strategies over the probability simplex $(x, y) \in \Delta^n \times \Delta^m$, giving rise to the minimax optimization problem:
\begin{equation}\tag{ZSG}
    \min_{x \in \Delta^n} \max_{y \in \Delta^m} y^{\top} A x.
\end{equation}
Saddle points of \eqref{e:zsg_minimax} are called \textit{mixed Nash equilibria}, which always exist for zero-sum games due to von~Neumann's minimax theorem~\cite{Neumann1928}.

Zero-sum games can be reformulated as LPs, and vice versa (they are LPs with $r_p r_d = \Ocal(1)$). This relationship has led to the development of fast classical and quantum algorithms that obtain large speedups in $m$ and $n$ over second-order methods like IPMs, at the cost of polynomial scaling in the inverse precision. A classical algorithm due to Grigoriadis and Khachiyan~\cite{grigoriadis1995sublinear} can solve zero-sum games in time $\widetilde{\Ocal} ((m+n) \varepsilon^{-2})$, which implies an $\widetilde{\Ocal} ( (m+n) ( r_p r_d/\varepsilon )^{2})$ algorithm for LP. Quantumly, the dependence on $m$ and $n$ can be improved quadratically, and a series of works \cite{li2021sublinear, van2019games, bouland2023quantum, gao2024logarithmic} have reduced the overall complexity of solving zero-sum games from $\widetilde{\Ocal} ( \sqrt{m+n} \varepsilon^{-4})$ to $\widetilde{\Ocal} ( \sqrt{m+n} \varepsilon^{-2.5} + \varepsilon^{-3}).$ Our quantum algorithm for \eqref{e:zsg_minimax} is as follows.

\begin{corollary}[Zero-sum games, informal version of Corollary~\ref{thrm: zsg}]\label{cor:zsg_informal}
    There is a quantum algorithm which solves \eqref{e:zsg_minimax} to precision $\varepsilon \in (0,1)$ in time $\widetilde{\Ocal}(m \sqrt{n}  (1/\varepsilon)^2)$.    
\end{corollary}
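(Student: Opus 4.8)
The plan is to derive Corollary~\ref{cor:zsg_informal} as a direct specialization of the LP solver in Theorem~\ref{thm:lp_informal}, using the classical fact that zero-sum games are exactly the linear programs whose ``width'' parameter $r_p r_d$ is $\Ocal(1)$. First I would recall the standard LP encoding of \eqref{e:zsg_minimax}: introducing a scalar $\tau$ for the value of the game, the minimizing player's problem becomes $\min\{\tau : Ax \leq \tau\,\mathbf{1}_m,\ \mathbf{1}_n^{\top} x = 1,\ x \geq 0\}$, whose LP dual is the maximizing player's problem $\max\{\sigma : A^{\top}y \geq \sigma\,\mathbf{1}_n,\ \mathbf{1}_m^{\top}y = 1,\ y \geq 0\}$; by von~Neumann's minimax theorem both share the optimal value $V^{\star} = \min_{x\in\Delta^n}\max_{y\in\Delta^m} y^{\top}Ax$. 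Since $A \in [-1,1]^{m\times n}$ we have $V^{\star} \in [-1,1]$, so after the entrywise shift $A \mapsto A + 2\,\mathbf{1}_m\mathbf{1}_n^{\top}$ — which changes $V^{\star}$ by exactly $2$, leaves the saddle points unchanged, and makes all data nonnegative and $\Ocal(1)$-bounded — the instance satisfies the normalization hypotheses under which \eqref{e:LP} is stated.

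Second, I would read off the parameters of this LP instance in the notation of Theorem~\ref{thm:lp_informal}: the primal feasible region lies on the scaled simplex $\mathbf{1}_n^{\top}x = r_p$ with $r_p = 1$, while the dual optimal solution $(y_0^{\star},y^{\star})$ consists of the value variable (bounded by $3$ after the shift) together with a point $y^{\star}\in\Delta^m$, so $\lVert(y_0^{\star},y^{\star})\rVert_1 = \Ocal(1)$, i.e. $r_d = \Ocal(1)$; hence $r_p r_d = \Ocal(1)$, and the constraint matrix has $m$ rows and $n$ columns. Plugging these into Theorem~\ref{thm:lp_informal} — whose cost is $\widetilde{\Ocal}((r_p r_d/\varepsilon)^2)$ iterations of \eqref{e:QMD}, each dominated by a single evaluation of $f(y) = \max_{j\in[n]}\{c_j - \langle A_j, y\rangle\} + \tilde b^{\top}y$ performed in time $\widetilde{\Ocal}(m\sqrt{n})$ via generalized quantum maximum finding — yields total gate complexity $\widetilde{\Ocal}(m\sqrt{n}\,(1/\varepsilon)^2)$, as claimed.

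The remaining step is the reduction's accuracy bookkeeping, which I expect to be the main (albeit routine) obstacle: I must verify that an additive-$\varepsilon$ optimal dual solution of the shifted LP, as produced by Theorem~\ref{thm:lp_informal}, furnishes a pair $(\bar x,\bar y)\in\Delta^n\times\Delta^m$ with duality gap $\max_{y\in\Delta^m} y^{\top}A\bar x - \min_{x\in\Delta^n}\bar y^{\top}Ax \leq \Ocal(\varepsilon)$, i.e. an $\varepsilon$-approximate mixed Nash equilibrium after rescaling $\varepsilon$ by a constant. Concretely this requires (i) extracting the minimizing player's strategy $\bar x$ — in the \eqref{e:eig_opt} reformulation the maximizing-eigenvector certificate degenerates, for LPs, into the maximizing coordinate, which the averaged mirror-descent iterates aggregate into a point of $\Delta^n$ alongside the returned dual iterate $\bar y$ — and (ii) checking that the shift $A\mapsto A + 2\,\mathbf{1}_m\mathbf{1}_n^{\top}$ and the identification $r_p r_d = \Ocal(1)$ contribute only constant factors to the error. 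No single one of these is deep, but assembling them — in particular confirming that the zeroth-order QMD analysis underpinning Theorem~\ref{thm:lp_informal} certifies \emph{both} primal and dual near-optimality at the stated query cost — is where the content of the proof lies; the rest is a substitution of parameters.
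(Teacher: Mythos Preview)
Your core reduction --- zero-sum games are LPs with $r_p r_d = \Ocal(1)$, so apply Theorem~\ref{thm:lp_informal} --- is exactly what the paper does, and the complexity calculation is correct. Two places where you diverge from the paper and over-complicate:

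First, the shift $A \mapsto A + 2\,\mathbf{1}_m\mathbf{1}_n^{\top}$ is unnecessary. The paper's normalization hypotheses (operator-norm bounds on the $A_i$ and $C$, and $\|b\|_\infty \leq r_p$) do not require nonnegative data; the paper simply writes down the LP formulation of the game and observes $r_p, r_d \leq 2$ directly.

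Second, and more importantly, the ``remaining step'' you identify --- certifying \emph{both} primal and dual near-optimality so as to extract a full approximate Nash equilibrium pair $(\bar x, \bar y)$ --- is not something the paper does or claims. The LP solver in its formal version (Theorem~\ref{thm:lp}) outputs only an $\varepsilon$-optimal \emph{dual} solution $(y_0^\star, y^\star)$; the zero-sum corollary inherits this, and ``solving \eqref{e:zsg_minimax} to precision $\varepsilon$'' in the paper's sense means producing one player's near-optimal strategy together with an $\varepsilon$-approximation to the value of the game. Your proposed primal-extraction mechanism (aggregating maximizing coordinates across iterates) is a plausible idea from the multiplicative-weights literature, but it would require its own error analysis under the \emph{inexact} quantum subgradient model of Lemma~\ref{lem:quantum-subg-van-maintext}, and it is simply not needed for the corollary as stated. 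So the part you flag as ``where the content of the proof lies'' is in fact additional work beyond the paper; the actual proof is just the parameter substitution in your second paragraph.
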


Figure~\ref{fig:zerosumgame} illustrates the parameter regimes for which our algorithm provides outperforms the state of the art approaches for solving zero-sum games. 

\begin{remark} \label{rmk:zerosumgame}
It is interesting to note that previous quantum algorithms for zero-sum games \cite{li2021sublinear, van2019games, bouland2023quantum, gao2024logarithmic} improved the dependence on the leading term from $\sqrt{m + n}\varepsilon^{-4}$ to $\sqrt{m + n}\varepsilon^{-2.5}$ through the use of better Gibbs sampling techniques. In contrast, our algorithm makes $\widetilde{\Ocal}(\varepsilon^{-2})$ queries, and does not incur any additional polynomial dependence on $1/\varepsilon$ because our per-iteration complexity boils down to the cost of evaluating the objective function, which we do with $\Ocal (m\sqrt{n}\log (1/\varepsilon))$ cost.    
\end{remark}

\section{Zeroth-order smooth optimization}
\label{sec:smooth}

In the previous sections, we studied zeroth-order convex optimization in a general form, where the only structure of the objective function $f$ we assumed is the Lipschitz continuity of $f$, measured both in $\ell_2$-space (quantum projected subgradient method) and in $\ell_p$-space (quantum mirror descent). We showed that the zeroth-order quantum algorithms match the first-order query complexities of the corresponding classical methods, and then applied the quantum mirror descent framework to white-box settings, yielding the state-of-the-art results in some regimes.

A natural next question to ask is whether the zeroth-order quantum gradient methods can achieve faster convergence rates when the objective function $f$ exhibits additional structures. Motivated by this question, we now analyze quantum gradient methods in the case where the gradient $\nabla f$ exists and is also $L$-Lipschitz continuous. This condition is often referred to as $L$-smoothness in optimization literature, and the classical gradient descent enjoys an improved convergence rate. 

We first need a quantum algorithm to estimate the gradient. The work of van Apeldoorn et al. introduced a ``suppressed-bias'' version of Jordan's quantum gradient estimation, wherein the expectation of the gradient to be controlled is independent of the accuracy to which the phase estimation step is carried out~\cite[Theorem 31]{van2021tomography}.
We modify this gradient estimation protocol so that the variance is also controlled with noisy evaluation oracles (Corollary~\ref{cor:unbiased-gradient-low-variance}), 
and hence the gradients of smooth functions can also be estimated efficiently (Theorem~\ref{thm:unbiased-gradient-noisy-oracle}). We state the informal version below.

\begin{restatable}[Gradient estimation, informal version of Theorem~\ref{thm:unbiased-gradient-noisy-oracle}]{theorem}{thmsuppbiasgradestsmoothinformal} \label{thm:unbiased-gradient-noisy-oracle-informal}
    Let $f \colon \R{d} \to \R{}$ be $G$-Lipschitz with $L$-Lipschitz gradients, 
    and let $\theta,\sigma$ be real positive parameters. Further, suppose we can query a $\theta$-approximate binary oracle with $\theta = \widetilde{\Ocal} ( \sigma^4/ (L d^2 G^2))$.
    Then, there exists a procedure that outputs $k$ that satisfies
    $\mathbb{E}[\lVert k - \nabla f(y) \rVert_\infty^2] \le \sigma^2$ and $\lVert \mathbb{E}[k] - \nabla f(y) \rVert_\infty \le \sigma$
    using $\widetilde{\Ocal}(1)$ calls to the unitary $U^{(\theta)}_f$ (see Definition~\ref{def:binary-oracle}). 
\end{restatable}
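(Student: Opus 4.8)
The plan is to build the gradient estimator as a wrapper around the suppressed-bias version of Jordan's algorithm from \cite[Theorem 31]{van2021tomography}. First I would recall what that result gives: with $\widetilde{\Ocal}(1)$ queries to an exact (or very-high-precision) binary oracle for $f$, it produces a random vector $k_0$ whose expectation $\mathbb{E}[k_0]$ is within $\widetilde{\Ocal}(L \delta)$ of $\nabla f(y)$ in $\ell_\infty$, where $\delta$ is a grid-spacing/domain-size parameter that can be taken as small as we like (the key feature being that the \emph{bias} is independent of the phase-estimation precision). What that theorem does \emph{not} directly control is the second moment, nor does it account for the fact that we only have a $\theta$-approximate oracle. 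So the work is to (i) bound the variance of the raw estimator, and (ii) show that feeding it a $\theta$-approximate oracle with $\theta$ small enough only perturbs the output within the claimed tolerances.

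The key steps, in order: (1) Invoke the suppressed-bias protocol (this is stated in the excerpt as the $q=0$ analogue of Lemma~\ref{lem:quantum-subg-van-maintext}) and record that the output lies on a discretization grid, so its coordinates are bounded — using $G$-Lipschitzness of $f$, each coordinate of $k_0$ is $\Ocal(G)$ in magnitude, hence deterministically $\lVert k_0\rVert_\infty = \Ocal(G)$. Combined with a bias bound $\lVert \mathbb{E}[k_0] - \nabla f(y)\rVert_\infty \le \widetilde{\Ocal}(Ld\delta)$ (or $Ld^{1/2}\delta$ depending on the norm bookkeeping), this gives a crude second-moment bound $\mathbb{E}[\lVert k_0 - \nabla f(y)\rVert_\infty^2] = \Ocal(G^2)$, which is \emph{not} yet $\le \sigma^2$. (2) Amplify precision by \emph{averaging}: one option is to invoke the protocol on a smaller grid so that the estimator concentrates, but the cleaner route is to run the base estimator $N = \widetilde{\Ocal}(G^2/\sigma^2)$ times and average; since the $\widetilde{\Ocal}(\cdot)$ in the query count already hides polylog factors, and $G^2/\sigma^2$ is treated as $\poly$ of the suppressed parameters, this stays $\widetilde{\Ocal}(1)$ in the paper's accounting — averaging reduces the variance of each coordinate by $1/N$ while leaving the bias untouched, yielding both $\lVert \mathbb{E}[\bar k] - \nabla f(y)\rVert_\infty \le \sigma$ (after also choosing $\delta$ small enough that $\widetilde{\Ocal}(Ld\delta) \le \sigma$) and $\mathbb{E}[\lVert \bar k - \nabla f(y)\rVert_\infty^2] \le \sigma^2$. (3) Propagate the oracle error: replacing $f$ by $\tilde f$ with $\lVert \tilde f - f\rVert_\infty < \theta$ shifts every phase-estimation input by $\Ocal(\theta/\delta)$ per sample point used in the finite-difference scheme; tracing this through Jordan's algorithm (as in Lemma~\ref{lem:quantum-subg-van-maintext}) shows the induced perturbation of each output coordinate scales like $\sqrt{\theta d^3 G / (\rho \delta)}$-type expressions, so demanding this be $\Ocal(\sigma)$ forces $\theta = \widetilde{\Ocal}(\sigma^4/(Ld^2G^2))$ after substituting the choice of $\delta \sim \sigma/(Ld)$ from step (2) — which is exactly the stated hypothesis. (4) Assemble: the total query count is $\widetilde{\Ocal}(1) \times \widetilde{\Ocal}(1) = \widetilde{\Ocal}(1)$ calls to $U_f^{(\theta)}$, and a union bound over the $N$ runs (each succeeding with probability $\ge 1-\rho$ for $\rho$ polynomially small) keeps the failure probability negligible, folded into the bias/variance bounds.

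The main obstacle I expect is step (3): carefully tracking how the additive oracle error $\theta$ enters the suppressed-bias estimator and verifying that it degrades \emph{both} the bias and the variance by only $\Ocal(\sigma)$, so that the single condition $\theta = \widetilde{\Ocal}(\sigma^4/(Ld^2G^2))$ suffices for both conclusions simultaneously. The suppressed-bias construction of \cite{van2021tomography} is delicate precisely because it decouples bias from phase-estimation precision, and one has to check that this decoupling survives the $\theta$-perturbation — in particular that the error term does not re-introduce an accuracy-dependent bias. A secondary subtlety is bookkeeping the $\ell_\infty$ versus $\ell_2$ norm conversions consistently (Jordan-type estimators most naturally control $\ell_\infty$ error on a grid, and $L$-smoothness is stated with respect to the dual norm), together with making sure that the factor $d$ in the final $\theta$-bound matches after converting between these norms and after the variance-reduction averaging. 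Everything else — the grid boundedness argument, the variance-reduction-by-averaging, the union bound — is routine.
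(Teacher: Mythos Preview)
Your step (2) contains a genuine gap: averaging $N = G^2/\sigma^2$ independent runs is \emph{not} $\widetilde{\Ocal}(1)$ queries. The $\widetilde{\Ocal}$ notation hides only polylogarithmic factors, and $G^2/\sigma^2$ is polynomial in the precision parameters --- in the downstream theorems $\sigma$ is chosen as small as $\varepsilon/(R\sqrt{d})$, so $G^2/\sigma^2 \sim dG^2R^2/\varepsilon^2$. Your claim that this ``is treated as $\poly$ of the suppressed parameters'' is a misreading of what $\widetilde{\Ocal}(1)$ means; with your approach the overall query counts in Theorems~\ref{thm:smooth} and \ref{thm:quantum-gradient-descent-PL-informal} would pick up an extra polynomial factor and no longer match the classical first-order rates.

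The paper avoids averaging entirely. The suppressed-bias result of \cite{van2021tomography} already delivers both the bias bound $\|\mathbb{E}[k]-g\|_\infty \le \delta$ \emph{and} a concentration bound $\Pr[\|k-g\|_\infty > \sigma'] \le \delta$, and since both $k$ and $g = \nabla f(y)/(3G)$ lie in $[-1/2,1/2]^d$ one has $\|k-g\|_\infty \le 1$ deterministically. Taking $\sigma'=\sigma/2$ and $\delta = 3\sigma^2/4$ gives directly
\[
\mathbb{E}[\|k-g\|_\infty^2] \;\le\; (\sigma/2)^2 + 1\cdot\Pr[\|k-g\|_\infty > \sigma/2] \;\le\; \sigma^2/4 + 3\sigma^2/4 = \sigma^2,
\]
and shrinking $\delta$ to $\Ocal(\sigma^2)$ costs only $\Ocal(\log(d/\sigma^2))$ copies of the phase state --- genuinely $\widetilde{\Ocal}(1)$. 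Your step (3) is also off: the $\sqrt{\theta d^3 G/(\rho r_1)}$ expression you borrow is from the \emph{nonsmooth} subgradient lemma (random smoothing point, separate failure probability $\rho$), which is the wrong mechanism here. In the smooth case the paper uses $L$-smoothness to bound $|\tilde f(y+rx)/(3Gr) - \langle \nabla f(y)/(3G), x\rangle| \le \theta/(3Gr) + Lr d/(6G)$, balances at $r \sim \sqrt{\theta/(Ld)}$, and feeds the resulting state-closeness $\Ocal(\sqrt{\theta L d}/G)$ into the tolerance $\Ocal(\sigma^2/\log(d/\sigma))$ demanded by the suppressed-bias theorem; it is this $\sigma^2$ (coming from $\delta$, not $\sigma'$) that produces the $\sigma^4$ in the final bound on $\theta$.
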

Equipped with the gradient estimation above, the quantum gradient descent algorithm iterates as:
\begin{equation} \label{eq:qgd}
    x_{t+1} = x_t - \eta g_{x_t}, \quad g_{x_t} \overset{\text{estimate}}{\sim} \text{Theorem}~\ref{thm:unbiased-gradient-noisy-oracle-informal} .   \tag{QGD}
\end{equation}
We also quantize mirror prox, a variant of mirror descent, to achieve similar rate for general $\ell_p$-spaces, similarly to the nonsmooth cases in Section~\ref{sec:black-box-results}. Quantum mirror prox iterates as: 
\begin{equation}\label{eq:QMP} \tag{QMP}
    \begin{aligned}
        \nabla \Phi (\bar{z}_{t+1} ) &= \nabla \Phi (x_t) - \eta g_{x_t} \\
        z_{t+1} &\in \argmin_{x \in \Xcal \cap \Pcal} D_{\Phi} (x, \bar{z}_{t+1}) 
    \end{aligned}
    \quad\text{and}\quad
    \begin{aligned}
        \nabla \Phi (\bar{x}_{t+1}) &=  \nabla \Phi (x_t) - \eta g_{z_t}  \\
        x_{t+1} &\in \argmin_{x \in \Xcal \cap \Pcal} D_{\Phi} (x, \bar{x}_{t+1}),
    \end{aligned}
\end{equation}
again with $g_{x_t}$ and $g_{z_t}$ estimated via Theorem~\ref{thm:unbiased-gradient-noisy-oracle-informal}.
We now show that the zeroth-order quantum gradient descent and quantum mirror prox match the first-order rates of corresponding classical algorithms, $\widetilde{\Ocal}(LR^2/\varepsilon)$, as follows.
\begin{theorem}[Zeroth-order smooth optimization, informal] \label{thm:smooth}
    Let $\mathcal{X} \subseteq \mathbb{R}^{d}$ be a closed convex set with nonempty interior. Suppose $f: \mathbb{R}^{d} \rightarrow \mathbb{R}$ is convex, $G$-Lipschitz, and $L$-smooth on $\Xcal$ with respect to a given $p$-norm $\| \cdot\|_p$ and set $\varepsilon \in (0,1)$. 
    There is a quantum algorithm $\Acal$ that solves \eqref{e:cvx_opt} using $\widetilde{\Ocal}(LR^2/\varepsilon)$ queries to a noisy zeroth-order binary oracle $U_f^{(\theta)}$ (Definition~\ref{def:binary-oracle}) such that:
    \begin{itemize}[leftmargin=*]
        \item Euclidean ($p = 2$): $\Acal$ is \ref{eq:qgd} (Theorem~\ref{thm:quantum-gradient-descent-cvx}); $R = \max_{f(x) \leq f(x_0)} \|x - x^\star \|_2$; $\theta = \Ocal_{1/\varepsilon, d} \left( {\varepsilon^4}/{ d^3} \right)$
        \item non-Euclidean ($p \neq 2$): $\Acal$ is \ref{eq:QMP} (Theorem~\ref{thm:mirror_prox}) and $R = \sqrt{D_{\Phi}(x^{\star}, x_1)}$; $\theta = \Ocal_{1/\varepsilon, d} \left( {\varepsilon^4}/{ d^4} \right)$.
        \end{itemize}
\end{theorem}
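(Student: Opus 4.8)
The plan is to reduce each of the two cases to a \emph{robust} (inexact-first-order-oracle) version of the corresponding classical convergence theorem, and then supply the quantum gradient estimator of Theorem~\ref{thm:unbiased-gradient-noisy-oracle-informal} as the oracle. Fix a parameter $\sigma>0$, to be chosen at the end. By Theorem~\ref{thm:unbiased-gradient-noisy-oracle-informal}, with $\widetilde{\Ocal}(1)$ queries to a $\theta$-approximate binary oracle with $\theta = \widetilde{\Ocal}(\sigma^4/(L d^2 G^2))$ we obtain, for any query point $y$, an estimate $g_y$ of $\nabla f(y)$ with $\|\mathbb{E}[g_y]-\nabla f(y)\|_\infty \le \sigma$ and $\mathbb{E}[\|g_y - \nabla f(y)\|_\infty^2]\le\sigma^2$; crucially this is a bias- \emph{and} variance-controlled first-order oracle evaluated at the true point $y$, so (unlike the nonsmooth analyses of Section~\ref{sec:black-box-results}) no randomized-smoothing perturbation enters and the analysis is a clean ``inexact/stochastic gradient'' argument. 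Passing to the norm $\|\cdot\|_q$ dual to $\|\cdot\|_p$ via $\|v\|_q\le d^{1/q}\|v\|_\infty$ turns these into a bias of at most $d^{1/q}\sigma$ and a second moment of at most $d^{2/q}\sigma^2$ in the relevant norm (a $\sqrt d$ factor in the Euclidean case $q=2$, worst-case a factor $d$ when $p=\infty$).

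For the Euclidean case (Theorem~\ref{thm:quantum-gradient-descent-cvx}) I would analyze \ref{eq:qgd} with step size $\eta\asymp 1/L$ using the $L$-smoothness descent lemma together with the usual distance recursion for $\|x_t-x^\star\|_2^2$; taking expectations, invoking convexity, and telescoping over $T$ steps gives $\mathbb{E}[f(\bar x_T)]-\min_{\Xcal}f \lesssim LR^2/T + (\text{bias})\cdot R + \eta\cdot(\text{variance})$, where the two error terms are polynomial in $\sqrt d\,\sigma$. A side induction shows that, once $\sigma$ is small enough, every iterate stays in the sublevel set $\{x:f(x)\le f(x_0)\}$, which is exactly what lets one bound $\|x_t-x^\star\|_2$ by the quantity $R=\max_{f(x)\le f(x_0)}\|x-x^\star\|_2$ appearing in the statement. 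Choosing $T=\Theta(LR^2/\varepsilon)$ and then $\sigma$ polynomially small in $\varepsilon$ and $1/d$ so that the error terms are $\Ocal(\varepsilon)$ forces $\theta = \Ocal_{1/\varepsilon,d}(\varepsilon^4/d^3)$; multiplying the $T$ iterations by the $\widetilde{\Ocal}(1)$ per-iteration query cost, and converting the in-expectation bound to an $\varepsilon$-solution of \eqref{e:cvx_opt} via Markov's inequality, yields the claimed $\widetilde{\Ocal}(LR^2/\varepsilon)$ query complexity.

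The non-Euclidean case (Theorem~\ref{thm:mirror_prox}) is parallel but runs on the extragradient/prox analysis of mirror prox: for an $L$-Lipschitz gradient field and a mirror map $\Phi$ that is $1$-strongly convex with respect to $\|\cdot\|_p$, the averaged iterate of \ref{eq:QMP} satisfies $\mathbb{E}[f(\bar z_T)]-\min_\Xcal f \lesssim L\,D_\Phi(x^\star,x_1)/T$ in the exact setting, matching $R=\sqrt{D_\Phi(x^\star,x_1)}$. Each of the two half-steps per iteration uses one $\widetilde{\Ocal}(1)$-query gradient estimate, and the extragradient structure absorbs the stochastic gradient errors into additive terms controlled by the dual-norm bias and second moment (handled through conditional expectations, so no union bound over iterations is needed); the extra $d^{1/q}$ (worst case $d$) from the norm conversion is what raises the accuracy requirement to $\theta=\Ocal_{1/\varepsilon,d}(\varepsilon^4/d^4)$. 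Choosing $T=\Theta(LR^2/\varepsilon)$ and $\sigma$ polynomially small then gives the $\widetilde{\Ocal}(LR^2/\varepsilon)$ bound as before.

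The step I expect to be the main obstacle is the robust convergence analysis proper. In the Euclidean case the delicate point is closing the induction that the iterates remain in the sublevel set while the accumulated bias and variance stay $\Ocal(\varepsilon)$ --- a mild chicken-and-egg resolved by committing to a sufficiently small $\sigma$ up front. For mirror prox one has to re-derive Nemirovski's single-step (extragradient) inequality with inexact gradients and track the $\sigma$-, $d$-, and logarithmic dependence tightly enough to land exactly on the stated $\theta$. A further, pervasive subtlety is that Theorem~\ref{thm:unbiased-gradient-noisy-oracle-informal} controls only the first two moments of the estimator, so the whole analysis must be carried out in expectation --- with Markov (or a median-of-means boost) used only at the very end to certify an $\varepsilon$-solution --- and the errors in different iterations must be tracked through conditional expectations / martingale bookkeeping rather than via per-iteration high-probability deviation bounds.
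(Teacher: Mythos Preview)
Your mirror prox sketch matches the paper's proof of Theorem~\ref{thm:mirror_prox} essentially step for step: the three-term extragradient decomposition, bounding each piece via Lemma~\ref{lem:bregman_under_proj} and the identity~\eqref{e: bregman_inner_prod}, taking conditional expectations so that the gradient error enters as an additive $\Ocal(\vartheta_* L K \sigma/\mu)$ term (with $K=\mathrm{diam}(\Xcal)$ playing a role alongside $R$), then averaging over $t$ and choosing $T\asymp LR^2/(\mu\varepsilon)$, $\sigma\asymp \varepsilon\mu/(\vartheta_* LK)$, and finally Markov.

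In the Euclidean case you take a genuinely different route from the paper. You propose the distance recursion on $\|x_t-x^\star\|_2^2$ with an averaged output; the paper instead outputs the \emph{last} iterate, runs a function-value recursion on $\delta_t := f(x_t)-f(x^\star)$, and---crucially---\emph{assumes $f(x^\star)$ is known}. From the biased-SGD descent lemma (Lemma~\ref{lem:descent-biasedSGD}) and the convexity bound $\delta_t \le \|\nabla f(x_t)\|_2\,R$, the paper imposes $\|b_t\|_2,\sigma_v \le \delta_t/(2R)$ to get $\mathbb{E}_t[\delta_{t+1}]\le \delta_t - \tfrac{1}{4L}(\delta_t/R)^2$, and then telescopes $1/\mathbb{E}[\delta_{t+1}]-1/\mathbb{E}[\delta_t]\ge 1/(4LR^2)$ to obtain $\mathbb{E}[\delta_T]=\Ocal(LR^2/T)$. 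Knowing $f(x^\star)$ is what closes the loop: the algorithm halts once it observes $\delta_t\le\varepsilon$, so on every active step $\delta_t\ge\varepsilon$ and the fixed choice $\sigma\asymp \varepsilon/(R\sqrt d)$ meets the bias/variance requirement, which is how $\theta = \Ocal_{1/\varepsilon,d}(\varepsilon^4/d^3)$ falls out.

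Your averaged-iterate route would avoid that extra assumption, but then the sublevel-set step you flag as ``a mild chicken-and-egg'' is precisely the hard part rather than a side remark. Lemma~\ref{lem:descent-biasedSGD} gives descent only in conditional expectation, so $f(x_{t+1})\le f(x_0)$ is not guaranteed pathwise; a single large realization of $g_{x_t}$ can push $x_{t+1}$ outside the sublevel set, breaking the bound $\|x_t-x^\star\|_2\le R$ for the specific $R=\max_{f(x)\le f(x_0)}\|x-x^\star\|_2$ in the statement. Shrinking $\sigma$ reduces the probability of this but does not, by itself, close the induction without an additional high-probability or bounded-support argument for the estimator. The paper's function-value recursion together with the $f(x^\star)$ assumption is exactly the device that sidesteps this issue; if you want to keep your distance-recursion variant, you would need to either supply that missing probabilistic control or replace the sublevel-set radius by $\mathrm{diam}(\Xcal)$.
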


The primary technical challenge in deriving the above result is that quantum gradient estimation algorithms are inexact and incur errors that depend on algorithmic parameters, which are themselves connected to the query complexity and bounds on the input precision. Our task therefore is to prove robust convergence theorems for all the settings considered here that accommodate errors of the form incurred by quantum gradient estimation. The convergence of gradient methods that use erroneous/inexact gradients has been well studied in optimization theory~\cite{devolder2014first,dvurechensky2016stochastic} but we were unable to find preexisting robustness results that suffice directly for our purposes.

\begin{remark}
As an illustration of the challenge, a recent paper~\cite{catli2025exponentially} considers the optimization of a smooth and strongly convex function with condition number $\kappa$ and obtain an oracle complexity of $(1/\varepsilon)^\kappa$ due to a multiplicative accumulation of errors, which is much slower than classical gradient descent with exact gradients. In order to recover convergence rates that match classical gradient descent, we use proofs based on the classical analysis of stochastic gradient descent with controlled bias (c.f., Lemma~\ref{lem:descent-biasedSGD}). 
As a result, our quantum gradient descent do not incur such multiplicative-error overheads and do in fact match the classical first-order rates. 
\end{remark}

\paragraph{Departing from convexity.}

Finally, we also analyze the quantum gradient descent for $L$-smooth functions that, instead of convexity, satisfy the \textit{Polyak-\L{}ojasiewicz inequality}:
\begin{equation} \tag{P\L{}}
    \| \nabla f(x) \|^2_2 \geq 2 \mu \left( f(x) - f(x^\star) \right),
\end{equation}
which we refer to as the $\mu$-P\L{} condition, following \cite{karimi2016linear}.
Note that the $\mu$-P\L{} condition is implied by, and hence weaker than, $\mu$-strong convexity \cite{karimi2016linear}.
In fact, there exist nonconvex functions that satisfy the $\mu$-P\L{} condition \cite{wang2022provable}.
Therefore, the result below can be applied to strongly convex functions and attain the same rate up to constant factors, showing that the zeroth-order quantum gradient descent is efficient for minimizing (structured) nonconvex functions as well as strongly convex functions.

\begin{restatable}[Informal version of Theorem~\ref{thm:quantum-gradient-descent-PL}]{theorem}{thmQGDplsmoothinformal} \label{thm:quantum-gradient-descent-PL-informal}
    Let $f:\R{d} \rightarrow \R{}$ be $G$-Lipschitz, $\mu$-P\L{}, and $L$-smooth. 
    The zeroth-order quantum gradient descent \eqref{eq:qgd} minimizes $f$ to accuracy $\varepsilon$ with high probability with $\widetilde{\Theta} \left( \kappa \right)$ queries to a $\theta$-approximate binary oracle of $f$ with $\theta = \Ocal_{1/\varepsilon, d} \left( {\varepsilon^2}/{ d^3} \right)$.
\end{restatable}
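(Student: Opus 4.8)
The plan is to reduce the claim to a robust convergence analysis of gradient descent run with the biased, noisy gradient estimates supplied by Theorem~\ref{thm:unbiased-gradient-noisy-oracle-informal}, following the classical template for minimizing an $L$-smooth function under the P\L{} inequality, but carried out in the ``controlled-bias'' stochastic setting of Lemma~\ref{lem:descent-biasedSGD}. Concretely, at iterate $x_t$ we invoke Theorem~\ref{thm:unbiased-gradient-noisy-oracle-informal} with a parameter $\sigma$ to be fixed, obtaining $g_{x_t}$ with $\mathbb{E}\big[\lVert g_{x_t} - \nabla f(x_t)\rVert_\infty^2\big] \le \sigma^2$ and $\lVert \mathbb{E}[g_{x_t}\mid x_t] - \nabla f(x_t)\rVert_\infty \le \sigma$, using $\widetilde{\Ocal}(1)$ queries to $U_f^{(\theta)}$ at accuracy $\theta = \widetilde{\Ocal}(\sigma^4/(L G^2 d^2))$. (If the estimator's internal randomized smoothing returns a gradient at a point $y'$ with $\lVert y'-x_t\rVert$ negligibly small rather than at $x_t$ itself, then $L$-smoothness makes $\lVert \nabla f(y')-\nabla f(x_t)\rVert \le L\lVert y'-x_t\rVert$, which we simply fold into the bias budget.)

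\textbf{Descent step.} Writing $g_{x_t} = \nabla f(x_t) + b_t + \xi_t$ with $b_t := \mathbb{E}[g_{x_t}\mid x_t] - \nabla f(x_t)$ and $\mathbb{E}[\xi_t\mid x_t]=0$, $L$-smoothness applied to $x_{t+1} = x_t - \eta g_{x_t}$ gives, after taking conditional expectation,
\begin{equation*}
\mathbb{E}[f(x_{t+1})\mid x_t] \le f(x_t) - \eta\big\langle \nabla f(x_t),\, \nabla f(x_t)+b_t\big\rangle + \tfrac{L\eta^2}{2}\,\mathbb{E}\big[\lVert g_{x_t}\rVert_2^2 \mid x_t\big].
\end{equation*}
Young's inequality bounds $-\eta\langle \nabla f(x_t), b_t\rangle \le \tfrac{\eta}{2}\lVert \nabla f(x_t)\rVert_2^2 + \tfrac{\eta}{2}\lVert b_t\rVert_2^2$, and converting the $\ell_\infty$ guarantees to $\ell_2$ gives $\lVert b_t\rVert_2^2 \le d\sigma^2$ together with $\mathbb{E}[\lVert g_{x_t}\rVert_2^2\mid x_t] \le 2\lVert \nabla f(x_t)\rVert_2^2 + 2d\sigma^2$. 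Choosing $\eta = \Theta(1/L)$ small enough that the positive $\lVert \nabla f(x_t)\rVert_2^2$ contributions are dominated, we get, for absolute constants $c_1,c_2>0$,
\begin{equation*}
\mathbb{E}[f(x_{t+1})\mid x_t] \le f(x_t) - \frac{c_1}{L}\lVert \nabla f(x_t)\rVert_2^2 + \frac{c_2\, d\sigma^2}{L}.
\end{equation*}

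\textbf{P\L{} and the recursion.} Applying $\lVert \nabla f(x_t)\rVert_2^2 \ge 2\mu\,(f(x_t)-f(x^\star))$ and subtracting $f(x^\star)$ yields, with $e_t := \mathbb{E}[f(x_t)-f(x^\star)]$ and $\kappa = L/\mu$,
\begin{equation*}
e_{t+1} \le \Big(1-\frac{2c_1}{\kappa}\Big)e_t + \frac{c_2\, d\sigma^2}{L},
\end{equation*}
so unrolling gives $e_T \le (1-2c_1/\kappa)^T e_0 + \Ocal(d\sigma^2/\mu)$. Taking $\sigma^2 = \Theta(\mu\varepsilon/d)$ makes the stationary term at most $\varepsilon/2$, after which $T = \Theta(\kappa\log(e_0/\varepsilon)) = \widetilde{\Theta}(\kappa)$ iterations force $e_T \le \varepsilon$; since each iteration costs $\widetilde{\Ocal}(1)$ oracle calls, the query count is $\widetilde{\Theta}(\kappa)$. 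Substituting this choice of $\sigma$ into the accuracy requirement of Theorem~\ref{thm:unbiased-gradient-noisy-oracle-informal} and suppressing the dependence on $L,\mu,G$ gives $\theta = \Ocal_{1/\varepsilon,d}(\varepsilon^2/d^3)$. To upgrade the in-expectation bound to the stated high-probability guarantee, either target $e_T \le \varepsilon\delta$ (costing only an extra $\log(1/\delta)$ factor in $T$) and apply Markov, or run $\Ocal(\log(1/\delta))$ independent copies and return the iterate of smallest objective value, which we identify with $\widetilde{\Ocal}(1)$ further queries to $U_f^{(\theta)}$ (legitimate since $\theta < \varepsilon$). The matching lower bound $\widetilde{\Omega}(\kappa)$, hence $\widetilde{\Theta}$, is inherited from the classical optimality of the unaccelerated gradient-descent rate on this function class.

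\textbf{Main obstacle.} The crux — and the reason a naive argument fails, as the $(1/\varepsilon)^\kappa$ bound of \cite{catli2025exponentially} illustrates — is preventing the per-step errors from compounding multiplicatively. This is precisely why we demand from Theorem~\ref{thm:unbiased-gradient-noisy-oracle-informal} a bias bound $\lVert b_t\rVert_\infty \le \sigma$ that is \emph{independent} of $\lVert \nabla f(x_t)\rVert$: Young's inequality then absorbs the bias cross-term into the genuine $-\lVert\nabla f(x_t)\rVert_2^2$ descent, leaving a recursion perturbed only \emph{additively} by $\Ocal(d\sigma^2/L)$, which contracts to an $\Ocal(d\sigma^2/\mu)$ floor rather than exploding geometrically. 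The remaining work — the $\ell_\infty\to\ell_2$ conversions costing $\poly(d)$ factors, the precise choice of $\eta$, and checking that the global $L$-smoothness and $G$-Lipschitz hypotheses persist along the (unconstrained) trajectory so that the oracle guarantees apply at every iterate — is routine.
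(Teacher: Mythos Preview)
Your proposal is correct and follows essentially the same approach as the paper: invoke the suppressed-bias gradient estimator of Theorem~\ref{thm:unbiased-gradient-noisy-oracle-informal}, apply the biased-SGD descent lemma (your Young's-inequality manipulation is the same content as the paper's Lemma~\ref{lem:descent-biasedSGD}), plug in the P\L{} inequality to get a contracting recursion with an additive $\Ocal(d\sigma^2/\mu)$ floor, choose $\sigma^2 = \Theta(\mu\varepsilon/d)$ and $T = \Theta(\kappa\log(1/\varepsilon))$, and finish with Markov. Two minor remarks: (i) your parenthetical about randomized smoothing is unnecessary here---that device appears in the \emph{sub}gradient estimator (Lemma~\ref{lem:quantum-subg-van-maintext}), whereas the smooth estimator in Theorem~\ref{thm:unbiased-gradient-noisy-oracle-informal} queries at the actual iterate; (ii) the paper's $\widetilde{\Theta}(\kappa)$ denotes the query count of the algorithm itself rather than an optimality claim, so no lower-bound argument is needed.
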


\section{Conclusion and discussions}
In this work, we studied the zeroth-order convex optimization with quantum gradient methods. We demonstrated that, using only a noisy evaluation oracle, quantum gradient methods can match the first-order query complexities of the corresponding classical methods. These results exhibit exponential separations between quantum and classical zeroth-order convex optimization. In particular, we analyzed our framework both in the $\ell_2$-space (quantum (sub)gradient methods) as well as the $\ell_p$-space (quantum mirror descent methods). We then applied the quantum mirror descent framework to develop quantum algorithms for solving SDPs, LPs, and zero-sum games, which achieve the state-of-the-art complexity in some regimes. In short, this work provides improved quantum algorithms for zeroth-order convex optimization, both in black- and white-box settings. 

\textit{Limitations and future work.} The main limitation is that the practical performance of the analyzed algorithms cannot be experimentally studied due to the lack of sufficiently powerful quantum hardware at the moment. Theoretically, our results generally match the first-order query complexities of classical gradient descent. For convex optimization, however, there exists faster (and in fact optimal) gradient methods due to Nesterov \cite{nesterov_acceleration, nesterov2018lectures}, especially in the smooth setting. An interesting future work is whether the zeroth-order quantum gradient methods can match the accelerated convergence rates of the classical first-order methods. Moreover, the precision requirements are stringent to achieve the presented query complexities, and thus alleviating them can be an interesting future direction.

\section*{Acknowledgements}
The authors are grateful to Sander Gribling for pointing out a bug in the evaluation oracle used for the LP solver in the first arXiv version of this paper, which we fixed. 

\bibliographystyle{alpha}
\bibliography{sdpbib}

\section*{Disclaimer}
This paper was prepared for informational purposes by the Global Technology Applied Research center of JPMorgan Chase \& Co. This paper is not a product of the Research Department of JPMorgan Chase \& Co. or its affiliates. Neither JPMorgan Chase \& Co. nor any of its affiliates makes any explicit or implied representation or warranty and none of them accept any liability in connection with this paper, including, without limitation, with respect to the completeness, accuracy, or reliability of the information contained herein and the potential legal, compliance, tax, or accounting effects thereof. This document is not intended as investment research or investment advice, or as a recommendation, offer, or solicitation for the purchase or sale of any security, financial instrument, financial product or service, or to be used in any way for evaluating the merits of participating in any transaction.

\newpage

\appendix
\noindent\makebox[\linewidth]{\rule{\linewidth}{1.5pt}}
\section*{\centering \large Technical Appendices and Supplementary Material for\\ ``Fast Zeroth-Order Convex Optimization \\ with Quantum Gradient Methods''}
\noindent\makebox[\linewidth]{\rule{\linewidth}{1.5pt}}

In this appendix, we provide more thorough preliminaries and background information that supplement the main text. We also state the formal version of the informal theorems from the main text, as well as their proofs. This appendix is organized as follows:
\begin{itemize}
    \item In Section~\ref{s:prelim}, we establish notation and the class of functions we study in this work. We also briefly review the quantum protocols for (sub)gradient estimation. 
    \item In Section~\ref{s:grad}, we analyze the efficiency of zeroth-order quantum (sub)gradient methods instantiated with quantum (sub)gradient applied to black-box convex optimization. We consider three classes of functions $f$ that are well-behaved with respect to the Euclidean norm: \textit{i)} convex and $G$-Lipschitz; \textit{ii)} convex and $L$-smooth (i.e., gradients are $L$-Lipschitz); and \textit{iii)} $L$-smooth and $\mu$-PL, which also implies $\mu$-strong convexity. 
    \item In Section~\ref{s:mirror}, we study the zeroth-order quantum mirror descent and its variants, including quantum dual averaging and quantum mirror prox. This extends the quantum (sub)gradient methods from the previous section to non-Euclidean settings ($\ell_p$-space).
    \item In Section~\ref{s:SDP}, we leverage a connection between eigenvalue optimization and semidefinite programming to apply the quantum mirror descent framework to convex optimization in white-box settings. We analyze the complexity of this approach applied to solving semidefinite programs (SDPs), linear programs (LPs), and zero-sum games, and identify parameter regimes for which our algorithm outperforms the best-known algorithms for solving these problems. 
\end{itemize}

\section{Preliminaries}\label{s:prelim}

\subsection{Notation}
We let $[d] = \{ 1, \dots, d\}$. For a finite-dimensional space $\Ecal$, we denote its dual space by $\Ecal^*$. If $\Ecal \subset \R{d}$ is equipped with an arbitrary inner product $\inner{\cdot, \cdot}$ and norm $\| \cdot \|$, the \textit{dual norm} $\| \cdot \|_*$ is 
$$ \| g \|_* := \sup_{\{ x \in \R{d} : \| x \| \leq 1 \}} \inner{g,x}.$$
From H\"older's inequality, one has
$$ \inner{x, y} \leq \| x \| \| y \|_* \quad \forall (x,y) \in \R{d} \times \R{d}.$$

We let $\norm{\cdot}_p$ denote the usual $\ell_p$-norm on $\mathbb{R}^d$ for $p \in [1,\infty]$:
\begin{equation*}
    \norm{v}_p \coloneqq \left(\sum_{i=1}^d \abs{v_i}^p\right)^{1/p}.
\end{equation*}
At times, we also work with Schatten norms, which correspond to the $\ell_p$-norms of the singular values of Hermitian operators in a given Hilbert space. For example, the operator norm $\norm{\cdot}_{\text{op}}$ is the Schatten-$\infty$ norm, and the trace norm $\| \cdot\|_{\trace}$ is the Schatten-1 norm. Note that for $\ell_p$-norms, the dual norm is $\| \cdot\|_q$ where 
$$ 1/p + 1/q = 1$$
Hence, the $\ell_2$-norm is self-dual, i.e., setting $\| \cdot \| = \| \cdot\|_{2}$, one has $\| \cdot \|_* = \| \cdot\|_{2}$.
We define the (closed) $\ell_p$-ball of radius $R$ centered at $c \in \R{d}$ as $\Bcal_p^d (c, R) := \{ x \in \R{n} : \| x - c \|_p \leq R \}$. 

\subsection{Convexity and smoothness}
This work is concerned with convex functions $f: \Rmbb^n \rightarrow [-\infty, \infty]$ (taking extended-real values) which are \emph{proper}, meaning $f(x) > -\infty$ for all $x$ and $f(y) < \infty$ for some $y$. For such functions a \textit{subgradient} may be defined.
\begin{definition}[Subgradient]
    Let $f : \Rmbb^n \rightarrow (-\infty, \infty]$ be a proper convex function and let $x \in \dom (f)$. We say that $g \in \Xcal^*$ is a subgradient of $f$ at $x$ if 
    \begin{equation}\label{e:subgradient}
        f(\bar{x}) \geq f(x) + \inner{ g,  \bar{x}-x} \quad \forall \bar{x} \in \Xcal.
    \end{equation} 
\end{definition}
\noindent Equation~\eqref{e:subgradient} defines the \textit{subgradient inequality}, which asserts that the first-order approximation of a convex function $f$ serves as a global lower bound. Subgradients extend the concept of differentiability to functions that are nondifferentiable: when $f$ is differentiable, the subgradient is simply the gradient. The set of all subgradients is called the \textit{subdifferential} of $f$ at $x$, which we denote by $\partial f(x)$.

A function $f: \Xcal \rightarrow \R{}$ is \textit{$G$-Lipschitz} with respect to $\| \cdot\|$ on $\Xcal$ if 
$$\| g\|_* \leq G, \quad \forall x \in \Xcal, g \in \partial f(x).$$
We say that $f$ is \textit{$L$-smooth} with respect to $\| \cdot\|$ if the gradients $\nabla f$ are $L$-Lipschitz continuous:
$$ \left\|\nabla f (x) - \nabla f (\bar{x}) \right\|_* \leq L \left\| x - \bar{x} \right\| \quad \forall (x, \bar{x}) \in \Xcal \times \Xcal.$$
Note that any twice-continuously differentiable function is $L$-smooth with respect to $\| \cdot \|$ on $\R{d}$ if and only if 
$$\langle \nabla^2 f(x) z, z \rangle \leq L \| z \|^2~\quad \text{for all}~ x, z \in \mathbb{R}^d.$$
Smoothness assumptions provide a global quadratic upper bound on the function around a point $x$:
\begin{equation*} 
    f(\bar{x}) \leq f(x) +  \inner{\nabla f(x), \bar{x} - x}  + \frac{L}{2} \left\| \bar{x} - x \right\|^2.
\end{equation*}
A function $f$ is $\mu$-\textit{strongly convex} on $\mathcal{X}$ if there exists a $\mu > 0$ such that 
\begin{equation*}
    f(\bar{x}) \geq f(x) + \langle \nabla f(x), \bar{x} - x \rangle + \frac{\mu}{2} \|\bar{x} - x\|^2.
\end{equation*}

Collecting these facts, when $\| \cdot \| = \| \cdot \|_2$ is the Euclidean norm, smoothness implies 
$$\nabla^2 f(x) \preceq L I$$ 
and strong convexity implies
\begin{equation*} 
    \nabla^2 f(x) \succeq \mu I
\end{equation*}
for all $x \in \mathcal{X}$. 
The ratio $\kappa = \frac{L}{\mu}$ is called the \textit{condition number} of $f$. 

The \textit{Fenchel conjugate} $f^*: \R{d} \rightarrow \cl (\R{})$ of a function $f$ is defined by 
$$ f^* (y) := \sup_{x \in \R{d}} \left\{ \inner{ x, y} - f(x) \right\} \quad y \in \R{d}.$$
The conjugate function is defined as the point-wise supremum of affine functions of $y$, and is therefore convex by definition. When $f$ is convex and differentiable, $f^*$ is the \textit{Legendre transform} of $f$.

The conjugate function admits a useful identity: the inverse operator for the gradient of $f$ is the gradient of the conjugate:
$$ \nabla f^* = (\nabla f)^{-1}.$$

\subsection{Quantum gradient estimation}

\paragraph{Gradients with suppressed bias.}
We will use the ``low-bias" version of Jordan's protocol for quantum gradient estimation restated below, which were introduced by~\cite{van2021tomography} in the context of state tomography. Such protocols allow for the expectation value of the gradient estimate to be made arbitrarily close to the true value, independent of the protocol accuracy itself. 
\begin{restatable}[Suppressed-bias gradient estimation ({\cite[Theorem 31]{van2021tomography}})]{theorem}{thmsuppbiasgradest}
\label{thm:unbiased-gradient}
    Let $\sigma',\delta \in \big(0,\frac{1}{6}\big]$ and $g \in \R{d}$ such that $\lVert g \rVert_\infty \le \frac{1}{3}$. Let $b \coloneqq \lceil \log_2 \left(\frac{2}{\sigma'}\right) \rceil$ and $B \coloneqq 2^b$. If $$\norm{ |\psi\rangle - \frac{1}{\sqrt{B^d}}\sum_{x \in \Gcal_b^d} \exp\left(2\pi i \langle g, x\rangle \right)|x\rangle}_2 \le \frac{\delta}{24\lceil \ln(6d/\delta)\rceil + 3},$$ given access to $8 \lceil \ln(6d/\delta) \rceil + 1$ copies of $|\psi\rangle$, we can compute $k \in \left[-\frac{1}{2},\frac{1}{2}\right]^d$ satisfying
    \begin{align*}
        \mathrm{Pr}[\lVert k - g \rVert_\infty > \sigma'] \le \delta \quad\text{ and }\quad & \lVert \mathbb{E}[k] - g \rVert_\infty \le \delta.
    \end{align*}
The procedure has a gate complexity of $\mathcal{O}\left(d \log(\frac{d}{\delta})\log(\frac{1}{\sigma'})\log(\frac{d}{\delta}\log(\frac{1}{\sigma'}))\right)$ and requires a corresponding circuit depth of $\mathcal{O}\left(\log(\frac{1}{\sigma'})\log(\frac{d}{\delta}\log(\frac{1}{\sigma'}))\right)$. 
\end{restatable}

We utilize the suppressed-bias gradient estimation subroutine above, which we modify such that the variance is also controlled with noisy evaluation oracles. We start with the following corollary:

\begin{restatable}[Suppressed-bias, low-variance gradient estimation]{corollary}{corsuppbiaslowvargradest}
\label{cor:unbiased-gradient-low-variance}
    Let $\sigma \in \big(0,\frac{1}{3}\big]$ and $g \in \R{d}$ such that $\lVert g \rVert_\infty \le \frac{1}{3}$. Let $b \coloneqq \lceil \log_2 \left(4/\sigma\right)\rceil$ and $B \coloneqq 2^b$.
    If $$\left\lVert |\psi\rangle - \frac{1}{\sqrt{B^d}}\sum_{x \in \Gcal_b^d} \exp\left(2\pi i \langle g, x\rangle \right)|x\rangle \right\rVert_2 \le \frac{\sigma^2}{32\lceil \ln(8d/\sigma^2) \rceil + 4},$$ given access to $8 \lceil \ln(8d/\sigma^2) \rceil + 1$ copies of $|\psi\rangle$, we can compute $k \in \left[-\frac{1}{2},\frac{1}{2}\right]^d$ satisfying
    \begin{align*}
        &\mathbb{E}[\lVert k - g \rVert_\infty^2] \le \sigma^2 \quad\text{ and }\quad \lVert \mathbb{E}[k] - g \rVert_\infty \le \sigma.
    \end{align*}
The procedure has a gate complexity $\mathcal{O}\left(d \log^2(\sqrt{d}/\sigma)\log(1/\sigma)\right)$ and requires a corresponding circuit depth of $\mathcal{O}\left(\log(\sqrt{d}/\sigma)\log(1/\sigma)\right)$.
\end{restatable}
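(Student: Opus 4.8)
The plan is to derive Corollary~\ref{cor:unbiased-gradient-low-variance} from Theorem~\ref{thm:unbiased-gradient} (the suppressed-bias estimator of \cite{van2021tomography}) by a bootstrapping argument that converts the high-probability $\ell_\infty$ guarantee together with the deterministic bound $k \in [-\frac12,\frac12]^d$ and $\|g\|_\infty \le \frac13$ into a second-moment bound $\mathbb{E}[\|k-g\|_\infty^2] \le \sigma^2$. The idea is to invoke the theorem with carefully chosen internal parameters $\sigma'$ and $\delta$ (as functions of the target $\sigma$), so that on the ``good'' event of probability $\ge 1-\delta$ the error contributes at most $(\sigma')^2$ to the expected squared norm, while on the ``bad'' event of probability $\le \delta$ the error is bounded deterministically by $\|k-g\|_\infty \le \|k\|_\infty + \|g\|_\infty \le \frac12 + \frac13 < 1$, so it contributes at most $\delta \cdot 1$. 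Choosing, say, $\sigma' = \sigma/\sqrt{2}$ and $\delta = \sigma^2/2$ then yields $\mathbb{E}[\|k-g\|_\infty^2] \le (\sigma')^2 (1-\delta) + \delta \le \sigma^2/2 + \sigma^2/2 = \sigma^2$; the bias bound $\|\mathbb{E}[k]-g\|_\infty \le \delta \le \sigma$ follows immediately (and is even stronger than needed) since $\delta \le \sigma^2/2 \le \sigma$ for $\sigma \le 1/3$.

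The steps, in order, would be: (i) set $\sigma' \coloneqq \sigma/\sqrt{2}$ and $\delta \coloneqq \sigma^2/2$, and verify these lie in $(0,\frac16]$ given $\sigma \in (0,\frac13]$ (indeed $\sigma' \le \frac{1}{3\sqrt2} < \frac16$ and $\delta \le \frac{1}{18} < \frac16$), so Theorem~\ref{thm:unbiased-gradient} applies; (ii) translate the parameters $b, B$: the theorem uses $b = \lceil \log_2(2/\sigma')\rceil = \lceil \log_2(2\sqrt2/\sigma)\rceil$, and I would check this is consistent with (or at most one larger than) the corollary's declared $b = \lceil \log_2(4/\sigma)\rceil$, adjusting constants in the state-preparation accuracy if necessary — this is just bookkeeping on ceilings; (iii) translate the state-preparation precision requirement: the theorem needs $\||\psi\rangle - \cdots\|_2 \le \frac{\delta}{24\lceil\ln(6d/\delta)\rceil+3}$, and substituting $\delta = \sigma^2/2$ gives $\frac{\sigma^2/2}{24\lceil\ln(12d/\sigma^2)\rceil+3} = \frac{\sigma^2}{48\lceil\ln(12d/\sigma^2)\rceil+6}$, which I would show is implied by (or differs by only harmless constants from) the corollary's stated bound $\frac{\sigma^2}{32\lceil\ln(8d/\sigma^2)\rceil+4}$; similarly the number of copies $8\lceil\ln(6d/\delta)\rceil+1 = 8\lceil\ln(12d/\sigma^2)\rceil+1$ matches $8\lceil\ln(8d/\sigma^2)\rceil+1$ up to the constant inside the log; (iv) carry out the two-event decomposition above to get the second-moment and bias bounds; (v) read off the gate complexity and circuit depth by substituting $\sigma' = \Theta(\sigma)$ and $\delta = \Theta(\sigma^2)$ into the theorem's $\mathcal{O}(d\log(d/\delta)\log(1/\sigma')\log(\cdots))$ expression and simplifying $\log(d/\delta) = \log(2d/\sigma^2) = \Theta(\log(\sqrt d/\sigma))$, which collapses to $\mathcal{O}(d\log^2(\sqrt d/\sigma)\log(1/\sigma))$ and depth $\mathcal{O}(\log(\sqrt d/\sigma)\log(1/\sigma))$.

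I expect the main (though still routine) obstacle to be reconciling the exact constants in the logarithms and ceilings: the corollary states clean expressions like $\lceil\ln(8d/\sigma^2)\rceil$ and $32\lceil\cdots\rceil+4$, whereas the mechanical substitution $\delta = \sigma^2/2$ produces $\lceil\ln(12d/\sigma^2)\rceil$ and $48\lceil\cdots\rceil+6$. Making the statement come out exactly as written may require a slightly different choice of the constant in $\delta$ (e.g.\ $\delta = \sigma^2/3$ or absorbing factors into $\sigma'$), or a short monotonicity argument that a stronger precision bound implies a weaker one, i.e.\ that $\frac{\sigma^2}{32\lceil\ln(8d/\sigma^2)\rceil+4} \le \frac{\delta}{24\lceil\ln(6d/\delta)\rceil+3}$ for the chosen $\delta$ and all $d \ge 1$, $\sigma \in (0,\frac13]$. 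None of this is conceptually hard; it is just a matter of picking the internal parameters so the arithmetic closes cleanly, and then the probabilistic second-moment argument is a two-line computation.
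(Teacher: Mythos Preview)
Your approach is essentially the same as the paper's: invoke Theorem~\ref{thm:unbiased-gradient} with suitably chosen $\sigma'$ and $\delta$, then use the two-event decomposition (good event contributes $(\sigma')^2$, bad event contributes at most $\delta$ via the deterministic bound $\|k-g\|_\infty<1$) to get the second-moment bound. The only difference is the specific constants: the paper takes $\sigma'=\sigma/2$ and $\delta=3\sigma^2/4$, which is precisely the choice that makes all the stated constants in the corollary (the value of $b$, the $8d/\sigma^2$ inside the logarithm, and the $32\lceil\cdots\rceil+4$ in the denominator) come out exactly, since $6d/\delta=8d/\sigma^2$ and $\delta/(24\lceil\cdots\rceil+3)=\sigma^2/(32\lceil\cdots\rceil+4)$; your anticipated constant-juggling is thus resolved by this particular choice rather than by a monotonicity argument.
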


\begin{proof}
    Define $\sigma \in \big(0,\frac{1}{3}\big]$. We use the procedure in Theorem~\ref{thm:unbiased-gradient} with parameters $\sigma' = \sigma/2$, $\delta = 3\sigma^2/4$.
    The number of copies, gate complexity, and circuit depth follow by direct substitution with minor simplifications. It remains to calculate the variance of $k - g$. Noting that by assumption and the guarantee on $k$, it holds that $\lVert k - g \rVert_\infty < 1$, we observe
    \begin{align*}
        \mathbb{E}[\lVert k - g \rVert_\infty^2]  &\le \frac{\sigma^2}{4} +  \mathrm{Pr}[\lVert k - g \rVert_\infty > \sigma/2] \le \sigma^2,
    \end{align*}
    where the probability is bounded due to the guarantees of Theorem~\ref{thm:unbiased-gradient}. This completes the proof.
\end{proof}

Using the above corollary, we can estimate the gradient of smooth functions with noisy oracles. The quantum gradient descent algorithm we analyze in Theorems~\ref{thm:quantum-gradient-descent-PL} and \ref{thm:quantum-gradient-descent-cvx} will therefore invoke the following result at each iterate.    
In the Euclidean case in Section~\ref{s:grad}, note that $\vartheta = \vartheta_* = \sqrt{d}$ in the below statement.

\begin{theorem}[Suppressed-Bias Gradient Estimation of Smooth Functions with Noisy Oracles] \label{thm:unbiased-gradient-noisy-oracle}
    Let $f \colon \R{d} \to \R{}$ be a $G$-Lipschitz function with $L$-Lipschitz gradients in a $p\geq 1$ norm $\lVert \cdot \rVert$, and let $\theta,\sigma$ be real positive parameters, $\sigma/G \le 1$. 
    Let $1 \leq \vartheta, \vartheta_* \leq d$ be such that 
    \begin{align*}
    &\lVert x \rVert \leq \vartheta\lVert x \rVert_{\infty}, \forall x \in \mathbb{R}^d\\
    &\lVert x \rVert_* \leq \vartheta_*\lVert x \rVert_{\infty}, \forall x \in \mathbb{R}^d,
    \end{align*}
    satisfying $\vartheta  \vartheta_* \leq d$.
    Define $r \coloneqq  \frac{\sqrt{2\theta}}{\sqrt{L}\vartheta}$, and let $\tilde{f}$ be an $\theta$-approximation for $f$ in $\Bcal^d_\infty(y,r)$ for some $y \in \R{d}$. 
    Then, if 
    \begin{align*}
        \theta = \mathcal{O}\left(\frac{\sigma^4} {L\vartheta^2G^2\log^2\big(dG^2/\sigma^2\big)}\right),
    \end{align*}
    there exists a procedure that outputs $k$ that satisfies
    \begin{align*}
        &\mathbb{E}[\lVert k - \nabla f(y) \rVert_*^2] \le (\vartheta_*\sigma)^2 ~~~~\text{ and }~~~~  \lVert \mathbb{E}[k] - \nabla f(y) \rVert_* \le \vartheta_*\sigma,
    \end{align*}
    using $8 \lceil \ln(72dG^2/\sigma^2) \rceil + 1$ calls to the unitary $U_{\tilde{f}} = \sum_{x \in \Gcal_b^d} \exp\left(\frac{2\pi i \tilde{f}(y + rx)}{3Gr}\right) \ket{x}\bra{x}$. The procedure has a gate complexity $\mathcal{O}\left(d \log^2(\sqrt{d}G/\sigma)\log(G/\sigma)\right)$ and circuit depth of $\mathcal{O}\left(\log(\sqrt{d}G/\sigma)\log(G/\sigma)\right)$. The unitary $U_{\tilde{f}}$ can be implemented using $\mathcal{O}(1)$ queries to a binary oracle for $\tilde{f}$, or $\mathcal{O}\left( R_0\sqrt{dL}/\sqrt{\theta}\right)$ to a $(GR_0)$-phase oracle where $R_0$ is a length scale chosen to ensure that for all $x \in \Bcal_\infty^d(y,r), f(x) \le GR_0$.
\end{theorem}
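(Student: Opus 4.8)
The plan is to combine the suppressed-bias, low-variance gradient estimation routine of Corollary~\ref{cor:unbiased-gradient-low-variance} with a careful quantization of Jordan's construction, where the phase oracle encodes a \emph{finite-difference} approximation of $\nabla f$ at the scale $r$. First I would reduce to the $\ell_\infty$-normalized setting: set $h(x) := \tilde f(y + rx)/(3Gr)$ so that the gradient of the underlying smooth function at $y$, rescaled, has $\ell_\infty$-norm at most $1/3$ (this uses $G$-Lipschitzness of $f$, hence $\|\nabla f(y)\|_* \le G$, together with the inequality $\|\cdot\|_\infty \le \|\cdot\|_*$ and the normalization factor $3G$). The unitary $U_{\tilde f}$ then prepares, via one call to a binary oracle for $\tilde f$ and a standard phase-kickback, a state close to $\frac{1}{\sqrt{B^d}}\sum_{x\in\Gcal_b^d}\exp(2\pi i \langle g, x\rangle)\ket{x}$ with $g$ the rescaled gradient of a smoothed surrogate.

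The key steps, in order: (i) bound the deviation between the ideal phase $\frac{2\pi i}{3Gr}\,r\langle \nabla f(y), x\rangle$ and the actual phase $\frac{2\pi i}{3Gr}\tilde f(y+rx)$. This deviation has two sources — the $L$-smoothness error from replacing $\tilde f(y+rx)$ by its first-order Taylor expansion at $y$, which is $O(L r^2 \|x\|^2) \le O(Lr^2\vartheta^2\|x\|_\infty^2)$ pointwise on the grid, and the oracle error $\theta$ — and the choice $r = \sqrt{2\theta}/(\sqrt L\,\vartheta)$ is precisely what equalizes these two contributions so that the total phase error is $O(\theta/(Gr)) = O(\sqrt{\theta L}\,\vartheta/G)$ per grid point. (ii) Convert this per-point phase error into an $\ell_2$-distance bound between $\ket\psi$ (the state actually prepared) and the ideal state; since the grid has $B^d$ points and the error must be compared against the tolerance $\sigma^2/(32\lceil\ln(8d/\sigma^2)\rceil+4)$ demanded by Corollary~\ref{cor:unbiased-gradient-low-variance}, this is where the stated scaling $\theta = O\big(\sigma^4/(L\vartheta^2 G^2\log^2(dG^2/\sigma^2))\big)$ comes from — solving for $\theta$ so the phase error times the relevant polylog factor is below tolerance, being careful that $B = 2^b = \Theta(1/\sigma)$ enters only polylogarithmically because a uniform phase error of $\epsilon$ on all amplitudes of a normalized state produces an $\ell_2$ deviation of $O(\epsilon)$, not $O(\epsilon\sqrt{B^d})$. (iii) Apply Corollary~\ref{cor:unbiased-gradient-low-variance} to obtain $k'$ with $\mathbb{E}[\|k'-g\|_\infty^2]\le\sigma^2$ and $\|\mathbb{E}[k']-g\|_\infty\le\sigma$ in the rescaled coordinates, then undo the rescaling (multiply by $3G$) to get an estimate of $\nabla f(y)$ in $\ell_\infty$-norm, and finally convert the $\ell_\infty$ guarantees to $\ell_*$-norm guarantees using $\|\cdot\|_*\le\vartheta_*\|\cdot\|_\infty$, which yields exactly the claimed $\mathbb{E}[\|k-\nabla f(y)\|_*^2]\le(\vartheta_*\sigma)^2$ and $\|\mathbb{E}[k]-\nabla f(y)\|_*\le\vartheta_*\sigma$ (absorbing the factor $3G$ into $\sigma$ by the bookkeeping $\sigma/G\le 1$). (iv) Track the query count ($8\lceil\ln(72dG^2/\sigma^2)\rceil+1$, matching the copy count of Corollary~\ref{cor:unbiased-gradient-low-variance} after substituting $\sigma\mapsto\sigma/(9G)$ or similar into the logarithm), gate complexity, and circuit depth, all inherited directly; and for the phase-oracle implementation, note that converting a $(GR_0)$-phase oracle on the original scale to the grid-scale phase oracle $U_{\tilde f}$ costs $O(R_0\sqrt{dL}/\sqrt\theta) = O(R_0/r \cdot \sqrt{d}/\vartheta)$ fractional-phase queries by the standard phase-oracle arithmetic, using that $f(x)\le GR_0$ on $\Bcal_\infty^d(y,r)$.

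The main obstacle I expect is Step~(ii): getting the dependence of $\theta$ on $\sigma$, $L$, $\vartheta$, $G$ exactly right, which requires being precise about \emph{where} the grid size $B^d$ does and does not blow up the error. The subtlety is that $\ket\psi$ need only be close to the ideal Jordan state in $\ell_2$-norm, and a bounded \emph{phase} perturbation $e^{i\phi(x)}$ with $|\phi(x)|\le\epsilon$ on each of the $B^d$ equal-weight amplitudes gives $\||\psi_{\mathrm{ideal}}\rangle - |\psi\rangle\|_2 \le \epsilon$ (not $\epsilon\sqrt{B^d}$), because $\big(\frac{1}{B^d}\sum_x |e^{i\phi(x)}-1|^2\big)^{1/2}\le\epsilon$; the only place $d$ enters polynomially (rather than polylogarithmically) is through $\|x\|_\infty^2$ being as large as $\Theta(1)$ on the grid and through the $\vartheta^2$ factor converting $\|x\|^2$ to $\|x\|_\infty^2$, which is why $\vartheta^2$ appears in the denominator of $\theta$ but $d$ itself does not. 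I would also need to double-check that the smoothed surrogate whose gradient Jordan's algorithm actually returns agrees with $\nabla f(y)$ up to $O(Lr)$ (a consequence of $L$-smoothness and symmetry of the grid $\Gcal_b^d$ about the origin), so that this bias is dominated by $\sigma$ under the stated choice of $r$ — this is the one place where $L$-smoothness (rather than mere Lipschitzness) is essential.
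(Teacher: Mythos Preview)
Your proposal is correct and matches the paper's proof: bound the per-point phase error as oracle noise plus Taylor remainder (balanced by the choice of $r$), convert to an $\ell_2$ state distance via $|e^{ia}-e^{ib}|\le|a-b|$, apply Corollary~\ref{cor:unbiased-gradient-low-variance} with $\sigma\to\sigma/3G$, rescale by $3G$, and pass from $\ell_\infty$ to $\|\cdot\|_*$ via $\vartheta_*$. One clarification: your concern about a ``smoothed surrogate'' is unnecessary in the smooth case --- unlike the nonsmooth subgradient setting of Lemma~\ref{lem:quantum-subg-van}, no randomized smoothing is used here, and the ideal Jordan state already encodes $\nabla f(y)/3G$ exactly, so the $O(Lr)$ bias you flag is entirely absorbed into the phase-error bound of step~(ii) and needs no separate treatment.
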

\begin{proof}
First, note that by assumption for arbitrary $p$-norm $\lVert \cdot \rVert$ with $p \geq 1$, we have
\begin{align*}
&\lvert f(x) - f(y) \rvert \leq G \lVert x - y \rVert \leq G\vartheta\lVert x -y \rVert_{\infty},
\end{align*}
and
\begin{align*}
\lvert f(x) - f(y) - \langle \nabla f(x), y -x \rangle \rvert &\leq  \frac{L}{2}\lVert x - y\rVert^2 \leq  \frac{\vartheta^2L}{2}\lVert x - y\rVert_{\infty}^2.
\end{align*}

It suffices to show how to compute $\nabla f(0)$ by a simple shift of co-ordinates, and by a similar argument we assume without loss of generality that $f(0) = 0$. First, note that since $\widetilde{f}$ is the $\theta$-approximate version of $f$ and the gradient $\nabla f$ is   $L$-Lipschitzness,  we have  for all $x \in \Gcal_b^d$:
\begin{align}
\left | \frac{\tilde{f}(rx)}{3Gr} - \frac{f(rx)}{3Gr} + \frac{f(rx)}{3Gr} -  \left\langle \frac{\nabla f(0)}{3G}, x\right\rangle \right| 
&\leq \left | \frac{\tilde{f}(rx)}{3Gr} - \frac{f(rx)}{3Gr} \right| + \left|  \frac{f(rx)}{3Gr} -  \left\langle \frac{\nabla f(0)}{3G}, x\right\rangle \right| \nonumber \\
&\leq   \frac{\theta}{3Gr} + \frac{Lr\vartheta^2}{6G} \nonumber \\
&\leq \frac{\sqrt{2\theta L}\vartheta}{3G}. \label{eq:supp-bias-linear}
\end{align}

Now, let $\ket\psi \coloneqq \frac{1}{\sqrt {B^d}}\sum_{x \in \Gcal_b^d} \exp\left(\frac{2\pi i \tilde{f}(rx)}{3Gr}\right) \ket{x}$. Using the above estimates and the inequality $|e^{ia} - e^{ib}| \leq |a - b|$, it follows that
\begin{align*}
    \left\lVert |\psi\rangle - \frac{1}{\sqrt {B^d}}\sum_{x \in \Gcal_b^d} \exp\left(2\pi i \left\langle \frac{\nabla f(0)}{3G}, x\right\rangle \right) \ket{x}\right\rVert_2 &\le 2\pi  \max_{x \in \Gcal_b^d} \left\lvert \frac{\tilde{f}(rx)}{3Gr} - \left\langle \frac{\nabla f(0)}{3G}, x\right\rangle \right\rvert \\
    \overset{Eq. \eqref{eq:supp-bias-linear}}&{\leq} 2\pi \frac{\sqrt{2\theta L}\vartheta}{3G} \\
    &\le \frac{(\sigma/3G)^2}{32\lceil \ln(8d/(\sigma/3G)^2)\rceil + 4},
\end{align*}
where the last line follows from our choice of $\theta$ in the statement. 
Since $f$ is $G$-Lipschitz (in $\ell_p$-norm, $p \geq 1$), we have $\left\lVert \nabla f(0)/3G \right\rVert_\infty \le 1/3$, and we can now apply Corollary~\ref{cor:unbiased-gradient-low-variance} with $\sigma \rightarrow \sigma/3G$ to obtain an estimate $k'$ satisfying,
\begin{align*}
    \mathbb{E}\left[\left\lVert k' - \frac{\nabla f(0)}{3G} \right\rVert_*^2\right] &\le (\vartheta_*\sigma/3G)^2, \\
    \text{ and }\qquad  \left\lVert \mathbb{E}[k'] - \frac{\nabla f(0)}{3G} \right\rVert_* &\le \vartheta_*\sigma/3G.
\end{align*}
Clearly, $k = 3Gk'$ is an estimate satisfying our desired properties.
The number of copies of $\ket\psi$ required, the gate complexity, and the circuit depth can be determined by direct substitution. It is evident that each copy of $\ket\psi$ can be prepared using a single call to $U_f$, resulting in the desired oracle complexities.
\end{proof}

\paragraph{Quantum subgradients.}
In the case of minimizing convex and $G$-Lipschitz functions, which can be nonsmooth, the gradients might not be available. In that case, we use the subgradient estimation Lemma~\ref{lem:quantum-subg-van-maintext}, which is based on \cite{van2020convex}; we slightly modify for our purposes, which we prove below. Note that one can perform a similar analysis with \cite[Lemma 2.4]{chakrabarti2020quantum}.
\begin{lemma}[Subgradient Estimation in arbitrary $\ell_p$-norms] \label{lem:quantum-subg-van}
    Let $r_1 > 0$, $G > 0$, $\rho \in (0, 1/3]$, and suppose $\theta  \in (0, r_1 d G / \rho]$. Suppose $f:\R{d} \rightarrow \R{}$ is a convex function that is $G$-Lipschitz with respect to a given $p$-norm $\| \cdot\|$ with $p \geq 1$, and we have quantum query access to $\tilde{f}$, which is a $\theta $-approximate version of $f$, via a unitary $U$ over a (fine-enough) hypergrid of $\Bcal_{\infty}(x, 2r_1)$. Then we can compute an approximate subgradient $\tilde{g} \in \mathbb{R}^d$ at $x$ using $\mathcal{O} (\log(d/\rho))$ queries to $U$ and $U^\dagger$, such that with probability $\geq 1-\rho$, we have
    \begin{align} \label{eq:quantum-subg-van-gen}
        f(q) \geq f(x) +  \inner{ \tilde{g},q-x} -23^2\vartheta_* \sqrt{\frac{\theta d^3 G}{\rho r_1}}\norm{q - x}- 2G\vartheta r_1\qedhere,
    \end{align} 
    where $1 \leq \vartheta, \vartheta^* \leq d$, and $\vartheta \vartheta_* \leq d$.
\end{lemma}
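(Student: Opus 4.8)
The plan is to follow the proof of \cite[Lemma~18]{van2020convex} essentially line by line, but to carry the two sharp norm‑equivalence constants $\vartheta,\vartheta_*$ symbolically wherever that proof uses $\|\cdot\|_2\le\sqrt d\,\|\cdot\|_\infty$, and to check which of the two lands on which term. Two ingredients are needed. \textbf{(i) A quantum gradient‑estimation primitive.} Rescale the $\theta$‑approximate oracle onto a hypergrid of $\ell_\infty$‑radius $r_1$ around $x$, with the phase normalized by $3Gr_1$ (as in the rescaled unitary $U_{\tilde f}$ of the excerpt); this costs \emph{no} dimensional factor because $G$‑Lipschitzness in $\|\cdot\|$ gives $|f(x+v)-f(x)|\le G\|v\|$ while $\|v\|_\infty\le\|v\|$, so the effective gradient has $\ell_\infty$‑norm $\le\tfrac13$. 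Jordan's algorithm as refined in \cite{gilyen2019gradient} and analyzed for convex functions in \cite{van2020convex} then returns, with $\mathcal O(\log(d/\rho))$ queries to $U$ and $U^\dagger$ and probability $\ge1-\rho$, a vector $\tilde g$ which — via the internal randomized‑smoothing step — is an $\ell_\infty$‑accurate estimate of a subgradient $g_0:=\nabla f(x_0)\in\partial f(x_0)$ at a point $x_0$ drawn (nearly) uniformly from $\Bcal_\infty(x,r_1)$, at which $f$ is differentiable with probability one, with
$$\|\tilde g-g_0\|_\infty\ \le\ 23^2\sqrt{\tfrac{\theta d^3 G}{\rho r_1}},$$
the $\sqrt\theta$ arising from balancing oracle noise against gradient resolution inside the routine, the $d^3$ from converting the per‑coordinate guarantee, and the $1/\rho$ from a second‑moment/Markov bound; the hypothesis $\theta\le r_1 dG/\rho$ is exactly what keeps the routine's internal precision parameters admissible. \textbf{(ii) A convexity/transport step} turning $g_0$ into an approximate subgradient anchored at $x$.

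For (ii), note $\|g_0\|_*\le G$ and $\|x_0-x\|\le\vartheta\|x_0-x\|_\infty\le\vartheta r_1$. Then for every $q$,
\begin{align*}
f(q)\ &\ge\ f(x_0)+\langle g_0,\,q-x_0\rangle\ =\ f(x_0)+\langle g_0,\,q-x\rangle+\langle g_0,\,x-x_0\rangle\\
&\ge\ \big(f(x)-G\|x_0-x\|\big)+\langle g_0,\,q-x\rangle-\|g_0\|_*\,\|x_0-x\|\ \ge\ f(x)+\langle g_0,\,q-x\rangle-2G\vartheta r_1,
\end{align*}
using the subgradient inequality at $x_0$, then $G$‑Lipschitzness and H\"older. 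This produces exactly the additive $-2G\vartheta r_1$ slack of \eqref{eq:quantum-subg-van-gen}. (Replacing the single random $x_0$ by an average of $f$ over $\Bcal_\infty(x,r_1)$, i.e.\ a mollification, gives the same bound and is the form in which the randomized smoothing is usually phrased.)

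Combining (i) and (ii): on the success event, for every $q$,
$$f(q)\ \ge\ f(x)+\langle\tilde g,\,q-x\rangle+\langle g_0-\tilde g,\,q-x\rangle-2G\vartheta r_1\ \ge\ f(x)+\langle\tilde g,\,q-x\rangle-\|g_0-\tilde g\|_*\,\|q-x\|-2G\vartheta r_1,$$
and $\|g_0-\tilde g\|_*\le\vartheta_*\|g_0-\tilde g\|_\infty\le 23^2\vartheta_*\sqrt{\theta d^3 G/(\rho r_1)}$, which is \eqref{eq:quantum-subg-van-gen}. As a consistency check, taking $\vartheta=\vartheta_*=\sqrt d$ turns the $\|q-x\|$‑coefficient into $(23d)^2\sqrt{\theta G/(\rho r_1)}$ and the additive term into $2G\sqrt d\,r_1$, recovering Lemma~\ref{lem:quantum-subg-van-maintext}; for the $\ell_1$‑case relevant to the SDP application one instead takes $\vartheta=d$, $\vartheta_*=1$, and the constraint $\vartheta\vartheta_*\le d$ holds in general (indeed $\vartheta\vartheta_*\le d^{1/p+1/q}=d$), consistently with the $d^3$ under the square root.

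The step I expect to be the main obstacle is ingredient (i): faithfully importing the quantum gradient‑estimation guarantee for a \emph{non‑smooth convex} function — i.e.\ justifying that the phase/QFT output of \cite{gilyen2019gradient}, fed the grid values of the noisy oracle, is $\ell_\infty$‑close to a genuine subgradient of $f$ at a nearby random point with error $\widetilde{\mathcal O}\big(\sqrt{\theta d^3 G/(\rho r_1)}\big)$. This is exactly the content of \cite[Lemma~18]{van2020convex}, where the genuinely quantum scale‑balancing (producing the $\sqrt\theta$) and the union bound over the $d$ coordinates reside; the only new work here is the norm bookkeeping above, which is routine once one tracks that Lipschitzness enters the oracle rescaling through $\|\cdot\|_\infty\le\|\cdot\|$ (no factor), the smoothing radius through $\|\cdot\|\le\vartheta\|\cdot\|_\infty$ (factor $\vartheta$), and the gradient error through the dual norm $\|\cdot\|_*\le\vartheta_*\|\cdot\|_\infty$ (factor $\vartheta_*$).
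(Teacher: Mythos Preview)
Your proposal is correct and mirrors the paper's proof: both defer the $\ell_\infty$ subgradient-estimation guarantee (your ingredient (i)) to the internals of \cite[Lemma~18]{van2020convex}, then perform the identical convexity/transport step from the random smoothing point $z$ (your $x_0$) back to $x$, with $\vartheta$ entering via $\|z-x\|\le\vartheta\|z-x\|_\infty\le\vartheta r_1$ and $\vartheta_*$ via $\|g-\tilde g\|_*\le\vartheta_*\|g-\tilde g\|_\infty$. One cosmetic slip: the phase is normalized by the \emph{inner} grid scale (the paper sets $r=2r_2/d$ with $r_2=\sqrt{\theta r_1\rho/(dG)}$), not by $3Gr_1$, but since you explicitly defer ingredient (i) to the cited lemma this does not affect your argument.
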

\begin{proof}
    Let $r_2 := \sqrt{\frac{\theta r_1 \rho}{dG}}$ and note that $r_2\leq r_1$.
	The quantum algorithm chooses a uniformly random $z \in B_\infty(0,r_1)$ and applies Jordan's quantum algorithm to compute an approximate gradient at $z$ by approximately evaluating $f$ in superposition over a discrete hypergrid in $B_\infty(z,r_2/d)$.

	Since $B_\infty(z,r_2/d)\subseteq B_1(z,r_2)$, \cite[Lemma 17]{van2020convex} implies
	\begin{equation}\label{eq:functionFlat}
	\sup_{y\in B_\infty(0,r_2/d)}\left|f(z+y)-f(z)-\inner{y, \nabla^{(r_2)}f(z)}\right|\leq \frac{r_2^2 \Delta^{\!\!(r_2)}f(z)}{2}.
	\end{equation}
	Also as shown by \cite[Lemma 11]{van2020convex} and Markov's inequality we have 
	\begin{equation}\label{eq:laplaceSmall}
	\Delta^{\!\!(r_2)}f(z)\leq \frac{2dG}{\rho r_1}
	\end{equation}
	with probability $\geq 1-\rho/2$ over the choice of~$z$.
	If $z$ is such that Equation~\eqref{eq:laplaceSmall} holds, then we get
	$$ \sup_{y\in B_\infty(0,r_2/d)}\left|f(z+y)-f(z)-\inner{y\nabla^{(r_2)}f(z)}\right|\leq  \frac{dGr_2^2}{\rho r_1} = \theta.$$
    Also we have by Definition~\ref{def:binary-oracle}
    \begin{align*}
        \lVert f - \tilde{f}\rVert_{\infty} \leq \theta.
    \end{align*}
	Now apply the quantum algorithm
	of \cite[Corollary 15]{van2020convex} with $r = 2r_2/d$, $c =  f(z)$, $g = \nabla^{(r_2)}f(z)$, and $B=Gr$. This uses $\bigO{\log(d/\rho)}$ queries to $U$ and $U^\dagger$, and with probability $\geq 1-\rho/2$ computes an approximate gradient $\tilde{g}$ such that 
	\begin{equation}\label{eq:gradWellApx}
	\norm{\nabla^{(r_2)}f(z)-\tilde{g}}_\infty
	\leq\frac{8 \cdot 42 \pi d}{2 r_2} \cdot \theta
	=4\cdot 42\cdot \pi \sqrt{\frac{ \theta d^3 G}{\rho r_1}}.
	\end{equation}
	Also, if $z$ is such that Equation~\eqref{eq:laplaceSmall} holds, then by \cite[Lemma 10]{van2020convex} we get that
	\begin{equation*}
	\sup_{g\in\partial f(z)}\norm{\nabla^{(r_2)}f(z)-g}_1\leq \frac{r_2\Delta^{\!\!(r_2)}f(z)}{2}\leq \frac{dGr_2}{\rho r_1} = \sqrt{\frac{\theta d G}{\rho r_1}},
	\end{equation*}
	and therefore by the triangle inequality and Equation~\eqref{eq:gradWellApx} we get that 
	\begin{align}
	\sup_{g\in\partial f(z)}\norm{g-\tilde{g}}_\infty  \nonumber&\leq \sup_{g\in\partial f(z)}\norm{g-\nabla^{(r_2)}f(z)}_\infty + \norm{\nabla^{(r_2)}f(z) - \tilde{g}}_\infty  \\
	\nonumber&\leq \sup_{g\in\partial f(z)}\norm{g-\nabla^{(r_2)}f(z)}_1 + \norm{\nabla^{(r_2)}f(z) - \tilde{g}}_\infty  \\
	&\leq \sqrt{\frac{\theta d G}{\rho r_1}} + 4\cdot 42\cdot \pi \sqrt{\frac{\theta d^3 G}{\rho r_1}}\nonumber \\ &< 23^2 \sqrt{\frac{\theta d^3 G}{\rho r_1}}. 
	\end{align}

    Hence,
    \begin{align}
    \label{eq:grad-diff-dual-norm}
        \sup_{g\in\partial f(z)}\norm{g-\tilde{g}}_{*} < 23^2 \vartheta_*\sqrt{\frac{\theta d^3 G}{\rho r_1}}. 
    \end{align}
    
	Thus with probability at least $1-\rho$, for all $y\in \dom f$ and for all $g\in {\partial} f(z)$ we have that
	\begin{align*}
	f(y) &\geq f(z) + \inner{g,y-z} \\
	&=f(0) + \inner{ \tilde{g},y} + \inner{g-\tilde{g},y} + (f(z)-f(0)) + \langle g,-z\rangle   \\
	&\geq f(0) + \inner{ \tilde{g},y} - |\inner{g-\tilde{g},y}| - G\norm{z} -  \norm{g}_* \norm{z} \\
	&\geq f(0) + \inner{ \tilde{g},y} - \norm{g-\tilde{g}}_* \norm{y} - 2G\lVert z \rVert \\
	&\geq f(0) + \inner{ \tilde{g},y} -23^2\vartheta_* \sqrt{\frac{\theta d^3 G}{\rho r_1}}\norm{y}- 2G\vartheta r_1.
	\end{align*}
\end{proof}

\section{Convex optimization with quantum (sub)gradient estimation}\label{s:grad}

\subsection{Zeroth-order quantum gradient descent}
\label{sec:quantum-gradient-descent}

In order to analyze quantum gradient descent, we need a quantum algorithm for gradient estimation. As mentioned in the introduction, a core challenge is that quantum gradient estimation algorithms incur errors that depend on algorithmic parameters, which have to be handled carefully to achieve compelling convergence rates and query complexities.

\subsubsection{Strongly convex functions with Lipschitz gradients}
We first analyze the quantum gradient descent for functions with $L$-Lipschitz gradients (also known as $L$-smooth functions in optimization literature). We further assume that $f$ satisfies the \textit{Polyak-\L{}ojasiewicz inequality}:
\begin{equation} \label{eq:PL-ineq}\tag{P\L{}}
    \| \nabla f(x) \|^2_2 \geq 2 \mu \left( f(x) - f(x^\star) \right),
\end{equation}
which hereafter we refer to as the $\mu$-P\L{} condition, following \cite{karimi2016linear}.
Note that the $\mu$-P\L{} condition is implied by, and hence weaker than, $\mu$-strong convexity \cite{karimi2016linear}.
\footnote{In fact, there exist nonconvex functions that satisfy the $\mu$-P\L{} condition \cite{wang2022provable}.}
Therefore, the results we present below can be applied to strongly convex functions, and attain the same rate up to constant factors.

Before analyzing the quantum gradient descent equipped with Theorem~\ref{thm:unbiased-gradient-noisy-oracle}, we first state a descent lemma for gradient descent with bias and stochasticity, which are the main ingredients for proving Theorems~\ref{thm:quantum-gradient-descent-PL} and \ref{thm:quantum-gradient-descent-cvx}. The proof largely follows \cite{ajalloeian2020convergence}, modified appropriately for our setting.

\begin{restatable}[Descent lemma of stochastic gradient descent with bias for smooth functions]{lemma}{lemdescentbiasSGD} \label{lem:descent-biasedSGD}
    Suppose $f\colon \R{d} \to \R{}$ has $L$-Lipschitz gradients. Consider the biased stochastic gradient descent algorithm $x_{t+1} = x_t - \eta g_t$, where $g_t = \nabla f(x_t) + b_t + n_t$
    such that 
    \begin{align*}
        &\mathbb{E}_t[ g_t ] = \nabla f(x_t) + b_t \quad(\Leftrightarrow \mathbb{E}_t[ n_t ] = 0) \quad \text{and}\quad 
        \mathbb{E}_t[ \| g_t - \mathbb{E}_t[ g_t ] \|^2_2 ] := \sigma_v^2, \quad \forall t,
    \end{align*}
    where $\mathbb{E}_t[\cdot] := \Embb \brak{ \cdot | x_t}$ denotes expectation conditional on $x_t$.
    Then, for any step size $\eta \leq \frac{1}{L}$, the following is satisfied:
    \begin{align*} 
        \mathbb{E}_t \big[ f(x_{t+1}) - f(x_t)  \big] \leq 
        - \frac{\eta}{2} \| \nabla f(x_t) \|^2_2 + \frac{\eta}{2} \|b_t\|_2^2 + \frac{\eta^2 L}{2} \sigma_v^2.
    \end{align*}
\end{restatable}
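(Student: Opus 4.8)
The plan is to start from the $L$-smoothness upper bound (the global quadratic majorizer stated in the preliminaries) applied at $x_t$ and $x_{t+1} = x_t - \eta g_t$, giving
\[
f(x_{t+1}) \le f(x_t) - \eta \langle \nabla f(x_t), g_t \rangle + \frac{\eta^2 L}{2} \|g_t\|_2^2.
\]
Then I would take the conditional expectation $\mathbb{E}_t[\cdot]$ on both sides. For the linear term, using $\mathbb{E}_t[g_t] = \nabla f(x_t) + b_t$ we get $-\eta \langle \nabla f(x_t), \nabla f(x_t) + b_t \rangle = -\eta \|\nabla f(x_t)\|_2^2 - \eta \langle \nabla f(x_t), b_t\rangle$. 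For the quadratic term, I would use the bias–variance decomposition $\mathbb{E}_t[\|g_t\|_2^2] = \|\mathbb{E}_t[g_t]\|_2^2 + \mathbb{E}_t[\|g_t - \mathbb{E}_t[g_t]\|_2^2] = \|\nabla f(x_t) + b_t\|_2^2 + \sigma_v^2$.

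Next I would bound the cross terms. The term $-\eta\langle \nabla f(x_t), b_t\rangle$ is controlled by Young's inequality: $-\langle \nabla f(x_t), b_t \rangle \le \frac{1}{2}\|\nabla f(x_t)\|_2^2 + \frac{1}{2}\|b_t\|_2^2$. Similarly $\|\nabla f(x_t) + b_t\|_2^2 \le 2\|\nabla f(x_t)\|_2^2 + 2\|b_t\|_2^2$; but that introduces a factor $2$ on $\|\nabla f(x_t)\|_2^2$ with coefficient $\eta^2 L$, which after using $\eta \le 1/L$ (so $\eta^2 L \le \eta$) still needs to be absorbed. I would instead be more careful: keep $\mathbb{E}_t[f(x_{t+1}) - f(x_t)] \le -\eta\|\nabla f(x_t)\|_2^2 + \frac{\eta}{2}\|\nabla f(x_t)\|_2^2 + \frac{\eta}{2}\|b_t\|_2^2 + \frac{\eta^2 L}{2}\big(\|\nabla f(x_t)\|_2^2 + 2\langle \nabla f(x_t), b_t\rangle + \|b_t\|_2^2\big) + \frac{\eta^2 L}{2}\sigma_v^2$, and then apply Young once more to the remaining $\langle \nabla f(x_t), b_t\rangle$ inside, then use $\eta^2 L \le \eta$ repeatedly to collapse coefficients so that the total coefficient of $\|\nabla f(x_t)\|_2^2$ is at most $-\eta/2$ and the coefficient of $\|b_t\|_2^2$ is at most $\eta/2$. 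This is exactly the kind of bookkeeping where the constants have to be tracked to land on the clean statement; following \cite{ajalloeian2020convergence} one typically splits $\langle \nabla f, b\rangle$ with a parameter and chooses it to make things cancel.

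The main obstacle is precisely getting the constants to come out as stated ($-\eta/2$, $+\eta/2$, $+\eta^2 L \sigma_v^2 / 2$) without slack; a naive application of Young's inequality and the $\eta \le 1/L$ substitution can leave a stray factor. The resolution is to use that $\eta^2 L \le \eta$, so every term of order $\eta^2 L$ can be bounded by the corresponding term of order $\eta$, and then to choose the Young's-inequality weight on the $\langle \nabla f(x_t), b_t \rangle$ terms (there are two of them, one from the linear term and one from expanding $\|\nabla f(x_t)+b_t\|_2^2$) so that the $\|\nabla f(x_t)\|_2^2$ contributions sum to exactly $\eta/2$ to be subtracted and the $\|b_t\|_2^2$ contributions sum to exactly $\eta/2$. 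Once the constants are pinned down, the inequality follows directly, and no further probabilistic input is needed beyond the two moment assumptions on $g_t$.
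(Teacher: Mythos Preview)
Your overall scaffolding is correct (start from $L$-smoothness, take $\mathbb{E}_t$, use the bias--variance decomposition $\mathbb{E}_t[\|g_t\|_2^2]=\|\nabla f(x_t)+b_t\|_2^2+\sigma_v^2$). The gap is in how you handle the cross terms. Your plan is to apply Young's inequality \emph{separately} to the two occurrences of $\langle \nabla f(x_t), b_t\rangle$ (one from the linear term, one from expanding $\|\nabla f(x_t)+b_t\|_2^2$) and then tune the weights. This cannot produce the stated constants: writing $\alpha=\eta L\in(0,1]$ and $\lambda_1,\lambda_2>0$ for the two Young parameters, the requirements ``coefficient of $\|\nabla f\|_2^2\le -\eta/2$'' and ``coefficient of $\|b\|_2^2\le \eta/2$'' become
\[
\alpha+\lambda_1+\alpha\lambda_2\le 1,\qquad \alpha+\tfrac{1}{\lambda_1}+\tfrac{\alpha}{\lambda_2}\le 1,
\]
and adding these and using $\lambda+1/\lambda\ge 2$ forces $2+4\alpha\le 2$, impossible for $\alpha>0$. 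Concretely, in the displayed inequality you wrote after ``be more careful'', you have already spent the linear $-\eta\langle\nabla f,b\rangle$ via Young to reach $-\tfrac{\eta}{2}\|\nabla f\|_2^2$; the remaining $\tfrac{\eta^2 L}{2}\|\nabla f\|_2^2$ and the second Young application only add positive $\|\nabla f\|_2^2$ contributions, so you can never get back to $-\eta/2$.

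The paper avoids Young entirely. After using $\eta\le 1/L$ to replace $\tfrac{L\eta^2}{2}\|\nabla f(x_t)+b_t\|_2^2$ by $\tfrac{\eta}{2}\|\nabla f(x_t)+b_t\|_2^2$, it keeps the linear term in the form $-\eta\langle \nabla f(x_t),\nabla f(x_t)+b_t\rangle$ (do \emph{not} split off $-\eta\langle\nabla f,b\rangle$) and applies the exact identity
\[
-2\langle a,a+b\rangle+\|a+b\|_2^2=-\|a\|_2^2+\|b\|_2^2
\]
with $a=\nabla f(x_t)$. This gives $\tfrac{\eta}{2}\big(-\|\nabla f(x_t)\|_2^2+\|b_t\|_2^2\big)$ with equality, and the lemma follows immediately. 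Equivalently, if you prefer a Young-style argument, you must first \emph{combine} the two cross terms into the single coefficient $(-\eta+\eta^2 L)\langle\nabla f,b\rangle$ and then apply Young once with weight $1$; treating them separately is where your bookkeeping breaks.
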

\begin{proof}
    We start with the $L$-smooth inequality \cite[Equation 2.1.6]{nesterov2018lectures}:
    \begin{align} \label{eq:l-smooth-nesterov}
        f(y) \leq  f(x) + \inner{\nabla f(x), y - x} + \frac{L}{2} \| y - x \|^2_2.
    \end{align}
    Invoking the above with $y \leftarrow x_{t+1}$ and $x \leftarrow x_t$, we have
    \begin{align*}
        f(x_{t+1}) &\leq f(x_t) + \inner{ \nabla f(x_t), x_{t+1} - x_t } + \frac{L}{2} \| x_{t+1} - x_t \|^2_2 \\
        &= f(x_t) - \eta \inner{\nabla f(x_t), g_t} + \frac{L \eta^2 }{2} \| g_t\|^2_2 \\
        &= f(x_t) - \eta \inner{\nabla f(x_t), g_t} + \frac{L \eta^2 }{2} \| g_t  - \mathbb{E}_t [g_t] + \mathbb{E}_t [g_t] \|^2_2 \\ 
        &= f(x_t) - \eta \inner{\nabla f(x_t), g_t} + \frac{L \eta^2 }{2} \left( \| g_t  - \mathbb{E}_t [g_t] \|^2_2  + \| \mathbb{E}_t [g_t] \|^2_2 + 2 \inner{g_t  - \mathbb{E}_t [g_t], \mathbb{E}_t [g_t]} \right).
    \end{align*}
    Taking expectations conditional on $x_t$, we have
    \begin{align*}
        \mathbb{E}_t [f(x_{t+1})] &\leq f(x_t) - \eta \inner{ \nabla f(x_t), \mathbb{E}_t [g_t] } + \frac{L \eta^2 }{2} \mathbb{E}_t [\| g_t  - \mathbb{E}_t [g_t] \|^2_2]  + \mathbb{E}_t [\| \mathbb{E}_t [g_t] \|^2_2] \\
        &= f(x_t) - \eta \inner{ \nabla f(x_t), \nabla f(x_t) + b_t } + \frac{L \eta^2}{2} \left( \mathbb{E}_t [\| g_t  - \mathbb{E}_t [g_t] \|^2_2] + \mathbb{E}_t [\| \nabla f(x_t) + b_t \|^2_2] \right) \\
        &\leq f(x_t) - \eta \inner{ \nabla f(x_t), \nabla f(x_t) + b_t } + \frac{L \eta^2}{2} \sigma_v^2 + \frac{L \eta^2}{2} \| \nabla f(x_t) + b_t \|^2_2. 
    \end{align*}
    Choose $\eta \leq \frac{1}{L}$, then we have
    \begin{align*}
        \mathbb{E}_t \brak{f(x_{t+1})} &\leq f(x_t) - \eta \inner{ \nabla f(x_t), \nabla f(x_t) + b_t } + \frac{L \eta^2}{2} \sigma_v^2 + \frac{\eta}{2} \| \nabla f(x_t) + b_t \|^2_2 \\
        &= f(x_t) + \frac{\eta}{2} \left( -2 \inner{ \nabla f(x_t), \nabla f(x_t) + b_t } + \| \nabla f(x_t) + b_t \|^2_2 \right) + \frac{L\eta^2}{2} \sigma_v^2 \\ 
        &= f(x_t) + \frac{\eta}{2} \left( - \| \nabla f(x_t) \|^2_2 + \| b_t \|^2_2 \right)  + \frac{L\eta^2}{2} \sigma_v^2,
    \end{align*}
    where in the last equality we used the following identity: 
    \begin{align*}
        -2 \inner{ a, a+b} + \| a + b \|^2_2 = - \| a \|^2_2 + \|b\|^2_2.
    \end{align*}
\end{proof}

\begin{restatable}[Stochastic gradient descent with bias for $\mu$-P\L{} \& smooth functions]{lemma}{lembiasSGDplsmooth} \label{lem:conv-biasedSGD-PL}
    Consider the biased stochastic gradient estimator described in Lemma~\ref{lem:descent-biasedSGD}. Assume $f\colon \R{d} \to \R{}$ is $\mu$-P\L{} as in \eqref{eq:PL-ineq}, which implies $\mu$-strong convexity.
    Then, after running $T$ iterations of stochastic gradient descent, the following is satisfied:
    \begin{align*}
        \mathbb{E} [f(x_T)] - f(x^\star) \leq (1-\eta \mu)^T(f(x_0) - f(x^\star)) + \frac{\|b\|^2_2}{2\mu} + \frac{\eta L}{2\mu} \sigma_v^2. 
    \end{align*}
\end{restatable}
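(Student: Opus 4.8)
The plan is to iterate the one-step descent bound from Lemma~\ref{lem:descent-biasedSGD} and exploit the P\L{} inequality to convert the gradient-norm term into a contraction on the function-value gap. First I would start from the conclusion of Lemma~\ref{lem:descent-biasedSGD}, namely
\[
\mathbb{E}_t\big[f(x_{t+1}) - f(x_t)\big] \leq -\tfrac{\eta}{2}\|\nabla f(x_t)\|_2^2 + \tfrac{\eta}{2}\|b_t\|_2^2 + \tfrac{\eta^2 L}{2}\sigma_v^2,
\]
valid for any $\eta \leq 1/L$. Then I would apply the $\mu$-P\L{} condition \eqref{eq:PL-ineq}, which gives $-\tfrac{\eta}{2}\|\nabla f(x_t)\|_2^2 \leq -\eta\mu\,(f(x_t) - f(x^\star))$, and subtract $f(x^\star)$ from both sides to obtain
\[
\mathbb{E}_t\big[f(x_{t+1}) - f(x^\star)\big] \leq (1-\eta\mu)\big(f(x_t) - f(x^\star)\big) + \tfrac{\eta}{2}\|b_t\|_2^2 + \tfrac{\eta^2 L}{2}\sigma_v^2.
\]

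Next I would take the full (unconditional) expectation using the tower property and unroll the recursion over $t = 0, 1, \dots, T-1$. Writing $\delta_t := \mathbb{E}[f(x_t) - f(x^\star)]$ and letting $\|b\|_2$ be a uniform upper bound on the bias norms $\|b_t\|_2$, this yields
\[
\delta_T \leq (1-\eta\mu)^T \delta_0 + \Big(\tfrac{\eta}{2}\|b\|_2^2 + \tfrac{\eta^2 L}{2}\sigma_v^2\Big)\sum_{k=0}^{T-1}(1-\eta\mu)^k.
\]
The geometric sum is bounded by $\sum_{k=0}^{\infty}(1-\eta\mu)^k = \tfrac{1}{\eta\mu}$ (valid since $\eta\mu \le \eta L \le 1$, and strictly less than $1$ can be assumed WLOG, or handled by a trivial edge case), and substituting this bound in gives exactly
\[
\delta_T \leq (1-\eta\mu)^T\big(f(x_0) - f(x^\star)\big) + \tfrac{\|b\|_2^2}{2\mu} + \tfrac{\eta L}{2\mu}\sigma_v^2,
\]
which is the claimed inequality.

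The remaining loose end is the parenthetical remark that $\mu$-P\L{} ``implies $\mu$-strong convexity'' — this is actually false in general (P\L{} is strictly weaker), so I would either simply not invoke strong convexity at all (it is not needed: the proof above uses only smoothness, via Lemma~\ref{lem:descent-biasedSGD}, and the P\L{} inequality itself), or interpret the hypothesis as an \emph{additional} assumption carried over from the surrounding discussion. I expect no serious obstacle here; the only mild subtlety is making sure the bias and variance are uniformly bounded across iterates (which holds because the gradient estimator of Theorem~\ref{thm:unbiased-gradient-noisy-oracle} is invoked with the same parameters $\theta,\sigma$ at every step), and being slightly careful that $\mathbb{E}_t[\cdot]$ nests correctly under the tower property since $b_t$ and $\sigma_v^2$ are deterministic bounds rather than random quantities depending on the past. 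Everything else is a routine telescoping/geometric-series computation.
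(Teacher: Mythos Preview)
Your proposal is correct and follows essentially the same route as the paper: apply the descent lemma, use the P\L{} inequality to get the one-step contraction, take full expectations via the tower property, unroll the recursion, and bound the resulting geometric series by $1/(\eta\mu)$. Your observation about the ``implies $\mu$-strong convexity'' parenthetical is also apt---the implication runs the other way, and indeed neither the paper's proof nor yours uses strong convexity anywhere.
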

\begin{proof}
    First observe that by Lemma~\ref{lem:descent-biasedSGD} and \eqref{eq:PL-ineq}, we have
    \begin{align*}
        \mathbb{E}_t [f(x_{t+1})] - f(x^\star) + f(x^\star) - f(x_t) &\leq 
        - \frac{\eta}{2} \| \nabla f(x_t) \|^2_2 + \frac{\eta}{2} \|b_t\|^2_2 + \frac{\eta^2 L}{2} \sigma_v^2  \\
        &\leq -\eta \mu (f(x_t) - f(x^\star)) + \frac{\eta}{2} \|b_t\|^2_2 + \frac{\eta^2 L}{2} \sigma_v^2  \\ 
        \implies \mathbb{E}_t [f(x_{t+1})] - f(x^\star) &\leq (1-\eta \mu) (f(x_t) - f(x^\star)) + \frac{\eta}{2} \|b_t\|^2_2 + \frac{\eta^2 L}{2} \sigma_v^2.
    \end{align*}
    Assuming $\|b_t\|_2^2 = \|b\|_2^2$ for all $t$ (due to quanutm gradient estimation, we can ensure all $b_t$ are below some bound which we denote $\|b\|$), and unfolding for $T$ iterations, and using the tower law, we have 
    \begin{align*}
        \mathbb{E} [f(x_T)] - f(x^\star) &\leq (1-\eta \mu)^T (f(x_0) - f(x^\star)) + \sum_{k=0}^{T-1} (1 - \eta \mu)^k \left( \frac{\eta}{2} \|b\|^2_2 + \frac{\eta^2 L}{2} \sigma_v^2\right) \\
        &\leq (1-\eta \mu)^T (f(x_0) - f(x^\star)) + \frac{\|b\|^2_2}{2\mu} + \frac{\eta L}{2\mu} \sigma_v^2, 
    \end{align*}
    where in the last step we used 
    \begin{align*}
        \sum_{k=0}^{T-1} (1 - c)^k \leq \sum_{k=0}^\infty (1 - c)^k = \frac{1}{c} \quad\text{for}\quad c \in (0, 1).
    \end{align*}
\end{proof}

We are now ready to analyze the convergence of quantum gradient descent for $\mu$-P\L{} functions with $L$-Lipschitz gradients.
\begin{restatable}[Quantum gradient descent for $\mu$-P\L{} \& smooth functions]{theorem}{thmQGDplsmooth} \label{thm:quantum-gradient-descent-PL}
    Suppose $f\colon \R{d} \to \R{}$ is $G$-Lipschitz, $L$-smooth, and $\mu$-P\L{}, which also implies $\mu$-strong convexity. Consider the (zeroth-order) quantum gradient descent $x_{t+1} = x_t - \eta g_t$ with step size $\eta = 1/L$ that outputs the last iterate $x_T$ after running $T = \Ocal(\kappa \log 1/\varepsilon)$ iterations, where the (biased stochastic) gradient $g_t$ is estimated via the quantum gradient estimation in Theorem~\ref{thm:unbiased-gradient-noisy-oracle} on each iteration. Let $\varepsilon_0 := \varepsilon \mu/G^2$ and  $\kappa \coloneqq L/\mu$. Assume that $d/\varepsilon_0 \ge 1/5$, and one has access to $\theta$-evaluation oracle satisfying  
    $$
    \theta  \leq \frac{\varepsilon^2 \mu^2}{450d^3G^2L \left(32 \left\lceil \ln \Big(360d^2 G^2/(\varepsilon \mu) \Big) \right\rceil + 4\right)^2}= \Theta \left( \frac{\varepsilon\cdot \varepsilon_0}{d^3 \kappa \log( d^2/\varepsilon_0)^2 } \right).
    $$
    Then, with probability at least $2/3$, one can obtain an $\varepsilon$-approximate solution to \eqref{e:cvx_opt} with
    $$\widetilde{\Theta} \left( \kappa \log \left(d^2/\varepsilon_0 \right) \right)$$
    queries to an $\theta $-approximate binary oracle of $f$.
    The procedure has a gate complexity
    $\mathcal{O}\left(d \log^2 (d^2/\varepsilon_0)\log(d/\varepsilon_0) \right),$ 
    and corresponding circuit depth of 
    $\mathcal{O}\left( \log (d^2/\varepsilon_0) \log \left( d /\varepsilon_0 \right) \right).$ 
    
    With $(GR_0)$-phase oracle access instead, the above complexities must be multiplied by a factor of $\mathcal{O}\left( R_0\sqrt{dL}/\sqrt{\theta}\right)$, where $R_0$ is a length scale chosen to ensure that for all $x \in \Bcal_\infty^d(y,r), f(x) \le GR_0$.
\end{restatable}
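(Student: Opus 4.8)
The plan is to cast \eqref{eq:qgd} as biased stochastic gradient descent and feed it into the machinery already built in this section. At each iteration $t$ I would run the suppressed-bias, low-variance gradient estimator of Theorem~\ref{thm:unbiased-gradient-noisy-oracle} at the point $y=x_t$, specialized to the Euclidean setting where $\vartheta=\vartheta_*=\sqrt d$. With an estimator-precision parameter $\sigma$ (fixed below) and a $\theta$-approximate binary oracle meeting the theorem's hypothesis $\theta=\Ocal\big(\sigma^4/(Ld\,G^2\log^2(dG^2/\sigma^2))\big)$, this produces a random $g_t$ whose law conditional on $x_t$ has bias $\lVert\mathbb{E}_t[g_t]-\nabla f(x_t)\rVert_2\le\sqrt d\,\sigma$ and second moment $\mathbb{E}_t[\lVert g_t-\nabla f(x_t)\rVert_2^2]\le d\sigma^2$; in particular the conditional variance $\sigma_v^2:=\mathbb{E}_t[\lVert g_t-\mathbb{E}_t[g_t]\rVert_2^2]\le d\sigma^2$ and $\lVert b_t\rVert_2\le\sqrt d\,\sigma$ hold uniformly in $t$, at a cost of $8\lceil\ln(72dG^2/\sigma^2)\rceil+1=\Ocal(\log(dG^2/\sigma^2))$ binary-oracle calls per iterate. (Unlike the nonsmooth analysis, Theorem~\ref{thm:unbiased-gradient-noisy-oracle} estimates $\nabla f$ at the queried point itself, so there is no randomly perturbed query point to carry through.)

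I would then invoke Lemma~\ref{lem:conv-biasedSGD-PL} with step size $\eta=1/L$ (admissible, since Lemma~\ref{lem:descent-biasedSGD} needs $\eta\le1/L$); using $\eta\mu=1/\kappa$ and $\eta L=1$, it gives
\begin{align*}
 \mathbb{E}[f(x_T)]-f(x^\star)
 &\le (1-1/\kappa)^T\big(f(x_0)-f(x^\star)\big)+\frac{\lVert b\rVert_2^2}{2\mu}+\frac{\sigma_v^2}{2\mu}\\
 &\le e^{-T/\kappa}\big(f(x_0)-f(x^\star)\big)+\frac{d\sigma^2}{\mu}.
\end{align*}
To control the initial gap independently of the (possibly unbounded) domain I would use \eqref{eq:PL-ineq} together with $G$-Lipschitzness: $2\mu\big(f(x_0)-f(x^\star)\big)\le\lVert\nabla f(x_0)\rVert_2^2\le G^2$, so $f(x_0)-f(x^\star)\le G^2/(2\mu)$. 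Choosing $\sigma^2=\Theta(\varepsilon\mu/d)$ so the additive floor $d\sigma^2/\mu$ is a small multiple of $\varepsilon$, and $T=\Theta(\kappa\log(1/\varepsilon_0))$ (with $\varepsilon_0=\varepsilon\mu/G^2$) so the geometric term $e^{-T/\kappa}G^2/(2\mu)$ is likewise small, makes $\mathbb{E}[f(x_T)-f(x^\star)]\le\varepsilon/3$; since $f(x_T)\ge f(x^\star)$, Markov's inequality then yields $\Pr[f(x_T)-f(x^\star)\ge\varepsilon]\le1/3$, i.e.\ the claimed $2/3$ success probability.

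It remains to read off the quantitative claims. Substituting $\sigma^2=\Theta(\varepsilon\mu/d)$ into the hypothesis of Theorem~\ref{thm:unbiased-gradient-noisy-oracle} and tracking the explicit constants from Corollary~\ref{cor:unbiased-gradient-low-variance} and the $\sigma\mapsto\sigma/3G$ rescaling in its proof yields exactly $\theta\le\varepsilon^2\mu^2/\big(450d^3G^2L(32\lceil\ln(360d^2G^2/(\varepsilon\mu))\rceil+4)^2\big)=\Theta\big(\varepsilon\varepsilon_0/(d^3\kappa\log^2(d^2/\varepsilon_0))\big)$, and the hypothesis $d/\varepsilon_0\ge1/5$ is precisely what keeps the rescaled parameter $\sigma/(3G)$ inside $(0,1/3]$ (as required by Corollary~\ref{cor:unbiased-gradient-low-variance}) and the logarithms well-defined. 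Since $dG^2/\sigma^2=\Theta(d^2/\varepsilon_0)$, the total oracle cost is $T\cdot\Ocal(\log(d^2/\varepsilon_0))=\widetilde\Theta(\kappa\log(d^2/\varepsilon_0))$ queries, the lower-bound direction being the standard one for gradient descent on a condition-number-$\kappa$ objective; the per-iteration gate complexity $\Ocal(d\log^2(d^2/\varepsilon_0)\log(d/\varepsilon_0))$ and circuit depth $\Ocal(\log(d^2/\varepsilon_0)\log(d/\varepsilon_0))$ follow by plugging $\sqrt dG/\sigma=\Theta(d/\sqrt{\varepsilon_0})$ and $G/\sigma=\Theta(\sqrt{d/\varepsilon_0})$ into Theorem~\ref{thm:unbiased-gradient-noisy-oracle}, and the $(GR_0)$-phase-oracle version multiplies everything by the $\Ocal(R_0\sqrt{dL}/\sqrt\theta)$ overhead stated there.

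\textbf{The main obstacle.} The crux is that Lemma~\ref{lem:conv-biasedSGD-PL} leaves a \emph{persistent} error floor proportional to $(\lVert b\rVert_2^2+\sigma_v^2)/\mu$ rather than one that vanishes with $T$, so the estimator precision $\sigma$ (and hence the oracle precision $\theta$) must be coupled to the target accuracy $\varepsilon$ — and to $\mu$ and $d$ — rather than set to a fixed constant, and this coupling must be checked against the hypotheses of Theorem~\ref{thm:unbiased-gradient-noisy-oracle}. Routing the analysis through the biased-SGD descent lemma (Lemma~\ref{lem:descent-biasedSGD}) is exactly what avoids the multiplicative error accumulation flagged in the remark after Theorem~\ref{thm:smooth}: bias and variance enter \emph{additively}, so a single $\varepsilon$-dependent choice of $\theta$ suffices and the query complexity remains $\widetilde\Theta(\kappa)$ up to logarithmic factors.
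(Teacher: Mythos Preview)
Your proposal is correct and follows essentially the same route as the paper: both cast \eqref{eq:qgd} as biased SGD, apply Lemma~\ref{lem:conv-biasedSGD-PL} with $\eta=1/L$, bound the bias and variance using Theorem~\ref{thm:unbiased-gradient-noisy-oracle} specialized to $\vartheta=\vartheta_*=\sqrt d$, choose $\sigma^2=\Theta(\varepsilon\mu/d)$ to control the error floor, back out the $\theta$ requirement, and finish with Markov's inequality. Two incidental differences: your variance bound $\sigma_v^2\le d\sigma^2$ (via $\mathbb{E}\lVert g_t-\mathbb{E}_t[g_t]\rVert_2^2\le\mathbb{E}\lVert g_t-\nabla f(x_t)\rVert_2^2$) is sharper than the paper's $4d\sigma^2$, and you explicitly bound the initial gap by $G^2/(2\mu)$ using P\L{} plus Lipschitzness, whereas the paper leaves $f(x_0)-f(x^\star)$ implicit inside $T$; neither changes anything material.
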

\begin{proof}
    We start from Lemma~\ref{lem:conv-biasedSGD-PL}, where we plug in the quantum gradient estimation from Theorem~\ref{thm:unbiased-gradient-noisy-oracle} (with $\vartheta = \vartheta_* = \sqrt{d}$). First, for the bias term involving $\|b\|^2$, we have
    \begin{align*}
        \|b_t\|_2^2 = \| \mathbb{E}_t[ g_t ] - \nabla f(x_t) \|_2^2 \leq d \| \mathbb{E}_t[ g_t ] - \nabla f(x_t) \|_\infty^2 \leq d \sigma^2, \quad \forall t.
    \end{align*}
    Next, for the variance term involving $\sigma_v^2$, we have
    \begin{align*}
     \sigma_v^2 := 
     \mathbb{E}_t[ \| g_t - \mathbb{E}_t[ g_t ] \|_2^2 ] &\leq d \mathbb{E}_t[ \| g_t - \mathbb{E}_t[ g_t ] \|_{\infty}^2 ] \\
     &\leq 2d \left(  \mathbb{E}_t[ \|g_t - \nabla f(x_t) \|_\infty^2 ] +  \mathbb{E}_t[ \| \nabla f(x_t) - \mathbb{E}_t[ g_t ] \|_\infty^2 ] \right) \\ 
     &\leq 4d\sigma^2.
    \end{align*}
    Therefore, we arrive at
   \begin{align*}
        \mathbb{E} [f(x_T)] - f(x^\star) 
        \leq (1-\eta \mu)^T (f(x_0) - f(x^\star)) + (1 + 4\eta L) \frac{d\sigma^2}{2\mu}.
    \end{align*}
    The first term exhibits linear convergence, which we bound by $\varepsilon/2$ as follows.
    Observe that 
    \begin{align*}
        T \log \left( \frac{1}{1 - \mu \eta} \right)
        = T(- \log (1 - \mu r)) \geq - \log \left(\frac{\varepsilon}{2 (f(x_0) - f(x^\star)) } \right) = \log \left( \frac{2(f(x_0) - f(x^\star))}{\varepsilon} \right).
    \end{align*}
    Using $\log (1/\xi) \geq 1-\xi$ for $\xi \in (0, 1)$, we have
    \begin{align*}
        \log \left( \frac{1}{1 - \mu \eta} \right) \leq \mu \eta \implies \frac{1}{\mu \eta} \geq \frac{1}{\log \left( \frac{1}{1-\mu \eta} \right)}.
    \end{align*}
    Thus,
    $$T \geq \frac{1}{\mu \eta} \log \left( \frac{2(f(x_0) - f(x^\star))}{\varepsilon} \right) \geq \frac{1}{\log \left( \frac{1}{1-\mu \eta} \right)} \log \left( \frac{2(f(x_0) - f(x^\star))}{\varepsilon} \right)
    $$ 
    suffices to have the first term bounded by $\varepsilon/2$; with $\eta=1/L$, we have $T \geq \kappa \log \left( \frac{2(f(x_0) - f(x^\star))}{\varepsilon}\right)$.
    
    Now, for the combined bias and variance term, we want
    \begin{align} \label{eq:bias-bound-SGD}
        \left(1 + 4\eta L \right) \frac{d\sigma^2}{2\mu} \leq \frac{\varepsilon}{2} \implies \sigma \leq \sqrt{\frac{\varepsilon \mu}{d(1+4\eta L)}} \overset{\eta = 1/L}{=} 
        \sqrt{\frac{\varepsilon \mu}{5d}}.
    \end{align}
    By assumption $\frac{\varepsilon\mu}{dG^2} \le 5 \implies \frac{\sigma}{G} \le 1$ as required.
    Plugging \eqref{eq:bias-bound-SGD} back into the precision requirement for the evaluation oracle in Theorem~\ref{thm:unbiased-gradient-noisy-oracle}, we get:
    \begin{align*}
        \theta  &\le \frac{\sigma^4}{2Ld(3G)^2 \left(32\lceil \ln(72dG^2/\sigma^2)\rceil + 4\right)^2} = \frac{\left(\frac{\varepsilon \mu}{5d}\right)^2}{2Ld(3G)^2 \left(32 \left\lceil \ln \Big(72dG^2 \cdot \frac{5d}{\varepsilon \mu} \Big) \right\rceil + 4\right)^2}.
    \end{align*}
    Simplifying the above bound gives the desired result on the precision requirement to the evaluation oracle.
    Using our choice of  $T$ and applying Markov's inequality, we get
    \begin{align*}
        \frac{1}{3} &\geq \mathrm{Pr}\left[f(x_t) - f(x^\star) \geq 3\Embb \brak{ f(x_t) - f(x^\star) } \right]\\  &= \mathrm{Pr}\left[f(x_t) - f(x^\star) \geq \varepsilon\right].
    \end{align*}
    
    To complete the proof, for each gradient estimation, we make $8 \lceil \ln(72dG^2/\sigma^2) \rceil + 1 = \Theta (\log(d^2/\varepsilon_0))$ calls to the unitary $U_{\tilde{f}}$ per Theorem~\ref{thm:unbiased-gradient-noisy-oracle} and using the bound in \eqref{eq:bias-bound-SGD}. This gives the total number of calls in the statement combined with $T \geq \kappa \log \left( \frac{2(f(x_0) - f(x^\star))}{\varepsilon}\right)$ we obtained previously.
    Finally, for the gate complexity, circuit depth, and phase oracle cost, we again use the bound \eqref{eq:bias-bound-SGD} to the statement in Theorem~\ref{thm:unbiased-gradient-noisy-oracle}. 
\end{proof}

\subsubsection{Convex functions with Lipschitz gradients}
We now perform a similar analysis for convex functions with Lipschitz continuous gradients, but not necessarily strongly convex.
\begin{restatable}[Quantum Gradient Descent for Convex \& Smooth Functions]{theorem}{thmQGDconvexsmooth} \label{thm:quantum-gradient-descent-cvx}
    Suppose $f\colon \R{d} \to \R{}$ is a convex function with $L$-Lipschitz gradients. Consider the (zeroth-order) quantum gradient descent $x_{t+1} = x_t - \eta g_t$ with step size $\eta = 1/L$ that outputs the last iterate $x_T$ after running $T = \Ocal(LR^2/\varepsilon)$ iterations, where the (biased stochastic) gradient $g_t$ is estimated via the quantum gradient estimation in Theorem~\ref{thm:unbiased-gradient-noisy-oracle} on each iteration, and $R := \max_{f(x) \leq f(x_0)} \|x - x^\star \|_2$. Additionally, assume we are given access to the value of $f(x^{\star})$.\footnote{This can typically be assumed without loss of generality with only polylogarithmic overhead in the inverse error. This is because as long as a bounded interval containing $f(x_\star)$ is known, we can guess a value and run our algorithm with that guess. The success of the algorithm can be verified by computing a single gradient at the final point. This allows the true value of the optimum to be calculated by binary search. } Let $\varepsilon_0 := \varepsilon /(GR)$. Assume one has access to an $\theta $-precise evaluation oracle satisfying
    \begin{align*}
        \theta  &\le \frac{\varepsilon^4}{4608 R^4 d^3 G^2 L \left(32 \left\lceil \ln \Big( 1152 d^2 G^2 R^2/\varepsilon^2 \Big) \right\rceil + 4\right)^2} = \Theta \left(\frac{\varepsilon_0^4 G^2}{ d^3 L \log(d / \varepsilon_0)^2} \right).
    \end{align*}
    Then, with probability at least $2/3$, one can obtain an $\varepsilon$-approximate solution with 
        $$\Theta \left(  \frac{LR^2}{\varepsilon}  \log \left( d /\varepsilon_0 \right) \right)$$
    queries to $\theta $-approximate binary oracle of $f$. The procedure has gate complexity
        $\Ocal \left( d \log^2(d/\varepsilon_0) \log(\sqrt{d}/\varepsilon_0) \big) \right)$        
    and corresponding circuit depth of $\Ocal\left( \log(d/ \varepsilon_0 ) \log(\sqrt{d}/\varepsilon_0) \right)$.
    
    With $(GR_0)$-phase oracle access instead the above complexities must be multiplied by a factor of $\mathcal{O}\left( R_0\sqrt{dL}/\sqrt{\theta}\right)$, where $R_0$ is a length scale chosen to ensure that for all $x \in \Bcal_\infty^d(x_0,R), f(x) \le GR_0$.
\end{restatable}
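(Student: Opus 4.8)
The plan is to follow the same template as the proof of Theorem~\ref{thm:quantum-gradient-descent-PL}: instantiate the biased stochastic gradient $g_t = \nabla f(x_t) + b_t + n_t$ via Theorem~\ref{thm:unbiased-gradient-noisy-oracle} with $\vartheta = \vartheta_* = \sqrt d$, which (exactly as in the P\L{} proof) turns the $\ell_\infty$ guarantees into $\|b_t\|_2 \le \sqrt d\,\sigma$ and $\sigma_v^2 := \mathbb{E}_t[\|g_t - \mathbb{E}_t[g_t]\|_2^2] \le 4d\sigma^2$, and then feed this into the descent lemma, Lemma~\ref{lem:descent-biasedSGD}. The only structural change from the P\L{} argument is that the $\mu$-P\L{} inequality, which drove the linear rate, must be replaced by a \emph{gradient-domination bound valid on the sublevel set}: as long as $\|x_t - x^\star\|_2 \le R$, convexity gives $f(x_t) - f(x^\star) \le \langle \nabla f(x_t), x_t - x^\star\rangle \le \|\nabla f(x_t)\|_2\,R$ (Cauchy--Schwarz, using self-duality of $\ell_2$), hence $\|\nabla f(x_t)\|_2^2 \ge (f(x_t) - f(x^\star))^2/R^2$.

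Writing $\Delta_t := f(x_t) - f(x^\star)$ and plugging this into Lemma~\ref{lem:descent-biasedSGD} with $\eta = 1/L$ yields the recursion
$$\mathbb{E}_t[\Delta_{t+1}] \le \Delta_t - \frac{1}{2LR^2}\,\Delta_t^2 + \frac{5d\sigma^2}{2L},$$
where the last term collects $\frac{\eta}{2}\|b_t\|_2^2 + \frac{\eta^2 L}{2}\sigma_v^2$. Taking full expectations, invoking Jensen's inequality (legitimate since $\Delta_t \ge 0$), and unrolling the standard ``$a_{t+1} \le a_t - c\,a_t^2 + E$'' recursion gives, after $T$ steps, a bound of the form $\mathbb{E}[\Delta_T] = \Ocal\!\big(\tfrac{1}{cT} + \sqrt{E/c}\big) = \Ocal\!\big(\tfrac{LR^2}{T} + R\sqrt{d}\,\sigma\big)$. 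Choosing $T = \Theta(LR^2/\varepsilon)$ controls the first term, and demanding $R\sqrt{d}\,\sigma = \Theta(\varepsilon)$, i.e.\ $\sigma = \Theta(\varepsilon/(R\sqrt d))$, controls the second; this choice also meets the requirement $\sigma/G \le 1$ of Theorem~\ref{thm:unbiased-gradient-noisy-oracle}, since $\sigma/G = \Theta(\varepsilon_0/\sqrt d)$ with $\varepsilon_0 := \varepsilon/(GR)$. Back-substituting $\sigma = \Theta(\varepsilon/(R\sqrt d))$ into the oracle requirement $\theta = \Ocal(\sigma^4/(Ld\,G^2\log^2(dG^2/\sigma^2)))$ of Theorem~\ref{thm:unbiased-gradient-noisy-oracle} produces exactly the stated bound $\theta = \Theta(\varepsilon_0^4 G^2/(d^3 L\log^2(d/\varepsilon_0)))$. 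The success probability $2/3$ then follows from Markov's inequality as in the P\L{} proof; the query count is $T$ times the $\Theta(\log(d/\varepsilon_0))$ calls per gradient estimate, i.e.\ $\Theta((LR^2/\varepsilon)\log(d/\varepsilon_0))$; and the gate complexity, circuit depth, and $(GR_0)$-phase-oracle overhead are read off directly from Theorem~\ref{thm:unbiased-gradient-noisy-oracle} at $\sigma = \Theta(\varepsilon/(R\sqrt d))$.

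The main obstacle I expect is the sublevel-set containment $\|x_t - x^\star\|_2 \le R = \max_{f(x)\le f(x_0)}\|x - x^\star\|_2$ invoked above: without strong convexity there is no contraction pinning the iterates near $x^\star$, and because the steps are \emph{biased} the function values $f(x_t)$ are not monotonically nonincreasing, so this bound is not free. I would establish it by a high-probability argument --- either conditioning on the event that all $T$ gradient estimates are $\sigma$-accurate in $\ell_\infty$ (whose probability is controlled through Theorem~\ref{thm:unbiased-gradient} inside Corollary~\ref{cor:unbiased-gradient-low-variance}) and running a deterministic-bounded-error variant of the descent inequality, or bounding $f(x_t)$ by a supermartingale via Lemma~\ref{lem:descent-biasedSGD} and applying a maximal inequality --- in either case exploiting that $\sigma$ is small enough that the drift accumulated over $T = \Theta(LR^2/\varepsilon)$ iterations is only $\Ocal(\varepsilon)$, so all iterates remain within a mild enlargement of $\{f \le f(x_0)\}$ and $x_T$ is, up to $\Ocal(\varepsilon)$, the running minimum. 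The assumed access to $f(x^\star)$ --- removable by binary search, as noted in the footnote --- is what lets us certify that the returned $x_T$ meets the target and closes the argument.
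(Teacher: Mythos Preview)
Your proposal is correct and follows the same skeleton as the paper --- Lemma~\ref{lem:descent-biasedSGD}, the convexity/Cauchy--Schwarz gradient-domination bound $\Delta_t \le R\,\|\nabla f(x_t)\|_2$, Theorem~\ref{thm:unbiased-gradient-noisy-oracle} with $\vartheta=\vartheta_*=\sqrt d$, and Markov at the end --- but diverges in one structural respect.

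You carry the additive noise term $E = 5d\sigma^2/(2L)$ through the recursion $a_{t+1}\le a_t - c a_t^2 + E$ and control the resulting error floor $\sqrt{E/c}=\Theta(R\sqrt d\,\sigma)$. The paper instead uses the assumed access to $f(x^\star)$ \emph{inside} the iteration: it computes $\delta_t = f(x_t)-f(x^\star)$ at each step, terminates as soon as $\delta_t\le\varepsilon$, and therefore knows that $\delta_t>\varepsilon$ at every live step. This lets it demand $\|b_t\|_2,\sigma_v \le \delta_t/(2R)$ --- satisfied by the same fixed $\sigma=\Theta(\varepsilon/(R\sqrt d))$ you derive --- and obtain the \emph{clean} recursion $\mathbb{E}_t[\delta_{t+1}]\le \delta_t - \tfrac{1}{4L}(\delta_t/R)^2$ with no additive term, which unrolls via the standard $1/\delta_t$ telescoping. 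So your reading of the role of $f(x^\star)$ (final certification only) differs from the paper's actual use (per-step stopping rule that eliminates the $+E$ term from the analysis). Your route is the more standard biased-SGD argument and is arguably more robust, since it does not need $f(x^\star)$ during the run; the paper's route buys a slightly cleaner recursion.

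You are also right to flag the sublevel-set containment $\|x_t-x^\star\|_2\le R$ as the delicate point: the paper's proof invokes this bound (through the definition $R=\max_{f(x)\le f(x_0)}\|x-x^\star\|_2$) but does not justify $f(x_t)\le f(x_0)$ almost surely --- descent holds only in conditional expectation. Your proposed high-probability fix is a reasonable way to close this, and goes beyond what the paper makes explicit.
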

\begin{proof}
We follow the standard convergence proof for gradient descent on smooth and convex objectives (e.g., \cite[Theorem 2.9]{lee_vempala_optimization_sampling}), with modifications to handle with the bias of the gradient.
Define $\delta_t := f(x_{t+1}) - f(x^\star)$. By convexity and Cauchy-Schwarz we have
\begin{align*}
    \delta_t = f(x_t) - f(x^\star) &\leq \langle \nabla f(x_t), x_t - x^\star \rangle \leq \| \nabla f(x_t) \|_2 \| x_t - x^\star \|_2 \leq \| \nabla f(x_t) \|_2 R,
\end{align*}
from which we have the bound $\delta_t / R \leq \| \nabla f(x_t) \|_2$.
From Lemma~\ref{lem:descent-biasedSGD}, we have:
\begin{align*}
    \mathbb{E}_t \brak{\delta_{t+1}} &\leq \delta_t
    - \frac{\eta}{2} \| \nabla f(x_t) \|^2_2 + \frac{\eta}{2} \|b_t\|_2^2 + \frac{\eta^2 L}{2} \sigma_v^2, \\
    &\leq \delta_t
    - \frac{\eta}{2} \cdot \frac{\delta_t^2}{R^2} + \frac{\eta}{2} \|b_t\|_2^2 + \frac{\eta^2 L}{2} \sigma_v^2 \\
    &= \delta_t
    - \frac{1}{2L} \cdot \frac{\delta_t^2}{R^2} + \frac{1}{2L} \|b_t\|_2^2 + \frac{1}{2L} \sigma_v^2
\end{align*}
where we denoted $\Embb_t := \Embb \brak{ \cdot | x_t}$. The key insight is that, using Theorem~\ref{thm:unbiased-gradient-noisy-oracle}, we can suppress both the bias and variance enough on each iteration to ensure descent. Hence, we demand
\begin{align} \label{eq:cvx-smooth-sigma-cond}
     \|b_t\|_2 \leq \frac{\delta_t}{2R} \quad \text{and} \quad \sigma_v \leq \frac{\delta_t}{2 R},
\end{align}
which we will justify later. Given that we assumed access to $f(x^{\star})$, at each iteration we can compute the value $\delta_t$ to input to the suppressed-bias quantum gradient algorithm. Then, we have
\begin{align*}
    \mathbb{E}_t \brak{\delta_{t+1}} &\leq \delta_t
    - \frac{\delta_t^2}{2 L R^2} + \frac{1}{2L} \|b_t\|_2^2 + \frac{1}{2L} \sigma_v^2 \\
    &\leq \delta_t
    - \frac{\delta_t^2}{2 L R^2} + \frac{\delta_t^2}{8LR^2}  + \frac{\delta_t^2}{8LR^2} \\
    &= \delta_t - \frac{1}{4L} \left( \frac{\delta_t}{R} \right)^2.
\end{align*}
Taking expectation again and using the law of total expectation, we have
\begin{align*}
    \Embb \brak{ \delta_{t+1} } \leq \Embb \brak{ \delta_t } - \frac{1}{4L} \Embb \brak{\Big( \frac{\delta_t}{R} \Big)^2} \leq \Embb \brak{ \delta_t } - \frac{1}{4L} \left( \frac{\Embb \brak{\delta_t}}{R} \right)^2,
\end{align*}
where the last step follows from an application of Jensen's inequality. Rearranging, we have
\begin{align*}
    \frac{1}{4LR^2} \leq \frac{\Embb \brak{ \delta_{t+1} } - \Embb \brak{ \delta_{t+1} }}{\left( \Embb \brak{\delta_t} \right)^2} \leq \frac{\Embb \brak{ \delta_{t+1} } - \Embb \brak{ \delta_{t+1} }}{ \Embb \brak{\delta_t} \Embb \brak{\delta_{t+1}} } = \frac{1}{\Embb \brak{ \delta_{t+1} }} - \frac{1}{\Embb \brak{ \delta_t }}.
\end{align*}
Rearranging and unfolding, we have
\begin{align*}
    \frac{1}{\Embb \brak{ \delta_t }} \geq \frac{1}{\Embb \brak{ \delta_{t-1} }} + \frac{1}{4LR^2} \geq \dots \geq \frac{1}{\delta_0} + \frac{t}{4LR^2},
\end{align*}
where we used $\mathbb{E}[\delta_t] \geq \mathbb{E}[\delta_{t+1}]$ coming from the assurance of descent.
By $L$-smoothness in \eqref{eq:l-smooth-nesterov}, we have
\begin{align*}
    \delta_0 = f(x_0) - f(x^\star) \leq \frac{L}{2} \| x_0 - x^\star \|_2^2 \leq \frac{LR^2}{2}.
\end{align*}
Thus, 
\begin{align*}
    \frac{1}{\Embb \brak{ \delta_t }} \geq \frac{1}{\delta_0} + \frac{t}{4LR^2} \geq \frac{2}{LR^2} + \frac{t}{4LR^2} \geq \frac{t+8}{4LR^2} \implies  \Embb \brak{ \delta_t } \leq \frac{4LR^2}{t+8}.
\end{align*}
This gives the desired scaling in $T$. It remains to justify the choices in \eqref{eq:cvx-smooth-sigma-cond}. To that end, we again use Theorem~\ref{thm:unbiased-gradient-noisy-oracle} (with $\vartheta = \vartheta_* = \sqrt{d}$). 
For the bias term $\|b_t\|_2$, we have for all $t$,
    \begin{align*}
        \|b_t\|_2 = \| \mathbb{E}[ g_t ] - \nabla f(x_t) \|_2  \leq \sqrt{d} \sigma.
    \end{align*}
    For the variance term involving $\sigma_v^2$, 
    \begin{align*}
     \sigma_v^2 := 
     \mathbb{E}[ \| g_t - \mathbb{E}[ g_t ] \|_2^2 ] 
     &\leq 2\left(  \mathbb{E}[ \|g_t - \nabla f(x_t) \|_2^2 ] +  \mathbb{E}[ \| \nabla f(x_t) - \mathbb{E}[ g_t ] \|_2^2 ] \right) \\ 
     &\leq 4d\sigma^2.
\end{align*}
Therefore, it suffices to choose $4 d \sigma^2 \geq \sigma_v^2, \|b_t\|_2^2$ to compute the precision requirement on the evaluation oracle from Theorem~\ref{thm:unbiased-gradient-noisy-oracle}. We will assume that once we observe that $\delta_{t+1} \leq \varepsilon$, we terminate the algorithm, and so $\delta_t$ in the bias and variance requirements in \eqref{eq:cvx-smooth-sigma-cond} is lower bounded by $\varepsilon$. Thus, we only have to suppress the bias and variance enough such that 
\begin{align} \label{eq:convex-smooth-sigma-bound-new}
    \sigma^2 \leq \frac{\varepsilon^2}{16R^2 d}.
\end{align}
We now plug in our condition on $\sigma$ in \eqref{eq:convex-smooth-sigma-bound-new} to obtain the precision requirement on the evaluation oracle.
    \begin{align*}
        \theta  &\le \frac{\sigma^4}{2Ld(3G)^2 \left(32\lceil \ln(72dG^2/\sigma^2)\rceil + 4\right)^2} 
        = \frac{\left( \frac{\varepsilon^2}{16R^2 d} \right)^2}{2Ld(3G)^2 \left(32 \left\lceil \ln \Big(72dG^2 \cdot \frac{16R^2 d}{\varepsilon^2} \Big) \right\rceil + 4\right)^2}.
    \end{align*}
    The statement in the theorem can be obtained by simplifying the above. 
    
    To complete the proof, for each gradient estimation, we make $8 \lceil \ln(72dG^2/\sigma^2) \rceil + 1 = \Theta (\ln(dG^2/\sigma^2))$ calls to the unitary $U_{\tilde{f}}$ per Theorem~\ref{thm:unbiased-gradient-noisy-oracle}. Using the bound in \eqref{eq:convex-smooth-sigma-bound-new}, we have
    \begin{align*}
        \Theta \left( \ln \left( \frac{dG^2}{\sigma^2} \right) \ \right) = \Theta \left( \ln \left( \frac{d^2 G^2 R^2}{\varepsilon^2} \right) \ \right),
    \end{align*}
    giving the total number of calls in the statement combined with $T = \Ocal \left( \frac{L R^2}{\varepsilon} \right)$ we obtained previously.
    Finally, for the gate complexity and circuit depth, we again use the bound \eqref{eq:bias-bound-SGD} to the statement in Theorem~\ref{thm:unbiased-gradient-noisy-oracle}.

Using our choice of  $T$ and applying Markov's inequality, we get
    \begin{align*}
        \frac{1}{3} &\geq \mathrm{Pr}\left[f(x_t) - f(x^\star) \geq 3\Embb \brak{ f(x_t) - f(x^\star) } \right]\\  &= \mathrm{Pr}\left[f(x_t) - f(x^\star) \geq \varepsilon\right].
    \end{align*}
\end{proof}

\subsection{Zeroth-order quantum projected subgradient method}
\label{sec:quantum-subgradient-method}
Suppose that 
$f:\R{d} \rightarrow \R{}$ is $G$-Lipschitz on $\Xcal$, where $\Xcal$ is closed, bounded, and nonempty. Define $K := \diam(\Xcal) = \sup_{x, y \in \Xcal} \|x - y \|_2$.
Define $\Pi_{\Xcal}$ to be the Euclidean projection onto $\Xcal$. The subgradient method is defined by the following equations:
\begin{equation}\tag{PSM}
    x_{t+1} = \Pi_{\Xcal} \left( x_t - \eta g_t\right), \quad g_t \in \partial f(x_t) .
\end{equation}
We now analyze the precision requirement on the evaluation oracle to implement the zeroth-order quantum projected subgradient method for convex and $G$-Lipschtiz functions, which can be nonsmooth. The algorithm we analyze is similar to the (classical) subgradient method but equipped with the quantum subgradient estimation from \cite[Lemma 18]{van2020convex}, which we reproduced in Lemma~\ref{lem:quantum-subg-van}. Note that one can perform a similar analysis with \cite[Theorem 2.2]{chakrabarti2020quantum}.
\begin{restatable}[Quantum projected subgradient method]{theorem}{thmQPGD}\label{thm:subgradient_descent_convergence}
    Suppose $f:\R{d} \rightarrow \R{}$ is a convex function that is $G$-Lipschitz. Consider the zeroth-order quantum projected subgradient method $x_{t+1} = \Pi_{\Xcal}\left(x_t - \eta \tilde{g}_t\right)$, with $\eta = \frac{R}{G\sqrt{T}}$, that outputs the average $\frac{1}{T} \sum_{t=1}^T x_t$ after running $T$ iterations, where the $\tilde{g}_t$ is an approximate subgradient at $x_t$ computed via quantum subgradient estimation in Lemma~\ref{lem:quantum-subg-van} on each iteration. Assume one has access to an $\theta $-precise evaluation oracle for $f$, with $\theta  = \Ocal \left( \frac{\varepsilon^5}{G^4 R^4 d^{4.5}} \right)$. Then, using the quantum projected subgradient method, with probability at least $2/3$, one can obtain an $\varepsilon$-approximate solution to \eqref{e:cvx_opt} with $\widetilde{\Ocal}(G^2R^2/\varepsilon^2)$ queries to an $\theta$-approximate binary oracle of $f$ and $\widetilde{\Ocal}((d + \Tcal_{\Pi}) ( GR/\varepsilon)^2 )$ gates, where $\Tcal_{\Pi}$ is the cost of performing the projection onto $\Xcal$.

\end{restatable}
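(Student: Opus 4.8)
\subsection*{Proof sketch (plan)}

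The plan is to run the classical analysis of the \emph{averaged} projected subgradient method, feeding in at step $t$ the output $\tilde g_t$ of the quantum subgradient estimator of Lemma~\ref{lem:quantum-subg-van} in the Euclidean case ($p=2$, so $\vartheta=\vartheta_*=\sqrt d$), and to bookkeep the bias this estimator introduces. First I would record the standard one-step inequality coming from nonexpansiveness of the Euclidean projection $\Pi_{\Xcal}$: for every $t$,
\[
\|x_{t+1}-x^\star\|_2^2 \le \|x_t-x^\star\|_2^2 - 2\eta\inner{\tilde g_t,\, x_t-x^\star} + \eta^2\|\tilde g_t\|_2^2 .
\]
The reason for running the estimator at $x_t$ and invoking its guarantee with query point $q=x^\star$ is that Lemma~\ref{lem:quantum-subg-van} returns, on an event of probability at least $1-\rho$, a vector $\tilde g_t$ obeying the \emph{relaxed} subgradient inequality $f(x^\star)\ge f(x_t)+\inner{\tilde g_t,\, x^\star-x_t}-E$ with $E:=23^2\sqrt d\,\sqrt{\theta d^3 G/(\rho r_1)}\cdot R+2G\sqrt d\, r_1$ (using $\|x_t-x^\star\|_2\le R$, which holds since $\Xcal$ is bounded), together with a dual-norm error bound $\sup_{g\in\partial f(\cdot)}\|\tilde g_t-g\|_2\le 23^2\sqrt d\,\sqrt{\theta d^3 G/(\rho r_1)}$; combined with $G$-Lipschitzness (so $\|g\|_2\le G$) this yields $\|\tilde g_t\|_2\le 2G$ once $\theta$ is small enough. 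Since a failure on any single step breaks the telescoping, I would take $\rho=1/(3T)$ and union-bound, so that with probability at least $2/3$ all $T$ estimates are good; each then costs $\Ocal(\log(d/\rho))=\widetilde{\Ocal}(1)$ queries.

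On this event, summing the one-step inequality over $t=1,\dots,T$, dividing by $2\eta T$, and using convexity of $f$ for the averaged feasible iterate $\bar x:=\tfrac{1}{T}\sum_{t=1}^{T}x_t\in\Xcal$ (taking $x_1\in\Xcal$), I obtain
\[
f(\bar x)-f(x^\star)\le \frac{\|x_1-x^\star\|_2^2}{2\eta T}+\frac{\eta}{2T}\sum_{t=1}^{T}\|\tilde g_t\|_2^2+E \;\le\; \frac{R^2}{2\eta T}+2\eta G^2+E .
\]
Substituting $\eta=R/(G\sqrt T)$ collapses the first two terms to $\tfrac{5}{2}GR/\sqrt{T}$, so $T=\Theta((GR/\varepsilon)^2)$ makes them at most $\varepsilon/2$; it then remains to force $E\le\varepsilon/2$. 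Splitting $E$, the additive part $2G\sqrt d\, r_1\le\varepsilon/4$ fixes $r_1=\Theta(\varepsilon/(G\sqrt d))$, and $2\cdot 23^2\sqrt d\, R\,\sqrt{\theta d^3 G/(\rho r_1)}\le\varepsilon/4$ then forces $\theta\le\Theta(\rho r_1\varepsilon^2/(d^4 G R^2))$; plugging in $\rho=1/(3T)=\Theta(\varepsilon^2/(G^2R^2))$ and the chosen $r_1$ yields precisely $\theta=\Ocal(\varepsilon^5/(G^4R^4 d^{4.5}))$. I would also check the standing hypotheses of Lemma~\ref{lem:quantum-subg-van}, namely $\rho\le 1/3$ and $\theta\le r_1 dG/\rho$, both of which hold with room to spare ($\theta\ll\sqrt d\,G^2R^2/\varepsilon$). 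Finally, tallying costs: $T=\widetilde{\Ocal}((GR/\varepsilon)^2)$ iterations, each with $\widetilde{\Ocal}(1)$ oracle queries, gives the claimed query complexity for \eqref{e:cvx_opt}; since each iteration runs Jordan's gradient circuit over a $d$-dimensional hypergrid ($\widetilde{\Ocal}(d)$ gates) plus one Euclidean projection ($\Tcal_{\Pi}$ gates), the gate count is $\widetilde{\Ocal}((d+\Tcal_{\Pi})(GR/\varepsilon)^2)$.

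The main obstacle is that, unlike the smooth and strongly convex cases, the bias $E$ does not decay along the trajectory; a naive per-step argument would accumulate it linearly in $T$. This is exactly why the theorem concerns the \emph{averaged} iterate: after telescoping, the biases contribute only a single additive $E$ to the suboptimality of $\bar x$, so it suffices to push $E$ below $\varepsilon/2$ once, at the price of the stringent $\theta$ requirement. A related subtlety is that Lemma~\ref{lem:quantum-subg-van} actually outputs a subgradient of $f$ at a random point \emph{near} $x_t$ rather than at $x_t$ itself; this is already subsumed by the $-2G\sqrt d\, r_1$ slack and by the ``for all $q$'' quantifier in the lemma's conclusion, which is precisely the shape needed to substitute $q=x^\star$ into the one-step inequality. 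One also has to ensure $\|\tilde g_t\|_2$ stays bounded --- this is not implied by $G$-Lipschitzness of $f$ alone, and is what forces $\theta$ small enough that the dual-norm error of the estimate is itself below $G$.
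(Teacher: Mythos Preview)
Your proposal is correct and follows essentially the same approach as the paper's proof: the classical projected-subgradient telescoping with $\tilde g_t$ plugged in from Lemma~\ref{lem:quantum-subg-van} (Euclidean case $\vartheta=\vartheta_*=\sqrt d$), union-bounding with $\rho=1/(3T)$, and the same parameter choices $T=\Theta((GR/\varepsilon)^2)$, $r_1=\Theta(\varepsilon/(G\sqrt d))$ that make the bias-driven constraint on $\theta$ bind at $\Ocal(\varepsilon^5/(G^4R^4d^{4.5}))$. The only cosmetic difference is that the paper does not absorb the estimator error into a blanket $\|\tilde g_t\|_2\le 2G$ bound but instead splits $\|\tilde g_t\|_2^2\le 2G^2+2\cdot 23^4\theta d^4 G/(\rho r_1)$, carrying one extra (non-binding) term $23^4\theta d^4 R/(\sqrt T\rho r_1)$ through to the final inequality; and it writes the diameter factor as $K$ rather than $R$. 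Neither changes the argument or the outcome.
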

\begin{proof}
Recall that the projected subgradient algorithm iterates as
$
x_{t+1} = \Pi_{\Xcal} \left(x_t - \eta \tilde{g}_t\right),$
where $\tilde{g}_s$ is a subgradient at $x_t$, and $\eta>0$ is a step size, which we choose later.
For analysis purposes, we denote the intermediate variable before the projection step as $z_{t}$. That is, we can write the projected subgradient method as 
\begin{align*}
    z_{t+1} &= x_t - \eta g_t, \\
    x_{t+1} &= \Pi_\mathcal{X} (z_{t+1}).
\end{align*}

We analyze the case where the projected subgradient method is performed for $T$ iterations, and outputs the average: $\frac{1}{T}\sum_{t=1}^T x_t$.
Following standard analysis of projected subgradient methods (e.g. \cite[Theorem 3.2]{bubeck2015convex}), and utilizing Lemma~\ref{lem:quantum-subg-van} (where $\vartheta = \vartheta^* = \sqrt{d}$ when we specifically use the Euclidean norm $\| \cdot \|_2$, i.e., $p=2$), we have:
\begin{align*}
&\hspace{-6mm}f(x_t)  - f(x^{\star}) \leq \inner{\tilde{g}_t, x_t -x^{\star}} + (23d)^2 \sqrt{\frac{\theta  G}{\rho r_1}} \|x_t-x^{\star}\|_2 + 2G \sqrt{d} r_1 \\
&=\frac{1}{\eta} \inner{x_t - z_{t+1}, x_t - x^{\star}} + (23d)^2 \sqrt{\frac{\theta  G}{\rho r_1}} \|x_t-x^{\star}\|_2 + 2G \sqrt{d} r_1 \\
&=\frac{1}{2\eta}\left(\lVert x_t - x^{\star}\rVert^2_2 - \lVert z_{t+1} - x^{\star}\rVert^2_2\right) + \frac{\eta}{2}\lVert \tilde{g}_t\rVert^2_2 + (23d)^2 \sqrt{\frac{\theta  G}{\rho r_1}} \|x_t-x^{\star}\|_2 + 2G \sqrt{d} r_1 \\
&\leq \frac{1}{2\eta}\left(\lVert x_t - x^{\star}\rVert^2_2 - \lVert z_{t+1} - x^{\star}\rVert^2_2\right) + \frac{\eta}{2}\lVert \tilde{g}_t\rVert^2_2 + (23d)^2 \sqrt{\frac{\theta  G}{\rho r_1}} K + 2G \sqrt{d} r_1 \\
&\leq \frac{1}{2\eta}\left(\lVert x_t - x^{\star}\rVert^2_2 - \lVert x_{t+1} - x^{\star}\rVert^2_2\right) + \frac{\eta}{2}\lVert \tilde{g}_t\rVert^2_2 + (23d)^2 \sqrt{\frac{\theta  G}{\rho r_1}} K + 2G \sqrt{d} r_1 \\
&\leq \frac{1}{2\eta}\left(\lVert x_t - x^{\star}\rVert^2_2 - \lVert x_{t+1} - x^{\star}\rVert^2_2\right) + \frac{\eta}{2}\lVert g - g + \tilde{g}_t   \rVert^2_2 + (23d)^2 \sqrt{\frac{\theta  G}{\rho r_1}} K + 2G \sqrt{d} r_1 \\
&\leq \frac{1}{2\eta}\left(\lVert x_t - x^{\star}\rVert^2_2 - \lVert x_{t+1} - x^{\star}\rVert^2_2\right) +   G^2\eta + 23^4 \eta \frac{\theta d^4G}{\rho r_1}  + (23d)^2 \sqrt{\frac{\theta  G}{\rho r_1}} K + 2G \sqrt{d} r_1 \\
\end{align*}
where in the second equality we used $\|a\|_2^2 + \|b\|_2^2 - \|a-b\|_2^2  = 2 \langle a, b \rangle$, and the second inequality is since $x_{t+1} = \Pi_\mathcal{X} (z_{t+1})$ and hence $\|z_{t+1} - x^\star \|_2 \geq \|x_{t+1} - x^\star \|_2.$ 
In the last step, we also used from \eqref{eq:grad-diff-dual-norm} with $\vartheta = \vartheta_* = \sqrt{d}$ such that 
$$\sup_{g\in\partial f(z)}\norm{g-\tilde{g}}_2 < 23^2 \sqrt{ \frac{\theta d^4 G}{\rho r_1}}. $$
for some $z$ that is internal to the quantum subgradient estimation algorithm.

Summing over $t \in [T]$, and using $\| x_1 - x^{\star} \|_2^2 \leq R^2$, we get
\begin{align*}
    \sum_{t=1}^T \left( f(x_t) - f(x^{\star}) \right) \leq \frac{R^2}{2 \eta} + \frac{\eta G^2 }{2} T + 23^4 T\eta\frac{\theta d^4G}{\rho r_1} + (23d)^2 \sqrt{\frac{\theta  G}{\rho r_1}} K T + 2G \sqrt{d} r_1 T.
\end{align*}
Dividing both sides by $T$, choosing $\eta = \frac{R}{G\sqrt{T}}$, and using $f\left( \frac{1}{T} \sum_{t=1}^T x_t \right) \leq \frac{1}{T} \sum_{t=1}^T f(x_t)$ by convexity, we have
\begin{align*}
f\left(\frac{1}{T}\sum_{t=1}^T x_t\right) - f(x^{\star}) \leq \frac{RG}{\sqrt{T}} + 23^4 \frac{\theta d^4R}{\sqrt{T}\rho r_1} + (23d)^2 \sqrt{\frac{\theta  G}{\rho r_1}} K + 2G \sqrt{d} r_1.
\end{align*}

We want the above expression to be bounded by $\varepsilon$. First, choose $T = \left( \frac{3RG}{\varepsilon} \right)^2$ and $r_1 = \frac{\varepsilon}{6 G\sqrt{d}}$. Then, the first and the fourth terms are bounded by $\varepsilon/3$. We want to second term to be bounded by $\varepsilon/3$ as well; before doing so, we first choose the failure probability $\rho$ such that the all $T$ calls of quantum subgradient estimation in Lemma~\ref{lem:quantum-subg-van} succeed with (overall) probability $\geq 2/3$. Observe that, if we choose $\rho = 1/(3T)$, then the probability that none of the $T$ subroutines fail is:
\begin{align*}
    \left( 1 - \frac{1}{3T} \right)^T \geq 1 - \frac{T}{3T} = \frac{2}{3}.
\end{align*}
Note that the above bound holds even when each call of the quantum subgradient estimation is dependent; indeed, by union bound, at least one call of the quantum subgradient estimation fails is at most $T \cdot 1/(3T) \leq 1/3$, or equivalently, the probability that none of the $T$ subroutines fails is $\geq 2/3$. With our choice $T = \left( \frac{3RG}{\varepsilon} \right)^2$, we have $\rho = \frac{\varepsilon^2}{27 G^2 R^2}$.
Given our choice the second term becomes
\begin{align*}
23^4 \frac{\theta d^4R}{\sqrt{T}\rho r_1} \leq (23^4 \cdot 9 \cdot 6) \frac{d^{4.5}\theta G^2R^2}{\varepsilon^2},
\end{align*}
and the precision requirement on the evaluation oracle $\theta $ is
\begin{align*}
    (23^4 \cdot 9 \cdot 6) \frac{d^{4.5}\theta G^2R^2}{\varepsilon^2} + (23d)^2 \cdot K \cdot \left(  \theta  G \cdot \frac{27G^2R^2}{\varepsilon^2} \cdot \frac{6G \sqrt{d}}{\varepsilon} \right)^{1/2} \leq \frac{\varepsilon}{3}.
\end{align*}
Solving for $\theta $, we get $\theta  = \Ocal \left( \frac{\varepsilon^5}{G^4 R^4 d^{4.5}} \right)$,
completing the proof.

Finally, from Lemma 18 in~\cite{van2020convex} the discrete hypergrid used in the subgradient estimation algorithm has radius $r = \frac{2r_2}{d}$, with $r_2 := \sqrt{\frac{\theta  r_1\rho}{dG}}$, giving the phase oracle cost.
\end{proof}

\section{Generalizing to non-Euclidean geometries}\label{s:mirror}
We have seen that one can obtain dimension-free oracle complexities when the objective function $f$ is Lipschitz in the Euclidean norm. However, if $f$ is well-behaved in some other norm, the schemes discussed in the previous section may fail to maintain their dimension-free rates. Moreover, many problems in optimization are defined over non-Euclidean spaces like the probability simplex and the cone of positive semidefinite matrices. 

In~\cite{nemirovskii1983problem} Nemirovskii and Yudin observed that one can first define a mapping from the primal space to the dual space, perform the subgradient update in the dual space, and subsequently map back to the primal. This approach allows one to better leverage the problem geometry, and is well-suited to high-dimensional constrained optimization problems. In order to keep the paper self-contained, we briefly review some concepts that are essential to the mirror descent framework. Then, we prove that one can instantiate mirror descent with quantum (sub)gradient estimation, and provide specialized implementations, such as Nesterov's dual averaging \cite{nesterov2009primal} and mirror prox \cite{nemirovski2004prox}.

\textbf{Note that within this section, we use $\| \cdot \|$ to denote an arbitrary $p$-norm ($p \geq 1$), and its dual norm as $\| \cdot \|_*$.}

\subsection{Bregman divergence and Bregman projection}
The methods we discuss in this section will require distance measures beyond the Euclidean metric. 
We work with the \textit{Bregman divergence}, defined below.
\begin{definition}[Bregman divergence]
    Let $\Pcal \subset \R{d}$ be a convex open set. Given a strictly convex differentiable function $\Phi: \Pcal \rightarrow \R{}$, we define the associated Bregman divergence $D_\Phi: \Pcal \times \Pcal \rightarrow \R{}_{\geq 0}$ by
    $$ D_\Phi (x, \bar{x}) := \Phi(x) - \Phi (\bar{x}) - \inner{ \nabla \Phi(\bar{x}), x - \bar{x}}.$$
\end{definition}
Note that $D_\Phi (x, \bar{x})$ is nonnegative for all $(x, \bar{x}) \in \Pcal \times \Pcal$, since $\Phi$ is convex. The Bregman divergence can be interpreted as the level of error present in the first-order Taylor series approximation of $\Phi$ at $\bar{x}$ to $\Phi(x)$. Unlike the Euclidean distance, the Bregman divergence is not symmetric. This is because the Bregman divergence is computed according to the local geometry with respect to $\nabla \Phi$ at $\bar{x}$.

The following identity will be useful in our analysis later:
\begin{equation}\label{e: bregman_inner_prod}
    \inner{ \nabla \Phi(x) - \nabla \Phi (\bar{x}), x - w} = D_{\Phi} (x, \bar{x}) + D_{\Phi} (w , x) -  D_{\Phi} (w , \bar{x}). 
\end{equation}
For a closed convex set $\Xcal \subseteq \Pcal$, the \textit{Bregman projection} $\Pi_{\Xcal}^{\Phi}:\Pcal\rightarrow\Xcal$ of a point $x \in \R{d}$ onto $\Xcal$ is defined as 
$ \Pi_{\Xcal}^{\Phi} (\bar{x}) := \argmin_{x \in \Xcal} D_{\Phi} (x, \bar{x}).$
Note that the Bregman projection is uniquely defined since $\Phi$ is strictly convex. 

\subsection{Methods of mirror descent}\label{ss:md_setups}
Mirror decent fundamentally applies a primal-dual approach to convex optimization, relying on the ability to map points from a primal ambient space $\Pcal$ to its dual space $\R{d}$ (and vise-versa). 
This mapping is defined as the gradient of a \textit{mirror map}, which we formally define next. For more details on mirror maps, we refer the reader to \cite[Section 4.1]{bubeck2015convex}.
\begin{definition}[Mirror map] \label{def:mirror-map}
    Let $\Pcal \subset \R{d}$ be a convex open set that contains $\Xcal$ in its closure, i.e., $\Xcal \subset \cl (\Pcal)$. Further assume that the intersection of $\Xcal$ and $\Pcal$ is nonempty. A mirror map $\Phi: \Pcal \rightarrow \R{}$ is a strictly convex, differentiable function, whose gradient $\nabla \Phi$ satisfies the following: 
    \begin{enumerate}
        \item The gradient of $\Phi$ takes all possible values in $\R{d}$, i.e., $\nabla \Phi (\Pcal) = \R{d}$. 
        \item The gradient of $\Phi$ diverges on the boundary of $\Pcal$, i.e., 
        $ \lim_{x \to \partial \Pcal} \norm{ \nabla \Phi(x)} = \infty.$
    \end{enumerate}
\end{definition}

At each iterate $t$ of a mirror descent algorithm, a primal point $x_t \in \Xcal \cap \Pcal$ is mapped to a point $\nabla \Phi (x_t) \in \R{d}$ in the dual space. From here, a subgradient step (with step length $\eta$) is taken in an improving direction with respect to the dual.
One can compactly write the mirror descent iterate as
\begin{equation}\label{e:MD}\tag{MD}
     x_{t+1} = \Pi_{\Xcal}^{\Phi} \left(\nabla \Phi^{-1} \left(  \nabla \Phi (x_t) - \eta g_t  \right)\right), \quad g_t \in \partial f(x_t).
\end{equation}

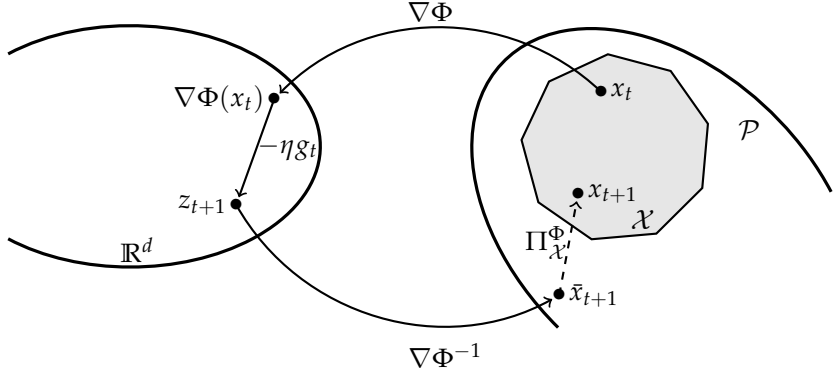
\begin{figure}
    \centering
    \begin{tikzpicture}
        \draw[very thick] (-8,-1) arc(130:-130:2.5cm and 1.6cm);
        \draw[very thick, rotate = -45] (4,0) arc(65:270:3cm and 2cm);
        \node[regular polygon, draw, thick, rotate=5, regular polygon sides = 9, text = blue, fill = gray!20, minimum size = 2.5cm] (p) at (0, -2.25) {};
        \draw (-6.3, -3.6) node {$\mathbb{R}^d$};
        \draw (.35, -3.2) node {$\Xcal$};
        \draw (1.75, -2) node {$\Pcal$};

        \draw (-.2, -1.5) node {$\bullet$}; 
        \draw (-.2, -1.5) coordinate (xt) node[right] {$x_t$};

        \draw (-.5, -2.85) node {$\bullet$}; 
        \draw (-.5, -2.85) coordinate (xt1) node[right] {$x_{t+1}$};

        \draw (-4.5, -1.6) node {$\bullet$}; 
        \draw (-4.5, -1.6) coordinate (Pxt) node[left] {$\nabla \Phi (x_t)$};

        \draw (-5, -3) node {$\bullet$}; 
        \draw (-5, -3) coordinate (yt) node[left] {$z_{t+1}$};

        \draw (-.75, -4.2) node {$\bullet$}; 
        \draw (-.75, -4.2) coordinate (txt) node[right] {$\bar{x}_{t+1}$};

        \draw[-> , thick, shorten >= 3pt] (xt) to [bend right = 45] (Pxt) node[above, midway, xshift=-2.45cm, yshift=-.7cm]{$\nabla \Phi$};

        \draw[-> , thick, shorten >= 3pt] (yt) to [bend right = 45] (txt) node[below, xshift=-1.5cm, yshift=-.5cm]{$\nabla \Phi^{-1}$};

         \draw[-> , thick, shorten >= 3pt] (Pxt) to (yt) node[right, xshift=.15cm, yshift=.75cm]{$- \eta g_t$};

         \draw[-> , dashed, thick, shorten >= 3pt] (txt) to (xt1) node[left, yshift=-.65cm] {$\Pi^{\Phi}_{\Xcal}$};
    \end{tikzpicture}
    \caption{One iteration of mirror descent. At iteration $t$, the point $x_t\in\Xcal$ in the primal space $\Pcal$ is mapped into the dual space $\mathbb{R}^d$, where a subgradient step can be taken. Upon mapping back to $\Pcal$, the Bregman projection $\Pi_\Xcal^\Phi$ is applied to obtain the point $x_{t+1}$ in $\Xcal$.}
    \label{fig: MD}
\end{figure}

The following lemma from \cite{bubeck2015convex} illustrates how the Bregman divergence functions similar to the Euclidean norm squared in terms of projections. 
\begin{lemma}[Lemma 4.1 in \cite{bubeck2015convex}]\label{lem:bregman_under_proj}
    Let $x \in \Xcal \cap \Pcal$ and $\bar{x} \in \Pcal$. Then, 
    $$ \inner{ \nabla \Phi \left(\Pi_{\Xcal}^{\Phi} (\bar{x}) \right) -  \nabla \Phi (\bar{x})  , \Pi_{\Xcal}^{\Phi} (\bar{x}) - x} \leq 0 \implies D_{\Phi} \left( x, \Pi_{\Xcal}^{\Phi} (\bar{x}) \right) + D_{\Phi} \left( \Pi_{\Xcal}^{\Phi} (\bar{x}), \bar{x} \right) \leq D_{\Phi} (x, \bar{x}).$$
\end{lemma}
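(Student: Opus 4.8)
The plan is to reduce the claimed inequality to the three-point identity \eqref{e: bregman_inner_prod} already recorded above, so essentially no new computation is needed. Write $y := \Pi_{\Xcal}^{\Phi}(\bar{x})$ for brevity; since $y \in \Xcal \cap \Pcal$, all Bregman divergences appearing below are well-defined and $\nabla\Phi$ is defined at both $y$ and $\bar{x}$.

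First I would instantiate the identity \eqref{e: bregman_inner_prod}, namely $\inner{\nabla\Phi(a) - \nabla\Phi(b), a - c} = D_{\Phi}(a,b) + D_{\Phi}(c,a) - D_{\Phi}(c,b)$, with the substitution $a = y$, $b = \bar{x}$, $c = x$. This gives
\[
\inner{\nabla\Phi(y) - \nabla\Phi(\bar{x}),\, y - x} = D_{\Phi}(y,\bar{x}) + D_{\Phi}(x,y) - D_{\Phi}(x,\bar{x}).
\]
(If one prefers not to invoke \eqref{e: bregman_inner_prod}, the same equation follows in one line by expanding each of the three Bregman divergences from their definition and cancelling the $\Phi(\cdot)$ terms.) Next I would apply the hypothesis: the left-hand side above is exactly $\inner{\nabla\Phi(\Pi_{\Xcal}^{\Phi}(\bar{x})) - \nabla\Phi(\bar{x}),\, \Pi_{\Xcal}^{\Phi}(\bar{x}) - x}$, which is assumed to be $\le 0$. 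Rearranging the displayed equation therefore yields
\[
D_{\Phi}(x,y) + D_{\Phi}(y,\bar{x}) = D_{\Phi}(x,\bar{x}) + \inner{\nabla\Phi(y) - \nabla\Phi(\bar{x}),\, y - x} \le D_{\Phi}(x,\bar{x}),
\]
which is precisely the claimed conclusion.

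There is essentially no analytic obstacle; the only point requiring care is matching the three arguments when substituting into \eqref{e: bregman_inner_prod}, since the identity is not symmetric in its slots. For completeness one may also remark that the antecedent inner-product inequality is not an additional hypothesis in the applications: it is exactly the first-order optimality condition for the convex program $\min_{x' \in \Xcal} D_{\Phi}(x',\bar{x})$ that defines the Bregman projection $y$, using that $x' \mapsto D_{\Phi}(x',\bar{x})$ has gradient $\nabla\Phi(x') - \nabla\Phi(\bar{x})$ and that $\Xcal$ is convex; so the lemma as stated isolates the useful half of this standard argument.
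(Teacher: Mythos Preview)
Your proof is correct. The paper does not actually supply its own proof of this lemma---it is quoted verbatim from \cite{bubeck2015convex} without argument---so there is nothing to compare against directly. That said, your reduction to the three-point identity \eqref{e: bregman_inner_prod} is exactly the standard argument one finds in Bubeck's notes, and your closing remark that the antecedent inequality is automatically the first-order optimality condition for the Bregman projection is the right way to see why the lemma is useful in the subsequent mirror-descent analysis.
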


Adapting a standard convergence proof for mirror descent (see, e.g., \cite[Theorem 4.2]{bubeck2015convex}) to account for quantum subgradient estimation suffices to prove the following result for $\ell_p$ and Schatten norms. 
\begin{theorem}[Mirror Descent Using Quantum Subgradient Estimation]\label{thm:MD_convergence}
    Let $\Xcal \subseteq \R{d}$ be a closed and bounded convex set equipped with a given $p$-norm $\| \cdot\|$ with $p \geq 1$. Denote $K := \diam(\Xcal) = \sup_{x, y \in \Xcal} \| x - y \|$.
    Suppose $f:\R{d} \rightarrow \R{}$ is a convex function that is $G$-Lipschitz with respect to $\| \cdot \|$ and attains its minimum at $x^\star \in \Xcal \cap \Pcal$. Let $\Phi$ be a mirror map that is $\mu$-strongly convex on $\Xcal \cap \Pcal$ with respect to $\| \cdot\|$. Suppose we are given a point $x_1 \in \mathcal{X} \cap \mathcal{P}$ and define $R^2 = D_{\Phi}(x^\star, x_1)$.
    Assume one has access to an $\theta $-precise binary oracle for $f$, with 
 $$\theta = \mathcal{O}\left(\frac{\mu\varepsilon^5}{G^4R^2K^2\vartheta_*^2 \vartheta d^3}\right).$$
    Consider the zeroth-order quantum mirror descent method 
    \begin{equation}\tag{MD}
        x_{t+1} = \Pi_{\Xcal}^{\Phi} \left(\nabla \Phi^{-1} \left(  \nabla \Phi (x_t) - \eta \tilde{g}_t  \right)\right),
    \end{equation} 
    with $\eta = \frac{R}{G}\sqrt{\frac{\mu}{T}}$,
    that outputs the average $\frac{1}{T} \sum_{t=1}^T x_t$ after running $T$ iterations, where the  $\tilde{g}_t$ is an approximate element of $\partial f(x_t)$ estimated via the quantum subgradient estimation in Lemma~\ref{lem:quantum-subg-van} at each iterate. Then, with probability at least $2/3$, one can obtain an $\varepsilon$-approximate solution with $\widetilde{\Ocal}  \Big( \big( \frac{GR}{\sqrt{\mu} \varepsilon} \big)^2 \Big)$ queries to $\theta $-approximate evaluation oracle of $f$ and $\widetilde{\Ocal}  \Big((d + \Tcal_{\text{next}}) \big( \frac{GR}{\sqrt{\mu} \varepsilon} \big)^2 \Big)$ gates, where $\Tcal_{\text{next}}$ is the time to carry out the operations necessary to progress to the next iterate. 
\end{theorem}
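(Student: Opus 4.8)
The plan is to adapt the textbook convergence analysis of mirror descent (\cite[Theorem~4.2]{bubeck2015convex}) so that it tolerates the biased, randomly‑displaced subgradients produced by Lemma~\ref{lem:quantum-subg-van}, and then to fix the free parameters---the iteration count $T$, the step size $\eta$, the sampling radius $r_1$ used inside the subgradient estimation, and the per‑call failure probability $\rho$---so that the accumulated error is at most $\varepsilon$ while the query count stays $\widetilde\Ocal((GR/(\sqrt\mu\varepsilon))^2)$. The key structural point, exactly as in the Euclidean subgradient case, is that we only ever invoke the \emph{approximate subgradient inequality} guaranteed by Lemma~\ref{lem:quantum-subg-van}, so that after averaging the iterates the per‑step bias enters the final bound additively rather than compounding.

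First I would establish the one‑step inequality. Writing $z_{t+1} := \nabla\Phi^{-1}(\nabla\Phi(x_t)-\eta\tilde g_t)$ so that $x_{t+1}=\Pi^\Phi_\Xcal(z_{t+1})$, I expand $\eta\langle\tilde g_t,x_t-x^\star\rangle=\langle\nabla\Phi(x_t)-\nabla\Phi(z_{t+1}),x_t-x^\star\rangle$ via the three‑point identity~\eqref{e: bregman_inner_prod}, apply Lemma~\ref{lem:bregman_under_proj} to absorb $D_\Phi(x^\star,z_{t+1})\ge D_\Phi(x^\star,x_{t+1})+D_\Phi(x_{t+1},z_{t+1})$, and use $\mu$‑strong convexity of $\Phi$ with H\"older's inequality to get $D_\Phi(x_t,z_{t+1})-D_\Phi(x_{t+1},z_{t+1})\le \eta^2\lVert\tilde g_t\rVert_*^2/(2\mu)$. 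Combined with the approximate subgradient inequality at $q=x^\star$ (using $\lVert x_t-x^\star\rVert\le K$) and the dual‑norm error bound~\eqref{eq:grad-diff-dual-norm} together with $G$‑Lipschitzness (which gives $\lVert\tilde g_t\rVert_*\le G+23^2\vartheta_*\sqrt{\theta d^3G/(\rho r_1)}$), this yields, conditioned on the $t$‑th call succeeding,
\[
 f(x_t)-f(x^\star)\le \tfrac1\eta\big(D_\Phi(x^\star,x_t)-D_\Phi(x^\star,x_{t+1})\big)+\tfrac{\eta}{2\mu}\lVert\tilde g_t\rVert_*^2+23^2\vartheta_*\sqrt{\tfrac{\theta d^3G}{\rho r_1}}\,K+2G\vartheta r_1.
\]
Telescoping over $t\in[T]$ using $D_\Phi(x^\star,x_1)=R^2$ and $D_\Phi(x^\star,x_{T+1})\ge 0$, dividing by $T$, and applying convexity $f(\bar x_T)\le\frac1T\sum_t f(x_t)$, the first two contributions become $\Ocal(GR/\sqrt{\mu T})$ once $\eta=\tfrac RG\sqrt{\mu/T}$ and $\theta$ is small enough that the noise part of $\lVert\tilde g_t\rVert_*$ is $\Ocal(G)$; hence $T=\Theta(G^2R^2/(\mu\varepsilon^2))=\widetilde\Theta((GR/(\sqrt\mu\varepsilon))^2)$ controls them. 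I would then set $r_1=\Theta(\varepsilon/(G\vartheta))$ to make the $2G\vartheta r_1$ term $\Ocal(\varepsilon)$, take $\rho=\Theta(1/T)=\Theta(\mu\varepsilon^2/(G^2R^2))$ so that a union bound over the $T$ subroutine calls leaves overall success probability $\ge 2/3$, and impose that the remaining $\sqrt\theta$‑term be $\Ocal(\varepsilon)$; substituting these choices of $\rho$ and $r_1$ into $23^2\vartheta_*\sqrt{\theta d^3G/(\rho r_1)}\,K\lesssim\varepsilon$ recovers exactly $\theta=\Ocal(\mu\varepsilon^5/(G^4R^2K^2\vartheta_*^2\vartheta d^3))$ (one also checks this $\theta$ respects the hypothesis $\theta\in(0,r_1dG/\rho]$ of Lemma~\ref{lem:quantum-subg-van} in this regime).

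For the resource counts, each iteration makes one call to Lemma~\ref{lem:quantum-subg-van}---$\Ocal(\log(d/\rho))=\widetilde\Ocal(1)$ queries to $U^{(\theta)}_f$ and $\widetilde\Ocal(d)$ gates for the Jordan‑style gradient circuit---plus $\Tcal_{\text{next}}$ gates to evaluate $\nabla\Phi$, take the dual step, apply $\nabla\Phi^{-1}$, and Bregman‑project onto $\Xcal$; multiplying by $T$ gives the claimed $\widetilde\Ocal((GR/(\sqrt\mu\varepsilon))^2)$ queries and $\widetilde\Ocal((d+\Tcal_{\text{next}})(GR/(\sqrt\mu\varepsilon))^2)$ gates. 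I expect the main obstacle to be the parameter balancing: the two additive biases $23^2\vartheta_*\sqrt{\theta d^3G/(\rho r_1)}\,K$ and $2G\vartheta r_1$ couple $\theta$ to both $\rho$ and $r_1$, which are themselves tied to $T$ (hence to $\varepsilon$) through the step size and the union bound, so one must untangle this chain of dependencies to land precisely on the stated $\theta$; by contrast, strong convexity of $\Phi$ cleanly turns the subgradient‑norm estimate into the $\eta^2\lVert\tilde g_t\rVert_*^2/(2\mu)$ term and is what lets the argument run uniformly over $\ell_p$ (and Schatten) norms via the comparison constants $\vartheta,\vartheta_*$.
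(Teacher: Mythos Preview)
Your proposal is correct and follows essentially the same route as the paper's own proof: start from the approximate subgradient inequality of Lemma~\ref{lem:quantum-subg-van}, combine it with the three-point identity~\eqref{e: bregman_inner_prod} and Lemma~\ref{lem:bregman_under_proj}, use $\mu$-strong convexity of $\Phi$ to bound the residual Bregman term by $\eta^2\lVert\tilde g_t\rVert_*^2/(2\mu)\le \eta^2(G+\Xi_2)^2/(2\mu)$, telescope and average, and then fix $T,\eta,r_1,\rho,\theta$ exactly as you describe. The paper also tracks the $\eta\Xi_2^2/\mu$ contribution explicitly when solving for $\theta$, but since it is linear in $\theta$ it is dominated by the $\sqrt{\theta}$-term you single out, so the binding constraint and final precision bound coincide with yours.
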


\begin{proof}
From Lemma~\ref{lem:quantum-subg-van}, the output $\tilde{g}$ satisfies the following with at least $1- \rho$ probability:
\begin{align} 
        f(q) \geq f(x) +  \inner{ \tilde{g},q-x} -23^2\vartheta_* \sqrt{\frac{\theta d^3 G}{\rho r_1}}\norm{q - x}- 2G\vartheta r_1, \nonumber
\end{align} 
where $1 \leq \vartheta, \vartheta^* \leq d$ and $\vartheta \vartheta_* \leq d$. With $q \leftarrow x^\star$ and rearranging, we have the following:
\begin{align} 
\label{eq:capital_xi_1}
        \Xi_1 &:= 23^2\vartheta_* \sqrt{\frac{\theta d^3 G}{\rho r_1}} K + 2G\vartheta r_1 \\
        &\geq 23^2\vartheta_* \sqrt{\frac{\theta d^3 G}{\rho r_1}}\norm{x^\star - x} + 2G\vartheta r_1  \nonumber \\  
        &\geq f(x) - f(x^\star) +  \inner{ \tilde{g},x^\star-x} , \nonumber
\end{align} 
where we also define the quantity $\Xi_1$.

Continuing, we have 
\begin{align*}
    f(x_t) - f(x^{\star}) &\leq \inner{\tilde{g}_t, x_t - x^{\star}}  + \Xi_1 \\
    \overset{\text{Eq.}~\eqref{e:MD}}&{=} \frac{1}{\eta} \inner{ \nabla \Phi (x_t) - \nabla \Phi (\bar{x}_{t+1}) , x_t - x^{\star}}  + \Xi_1  \\
    \overset{\text{Eq.}~\eqref{e: bregman_inner_prod}}&{=} \frac{1}{\eta} \left( D_{\Phi}\left(x^{\star}, x_t \right) + D_{\Phi}\left(x_t, \bar{x}_{t+1} \right)  - D_{\Phi}\left(x^{\star}, \bar{x}_{t+1} \right)     \right) + \Xi_1 \\
    \overset{\text{Lem.}~\ref{lem:bregman_under_proj}}&{\leq} \frac{1}{\eta} \left( D_{\Phi}\left(x^{\star}, x_t \right) + D_{\Phi}\left(x_t, \bar{x}_{t+1} \right)  - D_{\Phi}\left(x^{\star}, x_{t+1} \right)  -D_{\Phi}\left(x_{t+1}, \bar{x}_{t+1} \right)    \right)  + \Xi_1.
\end{align*}
The term $D_{\Phi}\left(x^{\star}, x_t \right) - D_{\Phi}\left(x^{\star}, x_{t+1} \right)$ leads to a telescopic sum when summing over $t \in [T]$. What remains is to bound the other term using $\mu$-strong convexity of the mirror map combined with the identity 
$$ az - bz^2 \leq \frac{a^2}{4b} , \quad \forall z \in \R{}.$$
Indeed, 
\begin{align}
\label{eqn:eta_2_inequality}
    D_{\Phi}\left(x_t, \bar{x}_{t+1} \right) - D_{\Phi}\left(x_{t+1}, \bar{x}_{t+1} \right) &= \Phi (x_t) - \Phi (x_{t+1}) - \inner{ \nabla \Phi (\bar{x}_{t+1}), x_{t} - x_{t+1}} \nonumber\\
    &\leq \inner{\nabla \Phi (x_t) - \nabla \Phi (\bar{x}_{t+1}), x_t - x_{t+1}} - \frac{\mu}{2} \norm{ x_t - x_{t+1}}^2 \nonumber\\
    &= \eta \inner{\tilde{g}_t , x_t - x_{t + 1}} -  \frac{\mu}{2} \norm{ x_t - x_{t+1}}^2 \nonumber\\
    &\leq \eta \norm{\tilde{g}_t}_{*} \norm{ x_t - x_{t + 1}} -  \frac{\mu}{2} \norm{ x_t - x_{t+1}}^2 \nonumber\\
    &\leq \eta \norm{\tilde{g}_t - g + g}_{*} \norm{ x_t - x_{t + 1}} -  \frac{\mu}{2} \norm{ x_t - x_{t+1}}^2 \nonumber\\
    &\leq \eta \left( ( \norm{\tilde{g}_t - g}_{*} + G ) \norm{ x_t - x_{t + 1}} \right) -  \frac{\mu}{2} \norm{ x_t - x_{t+1}}^2  \nonumber\\
    &\leq \frac{\left( \eta ( G + \Xi_2) \right)^2}{2 \mu} \nonumber \\ &\leq \frac{\eta^2 G^2}{\mu} + \frac{\eta^2 \Xi_2^2}{\mu},
\end{align}
where in the second to last step, we used from \eqref{eq:grad-diff-dual-norm}:
\begin{align}
\label{eq:capital_xi_2}
        \sup_{g\in\partial f(z)}\norm{g-\tilde{g}}_{*} < 23^2 \vartheta_*\sqrt{\frac{\theta d^3 G}{\rho r_1}} := \Xi_2,
\end{align}
for some $z$ that is internal to the quantum subgradient estimation algorithm.

Therefore, 
\begin{align*}
    \sum_{s= 1}^T \left( f(x_t) - f(x^{\star}) \right) &\leq   \frac{D_{\Phi} (x^{\star} , x_1)}{\eta} + \eta  \frac{G^2}{\mu} T +  T \Xi_1 + T \Xi_2^2 \frac{\eta}{\mu} \\
&\leq \frac{R^2}{\eta} + \eta  \frac{G^2}{\mu} T +  \left[ 23^2\vartheta_* \sqrt{\frac{\theta d^3 G}{\rho r_1}} K + 2G\vartheta r_1+(23)^4\frac{\eta}{\mu}\vartheta_*^2\frac{\theta d^3 G}{\rho r_1}
\right] T
\end{align*} 

Dividing both sides by $T$, taking $\eta = \frac{R}{G} \sqrt{\frac{\mu}{T}}$ and using convexity, i.e., 
$$ f\left( \frac{1}{T} \sum_{s=1}^T x_t \right) \leq \frac{1}{T} \sum_{s=1}^T f(x_t)$$
one obtains 
\begin{align}
  f\left( \frac{1}{T} \sum_{s=1}^T x_t \right)  - f (x^{\star}) 
  &\leq  \frac{2GR}{ \sqrt{\mu T}} + \mathcal{O}\left(\vartheta_* \sqrt{\frac{\theta d^3 G}{\rho r_1}} K + G\vartheta r_1+\frac{R}{\sqrt{\mu T}}\vartheta_*^2\frac{\theta d^3}{\rho r_1}\right). \label{eqn:last_inequality}
\end{align}

We want the above expression to be bounded by $\varepsilon$. First, choose $T = \left(\frac{6GR}{\sqrt{\mu} \varepsilon} \right)^2$ and $r_1 = \mathcal{O}\left(\frac{\varepsilon}{G \vartheta}\right)$. Then, the first and the fourth terms are bounded by $\varepsilon/3$. We want the middle terms to be bounded by $\varepsilon/3$ as well; before doing so, we first choose the failure probability $\rho$ such that the all $T$ calls of quantum subgradient estimation in Lemma~\ref{lem:quantum-subg-van} succeed with (overall) probability $\geq 2/3$. Observe that, if we choose $\rho = 1/(3T)$, then the probability that none of the $T$ subroutines fail is:
\begin{align*}
    \left( 1 - \frac{1}{3T} \right)^T \geq 1 - \frac{T}{3T} = \frac{2}{3}.
\end{align*}
Note that the above bound holds even when each call of the quantum subgradient estimation is dependent; indeed, by union bound, at least one call of the quantum subgradient estimation fails is at most $T \cdot 1/(3T) \leq 1/3$, or equivalently, the probability that none of the $T$ subroutines fails is $\geq 2/3$. With our choice $T = \left( \frac{6GR}{\sqrt{\mu} \varepsilon} \right)^2$, we have $\rho = \mathcal{O}\left(\frac{\mu\varepsilon^2}{G^2 R^2}\right)$.

Given these choices, the precision requirement on the evaluation oracle $\theta $ is determined by the middle two terms in \eqref{eqn:last_inequality}, giving
\begin{align*}
    \theta = \mathcal{O}\left(\frac{\mu\varepsilon^5}{G^4R^2K^2\vartheta_*^2 \vartheta d^3}\right).
\end{align*}

\end{proof}

We now review some common setups for the mirror descent framework that are relevant to the sequel, following a discussion from \cite[Chapter 4.3]{bubeck2015convex}. When one takes $\Phi (x) = \frac{1}{2} \norm{x}_2^2$ on $\Pcal = \R{d}$, the mirror map is strongly convex with respect to $\| \cdot\|_2$ and the associated Bregman divergence is $D_{\Phi} (x, \bar{x}) = \frac{1}{2} \norm{x - \bar{x}}_2^2$. In other words, mirror descent reduces to the projected subgradient method described in Theorem~\ref{thm:subgradient_descent_convergence}.

For the simplex $\Delta^d = \{ x \in \R{d}_{\geq 0} : \sum_{i \in [d]} x_i = 1\}$, one can take the mirror map to be the negative entropy 
$$ \Phi (x) = \sum_{i \in [d]} x_i \log x_i$$
on $\Pcal = \R{d}_{+}$, which is 1-strongly convex with respect to $\| \cdot \|_1$. This setting is relevant for linear programming and zero-sum games. When $\Xcal = \Delta^d$, one has $x_1 = (1/d, \dots, 1/d)$, $R^2 = \log (d)$ and projection is simply defined as renormalization by the $\ell_1$-norm, i.e., for any $ \bar{x} \in \Pcal$
$$ \Pi_{\Delta^d}^{\Phi} : \bar{x} \mapsto \frac{\bar{x}}{\| \bar{x} \|_1}.$$ 
As a consequence, the iterates also admit a simple closed form expression:
\begin{equation}\label{e:simplex_update}
  x_{t+1, i} = \frac{x_{t,i} e^{- \eta g_{t, i}}}{\sum_{i \in [d]} x_{t,i} e^{- \eta g_{t, i}}}, \quad g_t \in \partial f(x_t),~i \in [d],
\end{equation}
and the associated Bregman divergence is the \textit{Kullback-Leibler} (KL) divergence 
$$D_{\Phi}(x, \bar{x}) = \sum_{i \in [d]} x_i \log(x_i/\bar{x}_i).$$

Mirror descent can also be applied to solve semidefinite programs by casting the problem into the form of minimizing a convex function $f$ over the spectraplex 
$$\mathbb{S}^d := \{ X \in \R{d \times d} : \trace(X) = 1, X \succeq 0 \}.$$
Here, the relevant mirror map is the negative von Neumann entropy
$$ \Phi (X) := \trace \left( X \log (X) \right),$$
which is $\frac{1}{2}$-strongly convex with respect to the trace norm $\| \cdot\|_{\trace}$ (i.e., the $\ell_1$-norm of the singular values). The associated Bregman divergence is the \textit{quantum relative entropy}
$$ D_{\Phi} (X, \oline{X}) = S(X \| \oline{X}) := \trace \left( X \left( \log (X) - \log (\oline{X} \right) \right),$$
and the Bregman projection is simply renormalization with respect to trace:  
$$ \Pi_{\mathbb{S}^d}^{\Phi} : \oline{X} \mapsto \frac{\oline{X}}{\trace (\oline{X})}.$$
In fine, the iterates take the form 
\begin{equation}\label{e:spectraplex_update}
    X_{t +1} = \frac{ \exp (\log (X_t) - \eta g_t)}{\trace \left( \exp (\log (X_t) - \eta g_t) \right)}.
\end{equation}
For $\Xcal = \mathbb{S}^d$, one can choose the starting point $X = \frac{1}{d} I$, which gives $R^2 = \log (d)$.

Observe that for functions $f$ that are $1$-Lipschitz with respect to $\| \cdot \|_1$, mirror descent with the negative entropy mirror map enjoys a $\Ocal \left( \sqrt{\frac{\log (d)}{T}}\right)$ rate of convergence, whereas the projected subgradient method in Theorem~\ref{thm:subgradient_descent_convergence} only achieves a rate of $\Ocal \left( \sqrt{\frac{d}{T}}\right)$. Moreover, the rate of convergence for the spectraplex is the same as that for the simplex. 

As noted in \cite[Chapter 4.2]{bubeck2015convex}, one can re-write mirror descent iterates as
\begin{align}
x_{s+1} &= \argmin_{x \in \Xcal \cap \Pcal} D_{\Phi} (x, \bar{x}_{s+1}) \nonumber \\
&= \argmin_{x \in \Xcal \cap \Pcal}  \Phi (x) - \inner{ \nabla \Phi (\bar{x}_{s+1}), x} \label{e:dual_avg_motivation} \\
    &= \argmin_{x \in \Xcal \cap \Pcal}  \Phi (x) - \inner{ \nabla \Phi (x_{s}) - \eta g_s, x} \nonumber \\
     &= \argmin_{x \in \Xcal \cap \Pcal}  \eta \inner{g_s, x} +  D_{\Phi} (x, x_s) \label{e:mirror_prox_motivation}.
\end{align}
Next, we will see how \eqref{e:dual_avg_motivation} is used to motivate Nesterov's \textit{dual averaging} \cite{nesterov2009primal}, while \eqref{e:mirror_prox_motivation} yields the so-called \textit{mirror prox} perspective \cite{beck2003mirror, nemirovski2004prox}. The former can be more efficient than the prototypical mirror descent scheme in some settings, whereas mirror prox will allow us to work with functions whose gradients are Lipschitz in arbitrary norms.

\subsection{Dual averaging}
Now we discuss an alternative instantiation of the mirror descent algorithm, commonly referred to as Nesterov's dual averaging or \textit{lazy mirror descent}. The basic idea is to replace the mirror descent update \eqref{e:MD} with 
\begin{equation}\label{e:DA}\tag{DA}
    \nabla \Phi (y_{t+1}) = \nabla \Phi (x_t) - \eta g_t,
\end{equation}
with $y_{t+1} \in \Pcal$,
and $x_1$ is chosen such that $\nabla \Phi (x_1) =0$.  Put simply, rather than move back and forth between the primal and dual space, we accumulate all subgradient steps in the dual space. If asked to return a primal point, we project the current dual point using 
\begin{align}
\label{e:DA_proj}\tag{DA-Project}
     x_t \in \argmin_{x \in \Xcal \cap \Pcal} \left\{ \eta \sum_{s=1}^{t-1}  \inner{\tilde{g}_s, x} + \Phi (x) \right\}.
\end{align}
\begin{theorem}[Quantum Dual Averaging]\label{thm:DA_convergence}
    Let $\Xcal \subseteq \R{d}$ be a closed and bounded, convex set equipped with a given $p$-norm $\| \cdot\|$ with $p \geq 1$. Denote $K := \diam(\Xcal) = \sup_{x, y \in \Xcal} \| x - y \|$.
    Suppose $f:\R{d} \rightarrow \R{}$ is a convex function that is $G$-Lipschitz with respect to $\| \cdot \|$ and attains its minimum at $x^\star \in \Xcal \cap \Pcal$. Let $\Phi$ be a mirror map that is $\mu$-strongly convex on $\Xcal \cap \Pcal$ with respect to $\| \cdot\|$.  Assume one has access to an $\theta $-precise evaluation oracle for $f$, with  $$\theta = \mathcal{O}\left(\frac{\mu\varepsilon^5}{G^4R^2K^2\vartheta_*^2 \vartheta d^3}\right),$$ and access to a starting point $x_1 \in \mathcal{X} \cap \mathcal{P}$ such that $\nabla \Phi (x_1) = 0$, then define $R^2 = D_{\Phi}(x^\star, x_1)$.
    Consider the zeroth-order quantum lazy mirror descent method \eqref{e:DA}
    with $\eta = \frac{R}{G}\sqrt{\frac{\mu}{T}}$,
    that outputs the average $\frac{1}{T} \sum_{t=1}^T x_t$ (using Equation \eqref{e:DA_proj}) after running $T$ iterations, where the  $\tilde{g}_t$ is an approximate element of $\partial f(x_t)$ estimated via the quantum subgradient estimation in Lemma~\ref{lem:quantum-subg-van} at each iterate. Then, with probability at least $2/3$, one can obtain an $\varepsilon$-approximate solution with $\widetilde{\Ocal}  \Big( \big( \frac{GR}{\sqrt{\mu} \varepsilon} \big)^2 \Big)$ queries to $\theta $-approximate binary oracle of $f$ and $\widetilde{\Ocal}  \Big(d \big( \frac{GR}{\sqrt{\mu} \varepsilon} \big)^2 \Big)$ gates.
\end{theorem}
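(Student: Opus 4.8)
The plan is to follow the proof of Theorem~\ref{thm:MD_convergence} (quantum mirror descent) essentially verbatim, replacing the mirror-descent potential argument with the standard convergence analysis for lazy mirror descent / dual averaging (as in \cite[Theorem 4.3]{bubeck2015convex}), and tracking how the two error quantities produced by Lemma~\ref{lem:quantum-subg-van} propagate through it. \textbf{Step 1 (error bookkeeping).} At each iterate $x_t$ we call the quantum subgradient estimation of Lemma~\ref{lem:quantum-subg-van} with failure probability $\rho$ to obtain $\tilde g_t$. On its success event, taking $q\leftarrow x^\star$ in the lemma gives the approximate subgradient inequality $f(x_t)-f(x^\star)\le \langle \tilde g_t, x_t-x^\star\rangle + \Xi_1$ with $\Xi_1 := 23^2\vartheta_*\sqrt{\theta d^3 G/(\rho r_1)}\,K + 2G\vartheta r_1$, and also (via Eq.~\eqref{eq:grad-diff-dual-norm}) $\sup_{g\in\partial f(z_t)}\|g-\tilde g_t\|_* < \Xi_2 := 23^2\vartheta_*\sqrt{\theta d^3 G/(\rho r_1)}$ for the internal smoothing point $z_t$; since $f$ is $G$-Lipschitz this yields $\|\tilde g_t\|_*\le G+\Xi_2$. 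A union bound over the $T$ calls with $\rho = 1/(3T)$ ensures all of them succeed simultaneously with probability $\ge 2/3$, and we condition on this event, so all of these inequalities hold deterministically thereafter.

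\textbf{Step 2 (dual-averaging regret with the estimated subgradients).} Running the lazy-mirror-descent analysis — using that $\Phi$, and hence the map $x\mapsto \eta\sum_{s<t}\langle \tilde g_s,x\rangle + \Phi(x)$ whose minimizer is $x_t$ by \eqref{e:DA_proj}, is $\mu$-strongly convex with respect to $\|\cdot\|$, together with the inequality $az-bz^2\le a^2/(4b)$ exactly as in the proof of Theorem~\ref{thm:MD_convergence} — one obtains $\sum_{t=1}^T\langle \tilde g_t, x_t-x^\star\rangle \le \tfrac{D_\Phi(x^\star,x_1)}{\eta} + \tfrac{\eta}{\mu}\sum_{t=1}^T\|\tilde g_t\|_*^2 \le \tfrac{R^2}{\eta} + \tfrac{\eta T}{\mu}\bigl(2G^2+2\Xi_2^2\bigr)$ (up to absolute constants). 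Adding the $T\Xi_1$ from Step 1, dividing by $T$, using convexity for the averaged output $\bar x = \tfrac1T\sum_t x_t$, and choosing $\eta = \tfrac{R}{G}\sqrt{\mu/T}$ gives
$$ f(\bar x) - f(x^\star) \ \le\ \mathcal{O}\!\left(\frac{GR}{\sqrt{\mu T}}\right) \;+\; \mathcal{O}\!\left(G\vartheta r_1\right) \;+\; \mathcal{O}\!\left(\vartheta_*\sqrt{\tfrac{\theta d^3 G}{\rho r_1}}\,K \;+\; \frac{R\,\Xi_2^2}{G\sqrt{\mu T}}\right), $$
which is precisely the mirror-descent bound \eqref{eqn:last_inequality}.

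\textbf{Step 3 (parameter selection and complexity).} Set $T = (6GR/(\sqrt{\mu}\,\varepsilon))^2$ and $r_1 = \Theta(\varepsilon/(G\vartheta))$ so the first two terms are each $\le \varepsilon/3$; this fixes $\rho = 1/(3T) = \Theta(\mu\varepsilon^2/(G^2R^2))$. Substituting into the third term and demanding it be $\le \varepsilon/3$, the binding constraint is $\vartheta_*\sqrt{\theta d^3 G/(\rho r_1)}\,K\lesssim \varepsilon$, which after plugging in $\rho$ and $r_1$ gives $\theta = \mathcal{O}\!\bigl(\mu\varepsilon^5/(G^4R^2K^2\vartheta_*^2\vartheta d^3)\bigr)$; the $\Xi_2^2/\sqrt{\mu T}$ term imposes only a weaker ($\varepsilon^3$) constraint, so the displayed bound on $\theta$ is the operative one and $f(\bar x)-f(x^\star)\le\varepsilon$ holds with probability $\ge 2/3$. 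For complexity, there are $T = \widetilde{\mathcal{O}}((GR/(\sqrt{\mu}\,\varepsilon))^2)$ iterations, each invoking Lemma~\ref{lem:quantum-subg-van} once at cost $\mathcal{O}(\log(d/\rho)) = \widetilde{\mathcal{O}}(1)$ queries to the $\theta$-approximate binary oracle and $\widetilde{\mathcal{O}}(d)$ gates (Jordan-style estimation on a $d$-dimensional hypergrid, amplified to failure probability $\rho$), plus the dual-averaging update \eqref{e:DA_proj}, which for the mirror maps of interest (negative entropy / von Neumann entropy) is a renormalization of cost $\widetilde{\mathcal{O}}(d)$ and is absorbed into the per-iteration bound; multiplying out yields the claimed $\widetilde{\mathcal{O}}((GR/(\sqrt{\mu}\,\varepsilon))^2)$ queries and $\widetilde{\mathcal{O}}(d(GR/(\sqrt{\mu}\,\varepsilon))^2)$ gates.

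\textbf{Main obstacle.} Almost everything is parallel to Theorem~\ref{thm:MD_convergence}; the only genuinely new element is Step 2 — reproducing the lazy-mirror-descent regret bound carefully enough that the biased estimates $\tilde g_t$ (which are approximate subgradients at a random \emph{nearby} point, not at $x_t$ itself) enter only through the two controlled quantities $\Xi_1$ and $\Xi_2$, and that the $\|\tilde g_t\|_*^2$ terms are correctly expanded as $\le 2G^2 + 2\Xi_2^2$ using $G$-Lipschitzness at the internal smoothing points. The across-iteration dependency (each $x_t$ depends on the randomness of earlier estimations) is handled, as in the mirror-descent proof, by conditioning on the single event that all $T$ subgradient calls succeed, after which the regret inequality is deterministic; no Markov-type step is needed since Lemma~\ref{lem:quantum-subg-van} already delivers a deterministic-looking guarantee on that event. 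A secondary bookkeeping point is checking that the $\Xi_2$-driven error is dominated by the $\Xi_1$-driven error, so that the stated $\theta$ really is the binding precision requirement.
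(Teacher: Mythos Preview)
Your proposal is correct and follows essentially the same approach as the paper. The paper writes out explicitly the induction argument underlying the dual-averaging regret bound you invoke from \cite[Theorem 4.3]{bubeck2015convex} (proving $\sum_s \langle \tilde g_s, x_s - x^\star\rangle \le R^2/\eta + \sum_s \eta\langle \tilde g_s, x_s-x_{s+1}\rangle$ and then bounding each summand via strong convexity of $\psi_s$ exactly as you sketch), arriving at the same final inequality $f(\bar x)-f(x^\star)\le R^2/(T\eta)+\eta G^2/\mu + \eta\Xi_2^2/\mu + \Xi_1$ up to constants, after which the choices of $T$, $r_1$, $\rho$, and the resulting $\theta$-requirement are carried over verbatim from Theorem~\ref{thm:MD_convergence}, just as you do.
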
 

\begin{proof}
As usual, let $\tilde{g}_t$ be the output of the quantum subgradient algorithm for $x_t$.
Let $s \in \{1, \dots, T\}$ and define
    $$ \psi_s (x) := \eta \sum_{t=1}^s  \inner{\tilde{g}_t, x} + \Phi (x)$$
    Also define
    \begin{align}
        \label{eqn:def_splus_one}
    x_s \in \argmin_{x \in \Xcal \cap \Pcal} \left\{ \eta \sum_{t=1}^{s-1}  \inner{\tilde{g}_t, x} + \Phi (x) \right\} = \argmin_{x \in \Xcal \cap \Pcal} \left\{  \psi_{s-1} (x) \right\}.
    \end{align}

  Since $\Phi$ is $\mu$-strongly convex, one has that $\psi_s$ is also $\mu$-strongly convex for all $s$. Therefore, 
  \begin{align}
      \psi_s (x_{s+1}) - \psi_s (x_s) &\leq \inner{ \nabla \psi_{s} (x_{s+1}) , x_{s+1} - x_s} - \frac{\mu}{2} \norm{ x_{s+1} - x_s}^2 \nonumber \\
      &\leq - \frac{\mu}{2} \norm{ x_{s+1} - x_s}^2,\label{e:psi_lower}
  \end{align}
  where the second inequality follows from the first-order optimality condition. We also have 
    \begin{align}
      \psi_s (x_{s+1}) - \psi_s (x_s) &= \psi_{s-1} (x_{s+1}) - \psi_{s-1} (x_s) + \eta \inner{ \tilde{g}_s, x_{s+1} - x_s} \nonumber\\
      &\geq \eta \inner{ \tilde{g}_s, x_{s+1} - x_s}. \label{e:psi_upper}
  \end{align}
  Combining \eqref{e:psi_lower} and \eqref{e:psi_upper}, one obtains
  \begin{align*}
      \frac{\mu}{2} \norm{ x_{s+1} - x_s}^2 \leq \eta \inner{\tilde{g}_s, x_{s} - x_{s+1}}&\leq \eta \| \tilde{g}_s \|_* \norm{ x_{s+1} - x_s} \\
      &\leq \eta \left( ( \norm{\tilde{g}_t - g}_{*} + G ) \norm{ x_{s+1} - x_{s}} \right),
  \end{align*}
  so following the derivation of \eqref{eqn:eta_2_inequality}, we have
  \begin{align}
       \eta \inner{\tilde{g}_s, x_{s} - x_{s+1}}&\leq \eta \| \tilde{g}_s \|_* \norm{ x_{s+1} - x_s} \nonumber\\
      \nonumber&\leq \eta \left( ( \norm{\tilde{g}_s - g}_{*} + G ) \norm{ x_{s+1} - x_{s}} \right) - \frac{\mu}{2} \norm{ x_{s+1} - x_s}^2 \nonumber\\
      \label{e:inner_bound_DA}
      &\leq  \frac{\eta^2 G^2}{\mu} + \frac{\eta^2 \Xi_2^2}{\mu},
  \end{align}
  with $\Xi_2$ as in Equation \eqref{eq:capital_xi_2}.
  Next, we claim that for any $x \in \Xcal \cap \Pcal$ and any $r \geq 0$.
  \begin{align}
  \label{eqn:hyperplane_upper_bound}
       \sum_{s=1}^{r} \inner{\tilde{g}_s, x_s - x} \leq \sum_{s=1}^{r} \inner{\tilde{g}_s, x_s - x_{s+1}} + \frac{D_\Phi(x, x_1)}{\eta},
  \end{align}
  which using the definition of Bregman divergence and the fact that $\nabla \Phi(x_1)=0$, this amounts to 
\begin{equation}\label{e:hypothesis_DA}
      \sum_{s=1}^{r} \inner{\tilde{g}_s, x_s - x} \leq \sum_{s=1}^{r} \inner{\tilde{g}_s, x_s - x_{s+1}} + \frac{\Phi (x) - \Phi (x_1)}{\eta}.
  \end{equation}
  By rearranging, Equation~\eqref{e:hypothesis_DA} can be equivalently expressed as 
    \begin{equation}\label{e:hypothesis_DA_2}
      \sum_{s=1}^{r} \inner{\tilde{g}_s, x_{s+1}} + \frac{\Phi (x_1)}{\eta} \leq \sum_{s=1}^{r} \inner{\tilde{g}_s,x} + \frac{\Phi (x)}{\eta}.
  \end{equation}
  We prove~\eqref{e:hypothesis_DA_2} by induction on $r$, and hence prove Equation \eqref{eqn:hyperplane_upper_bound}. Note that the result trivially holds at $r = 0$ since $x_1 \in \argmin_{x \in \Xcal \cap \Pcal} \Phi (x)$. Assuming by induction the statement holds for $r-1$, it follows that at $r$, we have
  \begin{align*}
       \sum_{s=1}^{r} \inner{\tilde{g}_s, x_{s+1}} + \frac{\Phi (x_1)}{\eta} &\leq \inner{\tilde{g}_{r}, x_{r+1}} + \sum_{s=1}^{r-1} \inner{\tilde{g}_{s},x_{s+1}} + \frac{\Phi (x_1)}{\eta} \\
       &\leq \inner{\tilde{g}_{r}, x_{r+1}} + \sum_{s=1}^{r-1} \inner{\tilde{g}_{s},x_{r+1}} + \frac{\Phi (x_{r+1})}{\eta}\\
       &\leq \sum_{s=1}^r \inner{\tilde{g}_{s},x} + \frac{\Phi (x)}{\eta},
  \end{align*}
  Thus, by induction the statement holds for all $r$. The second-to-last inequality follows from  the induction hypothesis, and  the last inequality holds by the definition of $x_{r+1}$ (Equation \eqref{eqn:def_splus_one}). Hence combining Equation \eqref{eqn:hyperplane_upper_bound} with Equation \eqref{e:inner_bound_DA}, we have for any optimizer $x^\star$ and $r = T$
  \begin{equation*}
      \sum_{s=1}^{T}\inner{\tilde{g}_s, x_s - x^\star} \leq \frac{\eta G^2 T}{\mu}+ \frac{R^2}{\eta} + \frac{\eta \Xi_2^2 T}{\mu},
  \end{equation*}
  where by definition $R^2 = D_\Phi(x^\star, x_1)$. Continuing like in the proof of  Theorem~\ref{thm:MD_convergence} with $\Xi_1$ as in Equation \eqref{eq:capital_xi_1}
    \begin{align*}
    f\left(\frac{1}{T}\sum_{s=1}^{T}x_s \right) - f(x^{\star}) &\leq 
      \frac{1}{T}\sum_{s \in [T]}\left(f(x_s) - f(x^{\star})\right) \\ &\leq \frac{1}{T}\sum_{s \in [T]} \inner{\tilde{g}_s, x_s - x^{\star}} + \Xi_1\\
      &\leq\frac{R^2}{T\eta} + \frac{\eta G^2}{\mu} +  \frac{\eta \Xi_2^2}{\mu} + \Xi_1.
  \end{align*}
  One will note that the last line is exactly Equation \eqref{eqn:eta_2_inequality} from  the proof  of Theorem \ref{thm:MD_convergence} but divided by $T$. Hence, the same $T$, $r_1$, $\rho$, $\eta$ and $\theta$ from the proof  Theorem \ref{thm:MD_convergence} suffice to ensure the sub-optimality is $\varepsilon$.
\end{proof}

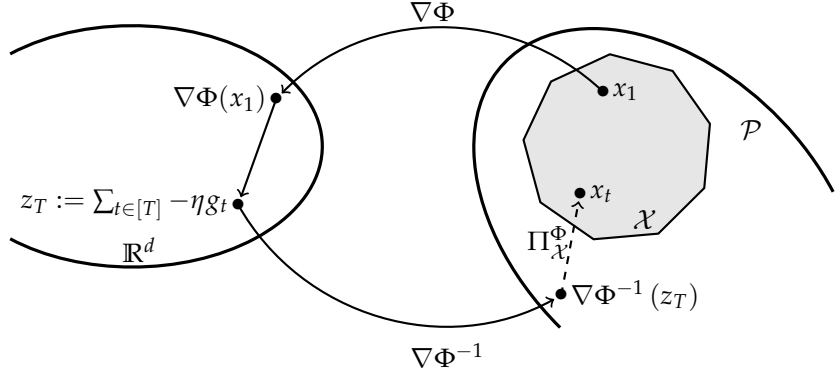
\begin{figure}
    \centering
    \begin{tikzpicture}
        \draw[very thick] (-8,-1) arc(130:-130:2.5cm and 1.6cm);
        \draw[very thick, rotate = -45] (4,0) arc(65:270:3cm and 2cm);
        \node[regular polygon, draw, thick, rotate=5, regular polygon sides = 9, text = blue, fill = gray!20, minimum size = 2.5cm] (p) at (0, -2.25) {};
        \draw (-6.3, -3.6) node {$\R{d}$};
        \draw (.35, -3.2) node {$\Xcal$};
        \draw (1.75, -2) node {$\Pcal$};

        \draw (-.2, -1.5) node {$\bullet$}; 
        \draw (-.2, -1.5) coordinate (xt) node[right] {$x_1$};

        \draw (-.5, -2.85) node {$\bullet$}; 
        \draw (-.5, -2.85) coordinate (xt1) node[right] {$x_{t}$};

        \draw (-4.5, -1.6) node {$\bullet$}; 
        \draw (-4.5, -1.6) coordinate (Pxt) node[left] {$\nabla \Phi (x_1)$};

        \draw (-5, -3) node {$\bullet$}; 
        \draw (-5, -3) coordinate (yt) node[left] {$z_T := \sum_{t \in [T]} - \eta g_{t}$};

        \draw (-.75, -4.2) node {$\bullet$}; 
        \draw (-.75, -4.2) coordinate (txt) node[right] {$\nabla \Phi^{-1} \left(z_T\right)$};

        \draw[-> , thick, shorten >= 3pt] (xt) to [bend right = 45] (Pxt) node[above, midway, xshift=-2.45cm, yshift=-.7cm]{$\nabla \Phi$};

        \draw[-> , thick, shorten >= 3pt] (yt) to [bend right = 45] (txt) node[below, xshift=-1.5cm, yshift=-.5cm]{$\nabla \Phi^{-1}$};

         \draw[-> , thick, shorten >= 3pt] (Pxt) to (yt) node[right, xshift=.15cm, yshift=.75cm]{};

         \draw[-> , dashed, thick, shorten >= 3pt] (txt) to (xt1) node[left, yshift=-.65cm] {$\Pi^{\Phi}_{\Xcal}$};
    \end{tikzpicture}
    \caption{Dual averaging.}
    \label{fig: DA}
\end{figure}

\subsection{Mirror prox}
Mirror prox is a variant of mirror descent that enables one to achieve a rate $1/T$ for smooth functions, and is described by the following set of equations:
\begin{equation}\label{e: MP} \tag{MP}
    \begin{aligned}
        \nabla \Phi (\bar{z}_{t+1} ) &= \nabla \Phi (x_t) - \eta \nabla f (x_t) \\
        z_{t+1} &\in \argmin_{x \in \Xcal \cap \Pcal} D_{\Phi} (x, \bar{z}_{t+1}) 
    \end{aligned}
    \quad\text{   and  }\quad
    \begin{aligned}
        \nabla \Phi (\bar{x}_{t+1}) &=  \nabla \Phi (x_t) - \eta \nabla f (z_{t+1}) \\
        x_{t+1} &\in \argmin_{x \in \Xcal \cap \Pcal} D_{\Phi} (x, \bar{x}_{t+1}).
    \end{aligned}
\end{equation}

\begin{theorem}[Quantum Mirror Prox] \label{thm:mirror_prox}
        Let $\Xcal \subseteq \R{d}$ be a closed and bounded, convex set equipped with a given $p$-norm $\| \cdot\|$ with $p \geq 1$. Denote $K := \diam(\Xcal) = \sup_{x, y \in \Xcal} \| x - y \|$.
        Suppose $f:\R{d} \rightarrow \R{}$ is a convex function that is $G$-Lipschitz and $L$-smooth with respect to $\| \cdot \|$ and attains its minimum at $x^\star \in \Xcal \cap \Pcal$. Let $\Phi$ be a mirror map that is $\mu$-strongly convex on $\Xcal \cap \Pcal$ with respect to $\| \cdot\|$. 
        Suppose we are given a point $x_1 \in \Xcal \cap \Pcal$ and define $R^2 = D_{\Phi}(x^\star, x_1)$.
        Assume one has access to an $\theta $-precise binary evaluation oracle for $f$, with \begin{align*}
            \theta  = \Ocal\left(\frac{\mu^4\varepsilon^4} {L^5\vartheta^2\vartheta_*^4G^2K^4\log\big(\sqrt{d}\vartheta_* LK G/\mu\varepsilon)\big)^2}\right),
        \end{align*} 
    Consider the zeroth-order quantum mirror prox \eqref{e: MP}, with $\eta = \frac{\mu}{L}$, that outputs the running average $\frac{1}{T} \sum_{s=0}^{T-1} z_{s+1}$
    after $T$ iterations. Then, with probability at least $2/3$, one can obtain an $\varepsilon$-approximate solution with $\widetilde{\Ocal} \Big(  \frac{LR^2}{\mu \varepsilon}  \Big)$ queries to $\theta $-approximate binary oracle of $f$ and $\widetilde{\Ocal} \Big((d + \Tcal_{\text{next}}) \big( \frac{GR}{\sqrt{\mu} \varepsilon} \big)^2 \Big)$ gates, where $\Tcal_{\text{next}}$ is the time to carry out the operations necessary to progress to the next iterate. 
\end{theorem}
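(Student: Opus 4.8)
The plan is to prove a robust, stochastic version of Nemirovski's classical mirror prox (extragradient) analysis~\cite{nemirovski2004prox}, treating the inexactness of the quantum gradient estimator of Theorem~\ref{thm:unbiased-gradient-noisy-oracle} as purely additive error, exactly in the spirit of the stochastic biased-descent lemma (Lemma~\ref{lem:descent-biasedSGD}) that underpins \eqref{eq:qgd}. Write $g_{x_t}=\nabla f(x_t)+e_{x_t}$ and $g_{z_{t+1}}=\nabla f(z_{t+1})+e_{z_{t+1}}$; by Theorem~\ref{thm:unbiased-gradient-noisy-oracle}, instantiated with the $p$-norm $\|\cdot\|$, its dual, and the norm-equivalence constants $\vartheta,\vartheta_*$, each error obeys $\mathbb{E}[\|e\|_*^2\mid\cdot]\le(\vartheta_*\sigma)^2$ (conditioning on the iterate at which it is computed), hence $\mathbb{E}[\|e\|_*\mid\cdot]\le\vartheta_*\sigma$ by Jensen. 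One simplification relative to the nonsmooth case is that this estimator produces an estimate of $\nabla f$ at the queried point itself, so the ``random nearby point'' complication of Lemma~\ref{lem:quantum-subg-van} does not arise.

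First I would establish a robust per-iteration prox inequality. Both updates in \eqref{e: MP} are prox-steps from $x_t$, namely $z_{t+1}=\argmin_{x\in\Xcal\cap\Pcal}\{\eta\langle g_{x_t},x\rangle+D_\Phi(x,x_t)\}$ and $x_{t+1}=\argmin_{x\in\Xcal\cap\Pcal}\{\eta\langle g_{z_{t+1}},x\rangle+D_\Phi(x,x_t)\}$ (cf.\ \eqref{e:mirror_prox_motivation}); writing the first-order optimality condition for each and combining through the three-point identity \eqref{e: bregman_inner_prod} gives, pathwise and for every $u\in\Xcal\cap\Pcal$,
\begin{align*}
\eta\langle g_{z_{t+1}},z_{t+1}-u\rangle &\le D_\Phi(u,x_t)-D_\Phi(u,x_{t+1})+\eta\langle g_{z_{t+1}}-g_{x_t},\,z_{t+1}-x_{t+1}\rangle \\
&\quad -D_\Phi(z_{t+1},x_t)-D_\Phi(x_{t+1},z_{t+1}).
\end{align*}
Next I would split $g_{z_{t+1}}-g_{x_t}=(\nabla f(z_{t+1})-\nabla f(x_t))+(e_{z_{t+1}}-e_{x_t})$: for the exact part, H\"older's inequality, $L$-smoothness, $2ab\le a^2+b^2$, the $\mu$-strong convexity bound $D_\Phi(a,b)\ge\tfrac{\mu}{2}\|a-b\|^2$, and the step size $\eta=\mu/L$ give $\eta\langle\nabla f(z_{t+1})-\nabla f(x_t),z_{t+1}-x_{t+1}\rangle\le D_\Phi(z_{t+1},x_t)+D_\Phi(x_{t+1},z_{t+1})$, cancelling both negative Bregman terms; for the error part, $z_{t+1},x_{t+1}\in\Xcal$ gives $\eta\langle e_{z_{t+1}}-e_{x_t},z_{t+1}-x_{t+1}\rangle\le\eta K(\|e_{z_{t+1}}\|_*+\|e_{x_t}\|_*)$. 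Moving $\langle e_{z_{t+1}},z_{t+1}-u\rangle$ (with $u=x^\star$, using $\|z_{t+1}-x^\star\|\le K$) to the right and invoking convexity, $f(z_{t+1})-f(x^\star)\le\langle\nabla f(z_{t+1}),z_{t+1}-x^\star\rangle$, yields
\begin{align*}
\eta\big(f(z_{t+1})-f(x^\star)\big)\le D_\Phi(x^\star,x_t)-D_\Phi(x^\star,x_{t+1})+\eta K\big(2\|e_{z_{t+1}}\|_*+\|e_{x_t}\|_*\big).
\end{align*}

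Summing over $t=0,\dots,T-1$ telescopes the Bregman terms to $D_\Phi(x^\star,x_1)=R^2$; dividing by $\eta T=\mu T/L$, using $f\big(\tfrac1T\sum_t z_{t+1}\big)\le\tfrac1T\sum_t f(z_{t+1})$, and taking expectations (conditioning on the history before each estimation and applying the tower property, as in Lemma~\ref{lem:descent-biasedSGD}) gives $\mathbb{E}[f(\bar{z})-f(x^\star)]\le \tfrac{LR^2}{\mu T}+3K\vartheta_*\sigma$. Choosing $T=\Theta(LR^2/(\mu\varepsilon))$ and $\sigma$ small enough that $3K\vartheta_*\sigma\le\varepsilon/6$ makes the right-hand side $\le\varepsilon/3$, so Markov's inequality yields $f(\bar{z})-f(x^\star)\le\varepsilon$ with probability $\ge 2/3$; substituting the resulting $\sigma$ into the precision requirement $\theta=\Ocal\big(\sigma^4/(L\vartheta^2G^2\log^2(dG^2/\sigma^2))\big)$ of Theorem~\ref{thm:unbiased-gradient-noisy-oracle} produces the stated bound on $\theta$ (any additional $\kappa$-factors in the displayed expression coming from the explicit bookkeeping). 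Finally, the query complexity is $T$ times the $\widetilde{\Ocal}(1)$ queries of the two gradient estimations per iteration, i.e.\ $\widetilde{\Ocal}(LR^2/(\mu\varepsilon))$, and the gate complexity and depth follow by adding the cost $\Tcal_{\text{next}}$ of the two Bregman projections to the $\widetilde{\Ocal}(d)$-gate cost of gradient estimation.

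The main obstacle is exactly what the authors flag around Lemma~\ref{lem:descent-biasedSGD} and the cautionary remark referencing \cite{catli2025exponentially}: one must keep the gradient error \emph{additive, not multiplicative}. Concretely, this means verifying that the $\eta L\le\mu$ cancellation of the negative Bregman terms against the smooth cross term survives the presence of error --- i.e.\ that the diameter bound suffices to absorb the error term without ``borrowing'' from the Bregman budget --- and that the conditioning/tower-property argument remains valid even though $z_{t+1}$ depends on the noise in $g_{x_t}$, so that quantities such as $\langle e_{x_t},\cdot\rangle$ are not mean-zero and must be controlled in magnitude rather than cancelled in expectation. A secondary bookkeeping nuisance is juggling the two length scales at once: the evaluation radius $r\propto\sqrt{\theta/L}$ on which the oracle must be $\theta$-accurate (to keep the gradient-estimation bias at scale $\sigma$), and the domain diameter $K$ that governs how the per-step error accumulates over $T$ iterations.
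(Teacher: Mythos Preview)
Your proposal is correct and follows essentially the same route as the paper: the classical Nemirovski mirror-prox per-iteration inequality derived via prox-step optimality and the three-point identity \eqref{e: bregman_inner_prod}, with the quantum gradient error from Theorem~\ref{thm:unbiased-gradient-noisy-oracle} treated as purely additive and controlled in magnitude (not cancelled in expectation) via the diameter $K$; the $\eta=\mu/L$ cancellation of the negative Bregman terms against the smooth cross term, telescoping, convexity of $f$, tower-law expectation, and Markov all match the paper. Your error bookkeeping is in fact slightly tighter than the paper's---you end with an additive $3K\vartheta_*\sigma$ term whereas the paper carries an extra $L/\mu$ factor to arrive at $6\vartheta_*\tfrac{LK}{\mu}\sigma$---which is why the displayed $\theta$ in the theorem has the $\mu^4/L^5$ that your derivation does not produce; your parenthetical about ``additional $\kappa$-factors'' correctly anticipates this discrepancy, and of course the looser $\theta$ you obtain still satisfies the stated bound.
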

\begin{proof}

Let $g_{y}$ be the estimate of the gradient $\nabla f(y)$ returned by quantum gradient estimation. Let $x^{\star} \in \mathcal{X} \cap \mathcal{P}$ be a minimizer. Then,
\begin{align*} &f(z_{t+1}) - f(x^{\star})\\
&\quad\quad\leq \inner{\nabla f (z_{t+1}), z_{t+1} - x^{\star}} \\
&\quad\quad= \inner{g_{z_{t+1}}, z_{t+1} - x^\star} +\inner{\nabla f (z_{t+1}) - g_{z_{t+1}}, z_{t+1} - x^{\star}}\\
&\quad\quad=\inner{g_{z_{t+1}}, x_{t+1} - x^\star} +  \inner{g_{x_{t}}, z_{t+1} -x_{t+1}} + \inner{ g_{z_{t+1}} - g_{x_{t}}, z_{t+1} - x_{t+1}} \\ 
&\qquad\qquad + \inner{\nabla f (z_{t+1}) - g_{z_{t+1}}, z_{t+1} - x^{\star}}.
\end{align*}
We start by bounding the first three terms in the above expression. 

 For the first term, we have  
    \begin{align*}
        \eta  \inner{g_{z_{t+1}}, x_{t+1} - x^{\star}} &\stackrel{\text{Eq.}~\eqref{e: MP}}{=} \inner{\nabla \Phi (x_t) - \nabla \Phi (\bar{x}_{t+1}), x_{t+1} - x^{\star}} \\
&\stackrel{\text{Lem.}~\ref{lem:bregman_under_proj}}{\leq} \inner{\nabla \Phi (x_t) - \nabla \Phi (x_{t+1}), x_{t+1} - x^{\star}} \\
        &\stackrel{\text{Eq.}~\eqref{e: bregman_inner_prod}}{=} D_{\Phi} (x^{\star}, x_t) - D_{\Phi}(x^{\star},x_{t+1}) - D_{\Phi} (x_{t+1} , x_t). 
    \end{align*}
    The second term is bounded using the same steps as the first, and accounting for strong convexity of the mirror map:
        \begin{align*}
        \eta  \inner{g_{x_{t}}, z_{t+1} -x_{t+1}} &\stackrel{\text{Eq.}~\eqref{e: MP}}{=} \inner{\nabla \Phi (x_t) - \nabla \Phi (\bar{z}_{t+1}), z_{t+1} - x_{t+1}} \\
&\stackrel{\text{Lem.}~\ref{lem:bregman_under_proj}}{\leq} \inner{\nabla \Phi (x_t) - \nabla \Phi (z_{t+1}), z_{t+1} - x_{t+1}} \\
        &\stackrel{\text{Eq.}~\eqref{e: bregman_inner_prod}}{=} D_{\Phi} (x_{t+1}, x_t) - D_{\Phi}(x_{t+1}, z_{t+1}) - D_{\Phi} (z_{t+1} , x_t) \\
        &\quad\leq D_{\Phi} (x_{t+1}, x_t) - \frac{\mu}{2} \norm{x_{t+1}- z_{t+1}}^2- \frac{\mu}{2} \norm{z_{t+1} - x_t}^2.
    \end{align*}
 For the third term, we have
\begin{align*}
      &\inner{ g_{z_{t+1}} - g_{x_{t}}, z_{t+1} - x_{t+1}} \\
      &\quad\leq  \inner{ g_{z_{t+1}} - \nabla f({z_{t+1}}), z_{t+1} - x_{t+1}} +  \inner{ g_{x_{t}} - \nabla f({x_{t}}), z_{t+1} - x_{t+1}} \\ 
      &\qquad\qquad +\inner{\nabla f({z_{t+1}}) - \nabla f({x_{t}}), z_{t+1} - x_{t+1}} \\
&\quad\leq \lVert \nabla f({z_{t+1}}) - \nabla f({x_{t}})\rVert_{*} \lVert z_{t+1} - x_{t+1}\rVert + \lVert g_{x_{t}} - \nabla f({x_{t}})\rVert_{*}K + \lVert g_{z_{t+1}} - \nabla f({z_{t+1}})\rVert_{*}K\\
&\quad\leq L\lVert z_{t+1} - x_{t}\rVert\lVert z_{t+1} - x_{t+1}\rVert + \lVert g_{x_{t}} - \nabla f({x_{t}})\rVert_{*}K + \lVert g_{z_{t+1}} - \nabla f({z_{t+1}})\rVert_{*}K\\
&\quad\leq \frac{L}{2}\lVert z_{t+1} - x_{t}\rVert^2 + \frac{L}{2}\lVert z_{t+1} - x_{t+1}\rVert^2 + \lVert g_{x_t}- \nabla f({x_{t}})\rVert_{*}K + \lVert g_{z_{t+1}} - \nabla f({z_{t+1}})\rVert_{*}K,
\end{align*}
where we used that $K$ is a bound on the $\lVert \cdot \rVert$ diameter of $\mathcal{X}\cap \mathcal{P}$ and $z_{t+1}, x_{t+1}, x^{\star} \in \mathcal{X}\cap \mathcal{P}$.
    
Summing the above terms and setting $\eta = \frac{\mu}{L}$ gives
\begin{align*}
 f(z_{t+1}) - f(x^\star) &\leq \frac{L}{\mu}\left[D(x^{\star}, x_t)- D(x^{\star}, x_{t+1})\right] + \frac{LK}{\mu}\lVert g_{x_{t}} - \nabla f({x_{t}})\rVert_{*} \\ 
 &\qquad+ 2\frac{LK}{\mu}\lVert g_{z_{t+1}} - \nabla f({z_{t+1}})\rVert_{*}.
\end{align*}

Next we take the expectation of the future iterates conditioned on $z_{t+1}$:
\begin{align*}
 f(z_{t+1}) - f(x) &\leq \frac{L}{\mu}\left[D(x^{\star}, x_t)- \mathbb{E}[D(x^{\star}, x_{t+1}) | z_{t+1}]\right] + \frac{LK}{\mu}\lVert g_{x_{t}} - \nabla f({x_{t}})\rVert_{*} \\ 
 &\qquad+ 2\frac{LK}{\mu}\mathbb{E}[\lVert g_{z_{t+1}} - \nabla f({z_{t+1}})\rVert_* | z_{t+1}].
\end{align*}

Note that given $z_{t+1}$, $g_{x_t}$ is uniquely determined by the strong convexity of $D_{\Phi}$,  and hence $x_t$ by \ref{e: MP}, so $\mathbb{E}[h(x_t, g_{x_t}) | z_{t+1}] = h(x_t, g_{x_t})$ for any deterministic $h$. 
Continuing, we have that
\begin{align*}
\mathbb{E}[\lVert g_{z_{t+1}} - \nabla f({z_{t+1}})\rVert_* | z_{t+1}] &\leq \mathbb{E}[\lVert g_{z_{t+1}} - \mathbb{E}[g_{z_{t+1}}  | z_{t+1}] \rVert_{*} | z_{t+1}] + \lVert \nabla f({z_{t+1}}) -\mathbb{E}[g_{z_{t+1}} |  z_{t+1}]  \rVert_{*} \\
&\leq \vartheta_*\mathbb{E}[\lVert g_{z_{t+1}} - \mathbb{E}[g_{z_{t+1}}  | z_{t+1}] \rVert_{\infty} | z_{t+1}] \\  
&\qquad+ \vartheta_*\lVert \nabla f({z_{t+1}}) -\mathbb{E}[g_{z_{t+1}} |  z_{t+1}]  \rVert_{\infty} \\
&\leq 2\vartheta_* \sigma.
\end{align*}

Then we condition on $x_{t+1}$:
\begin{align*}
 \mathbb{E}[f(z_{t+1}) | x_{t}] - f(x)  \leq \frac{D(x, x_t)- \mathbb{E}[D(x, x_{t+1}) | x_{t}]}{\eta} + \mathbb{E}[\lVert g_{x_{t}} - \nabla f({x_{t}})\rVert_{2} | x_{t}] K + 4\vartheta_*\frac{LK}{\mu}\sigma.
\end{align*}

Similarly, we have
\begin{align*}
    \mathbb{E}[\lVert g_{x_{t}} - \nabla f({x_{t}})\rVert_{*} | x_{t}] \leq 2\vartheta_*\sigma.
\end{align*}
Taking expectation over everything and using the tower law, we get:
\begin{align*}
 \mathbb{E}[f(z_{t+1})] -  f(x^{\star})  \leq \frac{L}{\mu}\left[\mathbb{E}[D(x, x_t)]- \mathbb{E}[D(x^{\star}, x_{t+1})]\right] + 6\vartheta_* \frac{LK}{\mu}\sigma.
\end{align*}
Averaging over and using convexity, we have
\begin{align*}
\mathbb{E}[f\left(\sum_{t=1}^{T} \frac{z_{t}}{T}\right) - f(x^{\star})] &\leq \frac{1}{T}\sum_{t=1}^{T}\mathbb{E}[f(z_{t}) - f(x^{\star})] \\ &\leq \frac{L\left(D(x^{\star}, x_1)- \mathbb{E}[D(x^{\star}, x_{T})]\right)}{\mu T} + 6\vartheta_* \frac{LK}{\mu}\sigma \\
&\leq \frac{L\left(D(x^{\star}, x_1)\right)}{\mu T} + 6\vartheta_* \frac{LK}{\mu}\sigma \\
&\leq  \frac{LR^2}{\mu T} + 6\vartheta_* \frac{LK}{\mu}\sigma.
\end{align*}

Now, say we set $T = \mathcal{O}\left(\frac{LR^2}{\mu\varepsilon}\right)$ and $\sigma = \mathcal{O}\left(\frac{\varepsilon\mu}{\vartheta_* LK}\right)$. We can then apply Theorem \ref{thm:unbiased-gradient-noisy-oracle} to get the required precision
\begin{align*}
    \theta =\Theta\left(\frac{\mu^4\varepsilon^4} {L^5\vartheta^2\vartheta_*^4G^2K^4\log\big(\sqrt{d}\vartheta_* LK G / (\mu\varepsilon) )\big)^2}\right).
\end{align*}

If use our choice of  $T$ and apply Markov's inequality, then we get
\begin{align*}
    \Pr\left[f\left(\sum_{t=0}^{T-1} \frac{z_{t+1}}{T}\right) - f(x^{\star}) \geq \varepsilon\right] \leq \frac{1}{3}.
\end{align*}

\end{proof}

\section{Application to SDPs, LPs and zero-sum games}\label{s:SDP}
\subsection{From SDP to eigenvalue optimization}
Following a discussion found in \cite[Chapter 11.1]{wolkowicz2012handbook}, we will review how the dual problem \eqref{e:SDPD} can be reformulated as an eigenvalue optimization problem. We are interested in solving the primal and dual SDPs
\begin{equation}\label{e:SDP}\tag{P}
  \sup_{X \in  \R{n \times n}} \left\{\trace (C X) :\trace(X) = r_p, \trace{(A_i X)} \leq b_i~\text{for all}~i \in [m],~X \succeq 0 \right\},
\end{equation}
and 
\begin{equation}\label{e:SDPD}\tag{D}
   \inf_{(y_0, y) \in \R{} \times \R{m}_{\geq 0}}  \left\{  r_p y_0 + b^{\top} y : y_0 I +  \sum_{i \in [m]} y_i A_i - C \succeq 0 \right\},
\end{equation}
where $A_0 = I$, $b_0 = r_p$ and $\| A_1 \|_{\text{op}} , \dots, \| A_m \|_{\text{op}}, \|C\|_{\text{op}} \leq 1$, $\| b\|_{\infty} \leq r_p$.  

We say that $(X, y_0, y) \in \R{n \times n} \times \R{} \times \R{m}_{\geq 0}$ is a \textit{primal-dual feasible} solution if 
$$\trace(X) = r_p,\quad \trace{(A_i X)} \leq b_i~\text{for all}~i \in [m], \quad X \succeq 0, \quad y_0 I + \sum_{i \in [m]} y_i A_i - C \succeq 0.$$ 
A primal-dual feasible solution $(X,y_0, y)$ is \textit{strictly feasible} when $X$ and $y_0 I + \sum_{i \in [m]} y_i A_i - C$ are positive definite. 

The primal problem \eqref{e:SDP} is strictly feasible by assumption. The dual problem \eqref{e:SDPD} is also strictly feasible since $\| C \|_{\text{op}} \leq 1$ and $A_0 = I$, so \textit{strong duality} holds. Consequently, any \textit{primal-dual optimal} solution $(X^{\star}, y_0^{\star}, y^{\star})$ to \eqref{e:SDP}-\eqref{e:SDPD} has zero duality gap:
$$ \trace (CX^{\star}) -  r_p y_0^{\star} - b^{\top} y^{\star} = 0,$$
and the optimal objective value $\textsf{OPT} \coloneqq \trace (CX^{\star}) =  r_p y_0^{\star} + b^{\top} y^{\star}$ is finite and attained. By H\"older's inequality, one has
$$ \abs{\textsf{OPT}} = \abs{ \trace (CX^{\star})} \leq \| C \|_{\text{op}} \| X^{\star} \|_{\trace} \leq r_p,$$
since $\| C \|_{\text{op}} \leq 1$, $\trace{(X^{\star})} = r_p$ and $X^{\star} \succeq 0$. We may thus assume without loss of generality that there exists a finite value $r_d \geq 1$ such that $\| (y_0, y^{\star}) \|_1 \leq r_d$ for any dual optimal solution $(y_0^{\star}, y^{\star}) \in \R{} \times \R{m}_{\geq 0}$.

Upon normalizing $(b_0, b)^{\top} \in \R{m +1}$ with respect to its $\ell_{\infty}$-norm $(1, \tilde{b})^{\top} \mapsto 1/r_p (b_0, b)^{\top}$, it is easy to see that the set of primal-feasible solutions is a subset of the spectraplex 
$$\mathbb{S}^n := \{ X \in \R{n \times n} : \trace (X) = 1, X \succeq 0\}.$$ 
Hence, the primal and dual SDPs \eqref{e:SDP}-\eqref{e:SDPD} may be written as 
\begin{equation}\label{e:SDP_norm}\tag{$\tilde{\text{P}}$}
  \max_{X \in  \mathbb{S}^n} \left\{ \trace (C X) : \trace{(A_i X)} \leq \tilde{b}_i~\text{for all}~i \in [m]\right\},
\end{equation}
and
\begin{equation}\label{e:SDPD_norm}\tag{$\tilde{\text{D}}$}
   \min_{(y_0, y) \in \R{} \times \R{m}_{\geq 0}}  \left\{ y_0 +  \tilde{b}^{\top} y :  y_0 I + \sum_{i \in [m]} y_i A_i - C \succeq 0 \right\}.
\end{equation}
The optimal objective value of the normalized pair \eqref{e:SDP_norm}-\eqref{e:SDPD_norm} is $\OPT/r_p \in [-1, 1]$, and the sets of optimal solutions to these problems are equivalent up to a constant scaling: if $(X^{\star}, y_0^{\star}, y^{\star})$ is optimal for \eqref{e:SDP_norm}-\eqref{e:SDPD_norm}, then $(RX^{\star}, y_0^{\star}, y^{\star})$ is optimal for \eqref{e:SDP}-\eqref{e:SDPD}.

The Karush-Kuhn-Tucker (KKT) optimality conditions for \eqref{e:SDP_norm}-\eqref{e:SDPD_norm} (and hence, \eqref{e:SDP}-\eqref{e:SDPD}) stipulate that a primal-dual feasible solution $(X^{\star}, y_0^{\star}, y^{\star})$ is optimal if and only if  
$$X^{\star} Z^{\star} = Z^{\star} X^{\star} = 0,$$
where $Z^{\star} \coloneqq y_0^{\star}I + \sum_{i \in [m]} y^{\star}_i A_i - C$. As a consequence, any optimal $X^{\star}$ and $Z^{\star}$ are simultaneously diagonalizable. Letting $P \in \R{n \times n}$ be an orthonormal matrix, we may write
$$ X^{\star} = P \Lambda_{X^{\star}} P^{\top}, \quad Z^{\star} = P \Lambda_{Z^{\star}} P^{\top},$$
where $\Lambda_{X^{\star}}$ and $\Lambda_{Z^{\star}}$ are diagonal matrices whose entries are the nonnegative eigenvalues of $X^{\star}$ and $Z^{\star}$, respectively. Since $X^{\star} \neq 0$ (as $\trace(X^{\star})\neq 0$) and $\Lambda_{X^{\star}} \Lambda_{Z^{\star}} = 0$, it follows
\begin{equation*} 
    0 = \lambda_{\min} (\Lambda_{Z^{\star}}) = \lambda_{\min} \left( y^{\star}_0 I + \sum_{i \in [m]} y^{\star}_i A_i - C \right) = y^{\star}_0 + \lambda_{\min} \left( \sum_{i \in [m]} y^{\star}_i A_i - C \right).
\end{equation*} 
Rearranging terms, we obtain 
\begin{equation}\label{e:sdp_optimality}
     y^{\star}_0 = - \lambda_{\min} \left( \sum_{i \in [m]} y^{\star}_i A_i - C \right) =  \lambda_{\max} \left(C- \sum_{i \in [m]} y^{\star}_i A_i  \right),
\end{equation} 
and thus solving \eqref{e:SDPD_norm} is \emph{equivalent} to solving $\min_{y \in\R{m}_{\geq 0}} f(y)$, where 
\begin{equation}\label{e:sdp_eig}\tag{SDP-eig}
    f(y) \coloneqq \lambda_{\max} \left( C - \sum_{i \in [m]} y_i A_i \right) +  \tilde{b}^{\top} y. 
\end{equation}
Note that this reduction is valid for any nontrivial SDP with a bounded feasible set \cite{helmberg2000spectral, wolkowicz2012handbook}. The linear term $\tilde{b}^{\top} y$ may also be incorporated into the $\lambda_{\max}(\cdot)$ term by replacing each $A_i$ with $A_i - \tilde{b}_iI$, see \cite[Section 6.3]{todd_2001}. 

Problem \eqref{e:sdp_eig} is a nonsmooth convex optimization problem and is well-studied in the literature, see e.g., \cite{lewis1996eigenvalue} and the references therein. Before proceeding further, we recite a few important facts to keep the paper self-contained. 

First, note that since $b_0/r_p = 1$, by \eqref{e:sdp_optimality} one has
\begin{align*}
    f (y^{\star}) &= \lambda_{\max}  \left( C - \sum_{i \in [m]} y^{\star}_i A_i \right) +   \tilde{b}^{\top} y^{\star} = \frac{1}{r_p} \left(r_p y_0^{\star} + b^{\top} y^{\star} \right)  \\
    &:= \frac{1}{r_p}  \min_{\left\{ (y_0,y) \in \R{} \times \R{m}_{\geq 0} : y_0 I + \sum_{i \in [m]} y_i A_i  - C \succeq 0\right\}} r_p y_0 + b^{\top} y = \OPT/r_p.
\end{align*}
Since dual optimal solutions $(y_0^{\star}, y^{\star})$ to \eqref{e:SDPD_norm} satisfy $\| (y_0^{\star}, y^{\star})^{\top} \|_1 \leq r_d$, and $y^{\star}$ is also an optimal solution to \eqref{e:sdp_eig}, it follows that \eqref{e:sdp_eig} has a global minimizer in the dilated $m$-dimensional simplex
$$ r_d \Delta^m := \left\{ y \in \R{m}_{\geq 0} : \sum_{i \in [m]} y_i = r_d \right\}.$$
This motivates us to solve \eqref{e:sdp_eig} with mirror descent using the simplex setup described earlier in Section~\ref{ss:md_setups}.

The complexity of our algorithm will scale with the Lipschitz constant of the objective in \eqref{e:sdp_eig}. The upshot of using mirror descent is that we can choose to work with the norm in which $f$ is well-behaved. 
\begin{restatable}{lemma}{lemeigenoptlipschitz}\label{lem: eigen_opt_lipschitz}
    The objective function $f: \R{m} \rightarrow \Rmbb$ in \eqref{e:sdp_eig} is $2$-Lipschitz on $\R{m}$ with respect to the $\ell_1$-norm. 
\end{restatable}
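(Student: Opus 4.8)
The plan is a direct two-point Lipschitz estimate. Write $M(y) := C - \sum_{i\in[m]} y_i A_i$, so that $f(y) = \lambda_{\max}(M(y)) + \tilde b^\top y$. I would control the eigenvalue term and the linear term separately, each by $\|y-y'\|_1$, and then add via the triangle inequality.

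First I would recall the standard perturbation bound for the largest eigenvalue of symmetric matrices: Weyl's inequality (or equivalently the variational characterization $\lambda_{\max}(M) = \sup_{\|v\|_2=1} v^\top M v$) gives $|\lambda_{\max}(M) - \lambda_{\max}(N)| \le \|M - N\|_{\mathrm{op}}$. Applied to $M(y)$ and $M(y')$, together with the triangle inequality for the operator norm and the normalization $\|A_i\|_{\mathrm{op}} \le 1$, this yields
\begin{equation*}
\bigl|\lambda_{\max}(M(y)) - \lambda_{\max}(M(y'))\bigr| \le \Bigl\| \sum_{i\in[m]} (y_i' - y_i) A_i \Bigr\|_{\mathrm{op}} \le \sum_{i\in[m]} |y_i - y_i'|\,\|A_i\|_{\mathrm{op}} \le \|y - y'\|_1 .
\end{equation*}
For the linear term, Hölder's inequality gives $|\tilde b^\top(y - y')| \le \|\tilde b\|_\infty\,\|y - y'\|_1 \le \|y - y'\|_1$, using $\tilde b = b/r_p$ and $\|b\|_\infty \le r_p$, so $\|\tilde b\|_\infty \le 1$. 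Summing the two contributions gives $|f(y) - f(y')| \le 2\|y - y'\|_1$ for all $y, y' \in \mathbb{R}^m$, which is exactly $2$-Lipschitzness in the $\ell_1$-norm. (Since $f$ is convex on all of $\mathbb{R}^m$, this equivalently says every subgradient $g$ of $f$ satisfies $\|g\|_\infty \le 2$, matching the paper's definition of $G$-Lipschitz with $G = 2$ and dual norm $\|\cdot\|_\infty$.)

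There is essentially no serious obstacle here; the only point needing a sliver of care is invoking the correct constant $1$ for the Lipschitz modulus of $\lambda_{\max}$ with respect to the operator norm (not $\sqrt{n}$ or anything dimension-dependent), which is why I would state Weyl's inequality explicitly rather than rely on it implicitly. One could also note in passing that $f$ is $2$-Lipschitz even with respect to the operator-norm-weighted version of the constraint data, but the normalized hypotheses make the clean bound $G = 2$ immediate.
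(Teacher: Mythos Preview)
Your proof is correct and essentially identical to the paper's own argument: both split $f$ into the eigenvalue term and the linear term, bound the former via Weyl's inequality $|\lambda_{\max}(M)-\lambda_{\max}(N)|\le\|M-N\|_{\mathrm{op}}$ together with the triangle inequality and $\|A_i\|_{\mathrm{op}}\le 1$, bound the latter via H\"older with $\|\tilde b\|_\infty\le 1$, and sum to obtain the constant $2$.
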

\begin{proof}
    The idea is to consider the two terms defining $f$ separately, and sum their Lipschitz constants. By H\"older's inequality, the Lipschitz constant of the linear term $\tilde{b}^{\top} y$ is simply $\| \tilde{b} \|_{\infty} \leq 1$. 
    
    For the other term, let $M(y) = C - \sum_{i \in [m]} y_i A_i $, which is a Hermitian matrix for all $y \in \R{m}$. Thus, for any $(y, \bar{y}) \in \R{m} \times \R{m}$, one has
    \begin{align*}
        \abs{\lambda_{\max}(M(y)) - \lambda_{\max}(M(\bar{y}))} \leq \| M (y) - M(\bar{y}) \|_{\text{op}} 
        &=\sum_{i \in [m]} \abs{y_i - \bar{y}_i} \norm{A_i}_{\text{op}} \\ 
        &\leq \max_{i \in [m]} \left\{ \| A_i \|_{\text{op}} \right\} \| y - \bar{y} \|_1,
    \end{align*} 
    where the first inequality is a consequence of Weyl's Perturbation Theorem for Hermitiain matrices. Since $\| A_1\|_{\text{op}}, \dots, \|A_m\|_{\text{op}} \leq 1$, we have that $\lambda_{\max}(M(y))$ is 1-Lipschitz with respect to the $\ell_1$-norm.
    
    The proof is complete upon summing the Lipschitz constants of $\lambda_{\max}(M(y))$ and $\tilde{b}^{\top} y$. 
\end{proof}

We are now in a position to establish the (black-box) complexity of determining an $\varepsilon$-optimal solution to the dual SDP \eqref{e:SDPD} using a quantum mirror descent method. 
\begin{restatable}{theorem}{thmdualSDP}\label{t: dual_SDP_convergence}
   Let $f: \R{m} \rightarrow \R{}$ be the objective function in \eqref{e:sdp_eig}, and let $r_p, r_d \geq 1$ be such that an optimal solution to the SDP \eqref{e:SDP}-\eqref{e:SDPD} satisfies $\trace(X^{\star}) = r_p$ and $\| y^{\star} \|_1 \leq r_d$. Suppose that one has access to an $\theta$-precise binary evaluation oracle to $f$, with $\theta = \widetilde{\Ocal} ( \varepsilon^5/r_p^5 r_d^5 m^{4.5} )$. Then, with probability at least $2/3$, the quantum mirror descent method described in Theorem~\ref{thm:MD_convergence} solves \eqref{e:SDP}-\eqref{e:SDPD} to additive error $\varepsilon \in (0,1)$
    using at most 
    $ \widetilde{\Ocal} \left(\left( \frac{r_p r_d}{\varepsilon} \right)^2\right)$
    queries to a binary evaluation oracle for $f$ and $\widetilde{\Ocal} \left(m \left( \frac{r_p r_d}{\varepsilon} \right)^2\right)$ gates. 
    
    The output is a vector $\tilde{y}^{\star} \in \R{m}_{\geq 0}$ such that 
    $$(y_0^{\star}, y^{\star}) = \left( - \lambda_{\min} \left( \sum_{i \in [m]} (r_d \cdot \tilde{y}^{\star}_i ) A_i - C \right)  , r \tilde{y}^{\star}  \right) \in \R{} \times \R{m}_{\geq 0}$$ 
    is a feasible solution to \eqref{e:SDPD} that satisfies 
    $$ b_0 y_0^{\star} + b^{\top} y^{\star} \leq \OPT + \varepsilon.$$
    In other words, $(y_0^{\star}, y^{\star}) \in \R{} \times \R{m}_{\geq 0}$ is an $\varepsilon$-precise solution to \eqref{e:SDPD}.
\end{restatable}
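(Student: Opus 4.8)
The plan is to reduce the dual SDP to the eigenvalue problem \eqref{e:sdp_eig}, run the quantum mirror descent of Theorem~\ref{thm:MD_convergence} with the entropic (simplex) mirror map on a suitably rescaled version of \eqref{e:sdp_eig}, and then translate the resulting sub-optimality back to \eqref{e:SDPD} through the identity $y_0^\star = \lambda_{\max}(C - \sum_i y_i^\star A_i)$ recorded in \eqref{e:sdp_optimality}. First I would fix the geometry. By the discussion preceding \eqref{e:sdp_eig}, minimizing the normalized dual objective in \eqref{e:SDPD_norm} is equivalent to minimizing $f(y) = \lambda_{\max}(C - \sum_{i \in [m]} y_i A_i) + \tilde{b}^{\top} y$ over $\R{m}_{\geq 0}$, this minimum equals $\OPT/r_p$, and it is attained at a point $y^\star$ with $\|y^\star\|_1 \le r_d$; after adjoining one slack coordinate (with $A_{m+1} = 0$, $\tilde b_{m+1}=0$) if needed, we may assume the minimizer lies in the dilated simplex $r_d \Delta^m$. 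Substituting $y = r_d z$ turns this into $\min_{z \in \Delta^m} h(z)$ with $h(z) := f(r_d z)$, which by Lemma~\ref{lem: eigen_opt_lipschitz} is $G := 2r_d$-Lipschitz with respect to $\|\cdot\|_1$. I take $\Phi$ to be the negative entropy ($\mu = 1$-strongly convex in $\|\cdot\|_1$), the uniform starting point $z_1$ so that $R^2 = D_\Phi(z^\star, z_1) \le \log m$, $K := \diam(\Delta^m) = 2$ in $\|\cdot\|_1$, and norm-comparison constants $\vartheta = m$, $\vartheta_* = 1$ (so $\vartheta\vartheta_* = m = d$); the Bregman projection is $\ell_1$-renormalization and the update is the closed form \eqref{e:simplex_update}, so the per-iteration cost is $\Tcal_{\text{next}} = \Ocal(m)$.

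Next I would invoke Theorem~\ref{thm:MD_convergence} with target accuracy $\varepsilon' := \varepsilon/r_p$; this choice is forced because the normalized dual value is $\OPT/r_p$, so an $\varepsilon'$-minimizer of $h$ rescales to a point of \eqref{e:SDPD} whose objective overshoots $\OPT$ by $r_p \varepsilon' = \varepsilon$. Reading off the theorem: the query count is $\widetilde{\Ocal}((GR/(\sqrt{\mu}\,\varepsilon'))^2) = \widetilde{\Ocal}((r_p r_d/\varepsilon)^2)$ after absorbing $\log m$; the gate count is $\widetilde{\Ocal}((d + \Tcal_{\text{next}})(GR/(\sqrt{\mu}\,\varepsilon'))^2) = \widetilde{\Ocal}(m (r_p r_d/\varepsilon)^2)$; and substituting $(\mu, G, R, K, \vartheta, \vartheta_*, d)$ into the precision requirement $\theta = \Ocal(\mu (\varepsilon')^{5}/(G^4 R^2 K^2 \vartheta_*^2 \vartheta d^3))$ yields the stated $\theta = \widetilde{\Ocal}(\varepsilon^5/(r_p^5 r_d^5 m^{4.5}))$ (quoted with conservative exponents). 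The success probability $\ge 2/3$ is inherited verbatim from Theorem~\ref{thm:MD_convergence}.

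Finally I would assemble the dual certificate. Let $\tilde{y}^\star \in \Delta^m$ denote the averaged iterate returned by mirror descent, set $y^\star := r_d \tilde{y}^\star \in \R{m}_{\geq 0}$ and $y_0^\star := \lambda_{\max}(C - \sum_{i} y^\star_i A_i) = -\lambda_{\min}(\sum_i y^\star_i A_i - C)$. Since $-\lambda_{\min}(S) I + S \succeq 0$ for every symmetric $S$, the matrix $y_0^\star I + \sum_i y^\star_i A_i - C \succeq 0$, so $(y_0^\star, y^\star)$ is feasible for \eqref{e:SDPD}. Its normalized objective is exactly $y_0^\star + \tilde{b}^{\top} y^\star = f(y^\star) = h(\tilde{y}^\star) \le h(z^\star) + \varepsilon' = \OPT/r_p + \varepsilon/r_p$; multiplying by $r_p$ and using $b_0 = r_p$, $b = r_p \tilde b$ gives $b_0 y_0^\star + b^{\top} y^\star \le \OPT + \varepsilon$, which is the claim.

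The only genuinely non-mechanical part is the normalization bookkeeping: tracking the substitution $y = r_d z$ (together with the $(b_0,b)\mapsto(1,\tilde b)$ normalization already baked into \eqref{e:sdp_eig}) so that an $\varepsilon/r_p$-accurate solution of the eigenvalue problem provably certifies an $\varepsilon$-accurate solution of \eqref{e:SDPD}, ensuring that a minimizer of $f$ really lies inside the region $r_d\Delta^m$ over which mirror descent operates (the role of the slack coordinate), and confirming that the reconstructed $(y_0^\star,y^\star)$ is exactly dual-feasible rather than only $\varepsilon$-feasible. Everything else is a direct substitution of the simplex parameters into Theorem~\ref{thm:MD_convergence} and Lemma~\ref{lem: eigen_opt_lipschitz}.
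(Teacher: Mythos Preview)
Your proof is correct and follows essentially the same route as the paper: rescale so that the eigenvalue problem lives on the simplex, apply Theorem~\ref{thm:MD_convergence} with the entropic mirror map, read off the query/gate counts, and reconstruct a feasible dual pair via $y_0^\star = \lambda_{\max}(C-\sum_i y_i^\star A_i)$. The only cosmetic difference is that the paper also divides $C$ by $r_d$ (so the rescaled objective remains $2$-Lipschitz and the target precision becomes $\varepsilon/(r_p r_d)$), whereas you keep $h(z)=f(r_d z)$ with Lipschitz constant $2r_d$ and target precision $\varepsilon/r_p$; since the query bound depends only on the ratio $G/\varepsilon'$, the two normalizations are equivalent. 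Your explicit slack-coordinate remark to force the minimizer onto $r_d\Delta^m$ is a point the paper glosses over, so your argument is slightly more careful there.
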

\begin{proof}
     Let $(A_1, \dots, A_m, b, C)$ be the input data defining \eqref{e:SDP}-\eqref{e:SDPD}, and denote the optimal objective value by $\OPT$. We first normalize $\tilde{b} \mapsto b/r_p$ and re-scale the input space $\tilde{y} \mapsto y/r_d$. With these modifications, the result in Lemma~\ref{lem: eigen_opt_lipschitz} certifies that $f$ is 2-Lipschitz over $\R{m}$ with respect to the $\ell_1$-norm, and the optimal objective value of the normalized problem is $\OPT/r_p r_d$. Accordingly, if we seek to approximate the optimal objective value of \eqref{e:SDP}-\eqref{e:SDPD} to additive error $\varepsilon \in (0,1)$, we must solve 
     $$ \min_{\tilde{y} \in \Delta^m} f(\tilde{y}) := \lambda_{\max} \left( \frac{1}{r_d} C - \sum_{i \in [m]} \tilde{y}_i A_i \right) +  \tilde{b}^{\top} \tilde{y}$$
     to precision $\varepsilon^{\prime} \coloneqq \varepsilon/r_p r_d$.
     
     The basic idea is to utilize the simplex setup for mirror descent, taking the negative entropy function as the mirror map $ \Phi (\tilde{y}) = \sum_{i \in [m]} \tilde{y}_i \log \tilde{y}_i$, and choosing the starting point $\tilde{y} = m^{-1} \mathbf{1}_m \in \Delta^{m}_{>0}$. Recalling that $\Phi$ is $1$-strongly convex on $\Delta^m$, and that our choice of starting point ensures $R = \Ocal(\log(m))$, the result in Theorem~\ref{thm:MD_convergence} (taking $\vartheta = 1$ since $\| \cdot \|_* = \| \cdot \|_{\infty}$) establishes that one can determine $\tilde{y}^{\star} \in \Delta^m$ satisfying
     \begin{equation}\label{e: sdp_obj_guar}
         f(\tilde{y}^{\star}) \leq \frac{\OPT}{r_p r_d} + \frac{\varepsilon}{r_p r_d} \implies r_p r_d \cdot f(\tilde{y}^{\star})  \leq  \OPT + \varepsilon,
     \end{equation} 
    using at most 
    $$Q_f = \widetilde{\Ocal} \left( \left( \frac{1}{\varepsilon^{\prime}} \right)^2 \right) = \widetilde{\Ocal} \left(\left( \frac{r_p r_d}{\varepsilon} \right)^2\right)$$
    queries to the $\theta$-precise binary evaluation oracle for $f$ and $\widetilde{\Ocal} \left((m + \Tcal_{\text{next}}) \left( \frac{r_p r_d}{\varepsilon} \right)^2\right)$ gates, where $\Tcal_{\text{next}}$ represents the cost to proceed to the next iterate. 

    To make the gate complexity explicit, note that $\widetilde{\Ocal} \left(m \right)$ work is required to proceed to the next iterate. After estimating $\tilde{g}_t \in \R{m}$ at iterate $t$, we compute the entries of $\tilde{y} _{t+1} \in \Delta^m$ according to \eqref{e:simplex_update}, setting
    $$ \tilde{y} _{t+1, i} = \frac{\tilde{y}_{t,i} e^{- \eta g_{t, i}}}{\sum_{i \in [m]} \tilde{y}_{t,i} e^{- \eta g_{t, i}}},$$
    which clearly requires $\widetilde{\Ocal}(m)$ operations. One can also reduce storage requirements by recursively maintaining the running average as the convex combination: 
    $$\frac{1}{t+1} \sum_{s \in [t+1]} \tilde{y}_s = \frac{t}{t+1} \tilde{y}_t + \frac{1}{t+1} \tilde{y}_{t+1}.$$

    Finally, we show that our output $\tilde{y}^{\star}$ encodes an $\varepsilon$-precise solution to the dual SDP \eqref{e:SDPD}. Define $y^{\star} \coloneqq r_d \tilde{y}^{\star} \in \R{m}_{\geq 0}$ and set 
    $$ y_0^{\star} =  - \lambda_{\min} \left( \sum_{i \in [m]} y^{\star}_i A_i - C \right) \in \R{}.$$
    Then, $y_0^{\star} I  + \sum_{i \in [m]} y^{\star}_i A_i - C \succeq 0$ (in fact, $\lambda_{\min} ( y_0^{\star} I  + \sum_{i \in [m]} y^{\star}_i A_i - C) = 0$), so $(y_0^{\star}, y^{\star}) \in \R{} \times \R{m}_{\geq 0}$ is a feasible solution to \eqref{e:SDPD}. Applying \eqref{e: sdp_obj_guar}, one can also observe
    \begin{align*}
        r_p y_0^{\star} + b^{\top} y^{\star} = r_p \left( y_0^{\star} + \tilde{b}^{\top} y^{\star} \right) = r_p r_d  \left[ \lambda_{\max} \left( \frac{1}{r_d} C -  \sum_{i \in [m]} \tilde{y}^{\star}_i A_i \right) + \tilde{b}^{\top} \tilde{y}^{\star} \right] &= r_p r_d \cdot f (\tilde{y}^{\star}) \\
        &\leq \OPT + \varepsilon.
    \end{align*}
\end{proof}

Thus far we have established the query complexity of our algorithm applied to solving SDPs. However, these queries are to a black-box evaluation oracle for the objective function $f$. In order to make our results comparable to other SDP solvers found in the literature, we need to open the black-box and establish the cost of each of these evaluations in terms of queries to the input data defining \eqref{e:SDP}-\eqref{e:SDPD}. This is what we do next. 

\subsection{Complexity}
\subsubsection{Semidefinite programs}

\begin{restatable}[SDP Solver]{theorem}{thmSDPhighprec}\label{thm:sdp_high_prec}
    Suppose that the SDP problem data $(A_1, \dots, A_m, b, C)$ is stored in a sparse-access data structure. Choose $\varepsilon \in (0,1)$. Then, with probability $2/3$, the quantum mirror descent method described in Theorem~\ref{thm:MD_convergence} solves \eqref{e:SDP}-\eqref{e:SDPD} to additive error $\varepsilon$ using 
      $$ \Ocal \left( ( mns + n^{\omega} )  \left( r_p r_d/\varepsilon \right)^2 \cdot \polylog \left(m,n,\left( r_p r_d /\varepsilon \right) \right) \right)$$
      queries and gates. The output is an $\varepsilon$-precise solution $(y_0^{\star}, y^{\star}) \in \R{} \times \R{m}_{\geq 0}$ to the dual SDP \eqref{e:SDPD}.
\end{restatable}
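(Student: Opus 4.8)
The plan is to combine the black-box query bound already proved in Theorem~\ref{t: dual_SDP_convergence} with an explicit reversible implementation of the binary evaluation oracle for the eigenvalue objective $f$ in \eqref{e:sdp_eig}. Recall that Theorem~\ref{t: dual_SDP_convergence} shows that quantum mirror descent (Theorem~\ref{thm:MD_convergence}) with the negative-entropy mirror map on the dilated simplex returns an $\varepsilon$-precise dual solution $(y_0^{\star},y^{\star})$ to \eqref{e:SDP}-\eqref{e:SDPD} using $\widetilde{\Ocal}((r_p r_d/\varepsilon)^2)$ queries to an $\theta$-approximate binary oracle $U_f^{(\theta)}$ (Definition~\ref{def:binary-oracle}) together with $\widetilde{\Ocal}(m(r_p r_d/\varepsilon)^2)$ additional gates, provided $\theta = \widetilde{\Ocal}(\varepsilon^5/(r_p^5 r_d^5 m^{4.5}))$. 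Since $1/\theta = \widetilde{\Ocal}(r_p^5 r_d^5 m^{4.5}/\varepsilon^5)$, the working precision $\log(1/\theta)$ is only $\polylog(m,n,r_p r_d/\varepsilon)$. Thus the entire theorem reduces to bounding the cost of a single evaluation of an $\theta$-approximation $\tilde f$ of $f$ in terms of queries and gates on the sparse-access data $(A_1,\dots,A_m,b,C)$, then multiplying the two complexities.

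The key steps are then: (i) describe a classical arithmetic circuit that evaluates $f(y) = \lambda_{\max}\big(C - \sum_{i\in[m]} y_i A_i\big) + \tilde b^{\top} y$ to additive precision $\theta$: the linear term is a single inner product costing $\Ocal(m)$ operations; the matrix $M(y) := C - \sum_{i\in[m]} y_i A_i$ has $\Ocal(n^2)$ entries, each an accumulation of at most $m+1$ scaled input entries read from the sparse-access structure (at most $s$ per row of each $A_i$), so $M(y)$ is assembled in $\Ocal(mns)$ operations; (ii) run a classical symmetric-eigenvalue routine on the dense $n\times n$ matrix $M(y)$ to extract $\lambda_{\max}$. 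Here one uses that $\|M(y)\|_{\text{op}} \le 1 + r_d$ on the region of interest (since $\|C\|_{\text{op}},\|A_i\|_{\text{op}}\le 1$ and $\|y\|_1\le r_d$ on the dilated simplex used by Theorem~\ref{t: dual_SDP_convergence}), so obtaining $\lambda_{\max}$ to additive error $\theta$ requires only $\polylog(1/\theta)=\polylog(m,n,r_p r_d/\varepsilon)$ guard bits, achievable in $\Ocal(n^{\omega})$ arithmetic operations up to polylogarithmic factors; (iii) compile this arithmetic circuit into a reversible binary oracle, incurring only polylogarithmic overhead in the bit-lengths, so the gate/query complexity of $U_f^{(\theta)}$ is $\Ocal\big((mns + n^{\omega})\,\polylog(m,n,r_p r_d/\varepsilon)\big)$; (iv) multiply: $\widetilde{\Ocal}((r_p r_d/\varepsilon)^2)$ oracle calls times $\Ocal\big((mns+n^{\omega})\polylog\big)$ gates per call, noting that the extra $\widetilde{\Ocal}(m(r_p r_d/\varepsilon)^2)$ mirror-descent-update gates (simplex renormalization, cf.\ \eqref{e:simplex_update}) are absorbed since $mns \ge m$. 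Correctness of the output as a feasible, $\varepsilon$-optimal dual solution with success probability $2/3$ is inherited verbatim from Theorem~\ref{t: dual_SDP_convergence}.

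The step I expect to be the main obstacle is the precision bookkeeping for the eigenvalue computation embedded inside a reversible circuit: one must verify that computing $\lambda_{\max}$ of a dense symmetric $n\times n$ matrix to the small additive accuracy $\theta = \widetilde{\Ocal}(\varepsilon^5/(r_p^5 r_d^5 m^{4.5}))$ genuinely costs only $\widetilde{\Ocal}(n^{\omega})$ — this means invoking a numerically stable fast symmetric eigensolver, tracking that the number of bits of arithmetic precision scales like $\polylog(1/\theta)$ rather than polynomially, and confirming the reversible-arithmetic compilation does not spoil this. A secondary subtlety worth flagging is that the subgradient-estimation subroutine of Lemma~\ref{lem:quantum-subg-van} invoked inside Theorem~\ref{thm:MD_convergence} queries $\tilde f$ in superposition over a fine hypergrid; this is harmless because the binary oracle evaluates $\tilde f$ at an arbitrary bit-string input, but it should be noted so that the claimed gate complexity is understood to cover the coherent implementation of the evaluation circuit on a quantum register.
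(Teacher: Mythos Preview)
Your proposal is correct and follows essentially the same approach as the paper: invoke Theorem~\ref{t: dual_SDP_convergence} for the $\widetilde{\Ocal}((r_p r_d/\varepsilon)^2)$ query count, implement the binary oracle for $f$ by computing $\tilde b^\top y$ in $\widetilde{\Ocal}(m)$, assembling $M(y)$ in $\Ocal(mns)$, extracting $\lambda_{\max}$ via a classical eigendecomposition in $\widetilde{\Ocal}(n^\omega)$, and then multiplying. If anything, your explicit discussion of the precision bookkeeping for the fast symmetric eigensolver and of the coherent evaluation in superposition is more careful than the paper's proof, which simply asserts the $\Ocal(n^\omega \log(1/\theta))$ cost without further comment.
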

\begin{proof}
    We first discuss the cost of implementing the evaluation oracle. 
   
   To compute the inner product $b^{\top} y$, we need $\Ocal(m \log(1/\theta))$ queries to the elements of $b$. To see this, note that $\Ocal(m \log(1/\theta))$ queries are necessary to implement the unitary 
    $$ U_b \ket{z} \mapsto \ket{z \oplus b}.$$
    With access to $U_b$, one can compute the inner product as 
    $$\ket{0}\ket{y_i} \ket{0} \stackrel{U_b}{\mapsto}  \ket{b}\ket{y_i} \ket{0} \stackrel{\widetilde{\Ocal}(m)~\text{gates}}{\mapsto} \ket{b}\ket{y_i} \ket{b^{\top} y} \stackrel{U_b^{\dagger}}{\mapsto} \ket{y_i} \ket{b^{\top} y}.$$
    Since $U_b^{\dagger} = U_b$, the above circuit requires 2 applications of $U_b$, and hence, $\Ocal(m \log(1/\theta))$ queries to the elements of $b$ in total. Note that the total number of gates is also $\Ocal(m \log(1/\theta))$.
    
    Implementing the matrix $M = C - \sum_{i \in [m]} y_i A_i$ from $y$ requires $\Ocal(mns)$ gates, and from here we can perform an eigendecomposition on $M$ to find the top eigenvalue using $\Ocal(n^{\omega} \log (1/\theta))$ gates. Hence, in the high-precision regime, we can implement an $\theta$-precise evaluation oracle for $f$ using $$\Ocal((mns + n^{\omega}) \log (1/\theta))$$ 
    queries and gates. 
    
    Setting $\theta = \widetilde{\Ocal} ( \varepsilon^5/r_p^5 r_d^5 m^{4.5} )$ and applying Theorem~\ref{t: dual_SDP_convergence}, the total number of queries can be bounded as
    $$\widetilde{\Ocal}\left(\left(mns + n^{\omega}\right)\left( \frac{r_p r_d}{\varepsilon} \right)^2 \right) = \Ocal \left( ( mns + n^{\omega} )  \left( \frac{r_p r_d}{\varepsilon} \right)^2 \cdot \polylog \left(m,n, \frac{ r_p r_d}{\varepsilon} \right) \right)$$
    and the number of gates is 
    $$\widetilde{\Ocal}\left(\left(mns + n^{\omega}  +m \right)\left( \frac{r_p r_d}{\varepsilon} \right)^2 \right) = \Ocal \left( ( mns + n^{\omega} )  \left( \frac{r_p r_d}{\varepsilon} \right)^2 \cdot \polylog \left(m,n, \frac{ r_p r_d}{\varepsilon} \right) \right).$$
    The theorem statement is obtained upon noting that, given $\tilde{y}^{\star}$,  one can compute 
    $$(y_0^{\star}, y^{\star}) = \left( - \lambda_{\min} \left( \sum_{i \in [m]} (r \tilde{y}^{\star}_i ) A_i - C \right)  , r \tilde{y}^{\star}  \right) \in \R{} \times \R{m}_{\geq 0}$$
    using at most $\widetilde{\Ocal}\left(mns + n^{\omega}\right)$ operations.
\end{proof}

\subsubsection{Linear programs}
Recall that LPs are a special case of SDPs in which each of the input matrices $A_1, \dots, A_m, C$ is a diagonal matrix. In this case, the dual slack matrix $Z$ is also a diagonal matrix. Thus for LPs, we may write the objective function in \eqref{e:sdp_eig} as 
$$ f(y) = \lambda_{\max} \left( C - \sum_{i \in [m]} y_i A_i \right) +  \tilde{b}^{\top} y = \max_{j \in [n]} \left\{ \left( C - \sum_{i \in [m]} y_i A_i \right)_{jj} \right\} + \tilde{b}^{\top} y.$$

One can leverage the simplified structure of the objective for LPs to obtain an attractive running time, as we show next. 
\begin{restatable}[LP solver]{theorem}{thmLP}\label{thm:lp}
    Let $(A,b,c)$ be the input data for an LP with $m$ constraints and $n$ variables. Suppose that the columns of $A$ and $c$ are stored in a quantum read-only memory. Then, with probability $2/3$, the quantum mirror descent method described in Theorem~\ref{thm:MD_convergence} solves the primal-dual LP pair in \eqref{e:LP} to additive error $\varepsilon$ using 
      $$ \Ocal \left( m \sqrt{n}  \left( r_p r_d/\varepsilon \right)^2 \cdot \polylog \left(m,n, r_p r_d/\varepsilon \right) \right)$$
      queries and gates. The output is an $\varepsilon$-precise solution to the dual LP problem.
\end{restatable}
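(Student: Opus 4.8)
The plan is to follow the blueprint of the SDP arguments (Theorems~\ref{t: dual_SDP_convergence} and~\ref{thm:sdp_high_prec}), but to exploit that, for diagonal data, the $\lambda_{\max}(\cdot)$ appearing in \eqref{e:sdp_eig} degenerates into a coordinate-wise maximum, which can be evaluated quadratically faster by quantum maximum finding. First I would invoke the reduction of Section~\ref{s:SDP}: after normalizing $\tilde b = b/r_p$ and rescaling $\tilde y = y/r_d$, solving the dual LP in \eqref{e:LP} to additive error $\varepsilon$ is equivalent to minimizing over the simplex $\Delta^m$ the function
\[
  f(\tilde y) = \max_{j \in [n]}\Bigl\{\tfrac{1}{r_d}c_j - \langle A_j, \tilde y\rangle\Bigr\} + \tilde b^{\top}\tilde y ,
\]
where $A_j \in \R{m}$ is the $j$-th column of $A$, to precision $\varepsilon' = \varepsilon/(r_p r_d)$; Lemma~\ref{lem: eigen_opt_lipschitz} (specialized to diagonal matrices) shows $f$ is $2$-Lipschitz in the $\ell_1$-norm.

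Next I would apply Theorem~\ref{thm:MD_convergence} (equivalently Theorem~\ref{t: dual_SDP_convergence}) with the negative-entropy mirror map on $\Delta^m$, so that $\mu = 1$, $R^2 = \mathcal{O}(\log m)$, and $\vartheta = 1$ (the dual norm is $\ell_\infty$): quantum mirror descent outputs an $\varepsilon'$-optimal $\tilde y^\star \in \Delta^m$ using $\widetilde{\mathcal{O}}((r_p r_d/\varepsilon)^2)$ queries to a $\theta$-approximate binary oracle (Definition~\ref{def:binary-oracle}) for $f$ with $\theta = \widetilde{\mathcal{O}}(\varepsilon^5/(r_p^5 r_d^5 m^{4.5}))$, plus $\widetilde{\mathcal{O}}(m)$ gates per iteration to carry out the multiplicative-weights update \eqref{e:simplex_update} and maintain the running average.

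The new ingredient is to implement the $\theta$-approximate binary oracle for $f$ at cost $\widetilde{\mathcal{O}}(m\sqrt n)$. The linear term $\tilde b^{\top}\tilde y$ is computed coherently in $\widetilde{\mathcal{O}}(m)$ queries and gates exactly as in Theorem~\ref{thm:sdp_high_prec}. For the max term, I would build a unitary $V:\ket{j}\ket{\tilde y}\ket{0}\mapsto\ket{j}\ket{\tilde y}\ket{h_j(\tilde y)}$ where $h_j(\tilde y)$ is a $\theta$-accurate estimate of $\tfrac{1}{r_d}c_j - \langle A_j,\tilde y\rangle$, costing $\widetilde{\mathcal{O}}(m)$ queries to the QROM columns of $A$ and entries of $c$; then run generalized quantum maximum finding \cite{durr1996quantum, van2020quantum, gilyen2019thesis} over $j\in[n]$ using $\widetilde{\mathcal{O}}(\sqrt n)$ queries to $V$, uncomputing to disentangle the output register holding $\max_j h_j(\tilde y)$ from garbage. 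Since $x\mapsto\max_j x_j$ is $1$-Lipschitz, computing each $h_j$ to accuracy $\theta$ yields the maximum to accuracy $\theta$. Combining, one evaluation of $f$ costs $\widetilde{\mathcal{O}}(m\sqrt n)$ queries and gates, which dominates the $\widetilde{\mathcal{O}}(m)$ per-iteration overhead; hence the total is $\widetilde{\mathcal{O}}(m\sqrt n\,(r_p r_d/\varepsilon)^2)$, i.e.\ $\mathcal{O}\bigl(m\sqrt n\,(r_p r_d/\varepsilon)^2\,\polylog(m,n,r_p r_d/\varepsilon)\bigr)$. Finally, as in the proof of Theorem~\ref{t: dual_SDP_convergence}, from $\tilde y^\star$ I would output $(y_0^\star, y^\star) = \bigl(-\min_{j\in[n]}\bigl(\sum_i(r_d\tilde y^\star_i)A_i - C\bigr)_{jj},\ r_d\tilde y^\star\bigr)$, which is dual-feasible for \eqref{e:LP} and satisfies $r_p y_0^\star + b^{\top}y^\star \le \OPT + \varepsilon$; recovering it costs $\widetilde{\mathcal{O}}(mn)$, which is subdominant.

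I expect the delicate step to be converting quantum maximum finding, an inherently randomized subroutine, into something usable as the \emph{deterministic} $\theta$-approximate binary oracle required by Lemma~\ref{lem:quantum-subg-van} --- a lemma that queries the oracle in superposition over a hypergrid of $\Bcal_{\infty}(\tilde y, 2r_1)$. The fix is to boost the failure probability of maximum finding to $1/\poly$ (by $\mathcal{O}(\log(\cdot))$ repetitions, returning the largest observed value), small enough that the resulting perturbation of the implemented unitary, summed over all $\widetilde{\mathcal{O}}((r_p r_d/\varepsilon)^2)$ evaluations and their $\mathcal{O}(\log(m/\rho))$ internal oracle calls, shifts the algorithm's output distribution by $o(1)$; this costs only polylogarithmic overhead and preserves the $\ge 2/3$ success probability. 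Carefully threading the chain of precisions --- the grid radius $r = 2r_2/d$ of Lemma~\ref{lem:quantum-subg-van}, the accuracy $\theta$ to which each inner product $h_j$ must be formed, and the boosted failure probability --- is the main book-keeping that has to be gotten right (and is presumably where the evaluation-oracle issue acknowledged in the paper arose).
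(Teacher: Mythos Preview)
Your proposal is correct and follows essentially the same route as the paper: reduce the dual LP to the diagonal instance of \eqref{e:sdp_eig}, evaluate $f$ via $\widetilde{\Ocal}(m)$ arithmetic for the linear term plus generalized quantum maximum finding over $j\in[n]$ (each comparison costing $\widetilde{\Ocal}(m)$ via QROM), and invoke Theorem~\ref{t: dual_SDP_convergence} with $\theta=\widetilde{\Ocal}(\varepsilon^5/(r_p^5 r_d^5 m^{4.5}))$. Two small remarks: first, the output recovery should also use maximum finding to compute $y_0^\star$ at cost $\widetilde{\Ocal}(m\sqrt n)$, as the paper does --- your claimed $\widetilde{\Ocal}(mn)$ is \emph{not} subdominant when $\sqrt n > (r_p r_d/\varepsilon)^2$; second, your final paragraph on turning the randomized maximum-finding subroutine into a $\theta$-approximate binary oracle usable in superposition is more careful than the paper's own proof, which simply cites \cite[Theorem~49]{van2020quantum} without discussing this point (and indeed the acknowledgements indicate that the evaluation oracle here required a fix).
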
 
\begin{proof}
    As discussed earlier in the proof of Theorem~\ref{thm:sdp_high_prec}, we can compute the linear term $\tilde{b}^{\top} y$ using $\widetilde{\Ocal} (m)$ gates. 
    
    For the other term, recall that in LP we need to evaluate  
    $$  \max_{j \in [n]} \left\{ \left( C - \sum_{i \in [m]} y_i A_i \right)_{jj} \right\}  = \min_{j \in [n]} \left\{ \left( \sum_{i \in [m]} y_i A_i - C \right)_{jj} \right\}.$$
    To do so, we proceed as follows. Letting $j \in [n]$, we implement $U_M$, which acts as
    \begin{align*}
        \ket{j} \ket{y} \ket{0} \ket{0} \ket{0} &\overset{\widetilde{\Ocal}(1)~\text{QROM queries}}{\mapsto} \ket{j} \ket{y} \ket{A_j } \ket{c_j}\ket{0} \overset{\widetilde{\Ocal}(m)~\text{gates}}{\mapsto} \ket{j} \ket{y} \ket{ \langle A_j, y \rangle - c_j} \ket{c_j}\ket{0},
    \end{align*}
    The above requires $\widetilde{\Ocal}(m)$ gates. With this construction, we can apply generalized minimum finding \cite[Theorem 49]{van2020quantum} to compute to smallest entry with $\Ocal(\sqrt{n})$ applications of $U_M$ and its inverse. 

    From here, setting $\theta = \widetilde{\Ocal} ( \varepsilon^5/r_p^5 r_d^5 m^{4.5} )$ and applying Theorem~\ref{t: dual_SDP_convergence} as we did in the proof of Theorem~\ref{thm:sdp_high_prec} suffices to bound the query and gate complexity of determining $\tilde{y}^{\star}$ using the mirror descent method. To complete the proof note that we can construct an $\varepsilon$-precise solution $(y_0^{\star}, y^{\star}) \in \R{} \times \R{m}_{\geq 0}$ to the dual LP from $\tilde{y}^{\star} \in \R{m}_{\geq 0}$ with cost $\widetilde{\Ocal}(m  \sqrt{n})$. Clearly, computing $y^{\star} = r_d  \cdot \tilde{y}^{\star}$ takes $\Ocal(m)$ operations, and one can compute 
    $$ y_0^{\star} = - \min_{j \in [n]} \left\{ \left( \sum_{i \in [m]} y^{\star}_i A_i - C \right)_{jj} \right\}$$
    with cost $\widetilde{\Ocal}(m\sqrt{n})$, using the same strategy we employed to evaluate $f$. 
\end{proof}

\subsubsection{Zero-sum games}
\textit{Zero-sum games} are matrix games in which each player has a finite number of pure strategies. These problems are well-studied in the game theory literature, and a standard setup concerns two players Alice and Bob whose action spaces are $[m]$ and $[n]$ respectively. Payoffs from the game are encoded in the entries of a matrix $A \in [-1, 1]^{m \times n}$. If Alice plays action $i \in [m]$ and Bob plays $j \in [n]$, then Alice obtains the payoff $A_{ij}$, while Bob receives a payoff of $-A_{ij}$. Each player aims to maximize their expected payoff through randomized strategies $(x, y) \in \Delta^n \times \Delta^m$, giving rise to the minimax optimization problem:
\begin{equation}\label{e:zsg_minimax}\tag{ZSG}
    \min_{x \in \Delta^n} \max_{y \in \Delta^m} y^{\top} A x.
\end{equation}
Saddle points of \eqref{e:zsg_minimax} are called \textit{mixed Nash equilibria}, which always exist for zero-sum games due to von~Neumann's minimax theorem~\cite{Neumann1928}.

Zero-sum games can be reformulated as LPs and vise-versa. The general idea follows from the fact that a linear function over the simplex attains its maximum on a vertex, i.e., 
\begin{equation*} 
    \min_{x \in \Delta^n} \max_{y \in \Delta^m} y^{\top} A x =  \min_{x \in \Delta^n} \max_{i \in [m]} e_i^{\top} A x,
\end{equation*}
where $e_i \in \R{m}$ denotes the $i$-th unit vector in the standard orthonormal basis for $\R{m}$. Therefore, upon introducing one additional variable we can equivalently solve the LP: 
\begin{equation*}
    \min_{(\lambda_p, x) \in [-1,1] \times \Delta^n} \left\{ \lambda_p : Ax \leq \lambda_p \mathbf{1}_n \right\},
\end{equation*}
where $\mathbf{1}_n  \in \R{n}$ is the all-ones vector of length $n$. The dual is similarly formulated as 
\begin{equation*}
    \max_{(\lambda_d, y) \in [-1,1] \times \Delta^m} \left\{ \lambda_d : A^{\top} y \geq \lambda_d \mathbf{1}_m \right\}.
\end{equation*}
Note that the optimal objective values of these problems are equivalent under strong duality, and so we call $\lambda^*$ the value of the game. We also have $r_p, r_d \leq 2$. These observations motivate the following corollary of Theorem~\ref{thm:lp}.

\begin{restatable}[Zero-sum games]{corollary}{corzerosumgames}\label{thrm: zsg}
    Let $A \in [-1,1]^{m \times n}$ be the payoff matrix of a zero-sum game.  Suppose that the columns of $A$ and $c$ are stored in a quantum read-only memory. Choose $\varepsilon \in (0,1)$. Then, with probability $2/3$, the quantum mirror descent method described in Theorem~\ref{thm:MD_convergence} solves \eqref{e:zsg_minimax} to additive error $\varepsilon$ using
      $$ \Ocal \left( \frac{m \sqrt{n}}{\varepsilon^2} \polylog \left(m,n, 1/\varepsilon \right) \right)$$
      queries and gates. 
\end{restatable}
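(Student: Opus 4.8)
The plan is to derive this corollary directly from the LP solver of Theorem~\ref{thm:lp} via the standard equivalence between zero-sum games and linear programs recalled immediately above the statement. First I would rewrite $\min_{x\in\Delta^n}\max_{y\in\Delta^m} y^\top A x = \min_{x\in\Delta^n}\max_{i\in[m]} e_i^\top A x$, introduce the scalar variable $\lambda_p$ to obtain the primal LP $\min\{\lambda_p : Ax \le \lambda_p\mathbf{1}_n,\ x\in\Delta^n,\ \lambda_p\in[-1,1]\}$ and its dual $\max\{\lambda_d : A^\top y \ge \lambda_d\mathbf{1}_m,\ y\in\Delta^m,\ \lambda_d\in[-1,1]\}$, and observe that both feasible regions live in dilations of simplices of constant radius, so that after the normalization used in Section~\ref{s:SDP} the problem data satisfies $r_p, r_d \le 2$, i.e. $r_p r_d = \Ocal(1)$.

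The next step is to verify that the hypotheses of Theorem~\ref{thm:lp} are met. Since the columns of $A$ and the objective vector $c$ are stored in a quantum read-only memory, the construction in the proof of Theorem~\ref{thm:lp} applies verbatim: the evaluation oracle for the associated objective $f(y) = \max_{j\in[n]}\{(\cdot)_{jj}\} + \tilde{b}^\top y$ is implemented with $\widetilde{\Ocal}(m\sqrt{n})$ gates using generalized quantum minimum finding, and the linear term costs $\widetilde{\Ocal}(m)$. Setting $\theta = \widetilde{\Ocal}(\varepsilon^5/m^{4.5})$ (the precision bound of Theorem~\ref{t: dual_SDP_convergence} specialized to $r_p r_d = \Ocal(1)$) and invoking Theorem~\ref{thm:lp} with $r_p r_d = \Ocal(1)$ yields $\widetilde{\Ocal}(m\sqrt{n}\,(1/\varepsilon)^2)$ queries and gates with success probability at least $2/3$, which is exactly the claimed complexity.

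It then remains to translate the output — an $\varepsilon$-optimal solution to the dual LP — into an $\varepsilon$-approximate solution of \eqref{e:zsg_minimax}. I would unwind the reduction: the mirror descent iterate $\tilde{y}^\star\in\Delta^m$ together with the primal iterate $\tilde{x}^\star\in\Delta^n$ read off from the averaged iterates satisfies, by weak LP duality and the $\varepsilon$-optimality guarantee of Theorem~\ref{thm:lp}, the bound $\max_{y\in\Delta^m} y^\top A \tilde{x}^\star - \min_{x\in\Delta^n} (\tilde{y}^\star)^\top A x \le \varepsilon$; von Neumann's minimax theorem ensures this gap is nonnegative, so this is a genuine two-sided certificate, i.e. an $\varepsilon$-additive-error (approximate Nash equilibrium) solution to \eqref{e:zsg_minimax}.

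The main obstacle I anticipate is bookkeeping rather than analysis: tracking the normalization constants so that the LP containing the extra bounded scalar variable $\lambda$ fits the simplex/trace-normalized form assumed by Theorems~\ref{thm:lp} and \ref{t: dual_SDP_convergence}, and confirming that the $\varepsilon$-precise \emph{dual} guarantee transfers to a two-sided bound on the value of the game. Both reduce to constant-factor rescalings and an application of strong duality, so no new ideas beyond the excerpt are needed.
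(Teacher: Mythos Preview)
Your proposal is correct and matches the paper's approach: the corollary is stated as an immediate consequence of Theorem~\ref{thm:lp} applied to the LP reformulation of \eqref{e:zsg_minimax}, using the observation (made in the text preceding the corollary) that $r_p, r_d \le 2$ so $r_p r_d = \Ocal(1)$. One minor overreach: the algorithm of Theorem~\ref{thm:lp} only outputs an $\varepsilon$-precise \emph{dual} solution $y^\star$ (the mirror descent iterates live in $\Delta^m$, not $\Delta^n$), so there is no primal iterate $\tilde{x}^\star$ to ``read off from the averaged iterates'' as you suggest in your third paragraph; but this does not affect the complexity claim, and the dual solution together with strong duality already certifies the value of the game to additive error $\varepsilon$, which is all the corollary asserts.
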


\end{document}